\documentclass [a4paper, 11pt]{article}

\usepackage[margin=2.0cm]{geometry}
\usepackage{setspace}
\setstretch{1.3}
\usepackage{amsmath}
\usepackage{amssymb}
\usepackage{amsthm}
\usepackage{dsfont}
\usepackage{gensymb}
\usepackage{bbm}
\usepackage{color}
\usepackage{multirow}
\usepackage[dvipsnames]{xcolor}
\usepackage{graphicx}
\usepackage{caption}
\usepackage{subcaption}
\usepackage{mathrsfs}
\usepackage{bbm}
\usepackage{xfrac}
\usepackage{pgfplots}
\usepackage{geometry}
\usetikzlibrary{intersections}
\usetikzlibrary{patterns}
\usetikzlibrary{arrows}
\usepackage{empheq}
\usepackage[skins,theorems]{tcolorbox}
\tcbset{highlight math style={enhanced,
  colframe=red,colback=white,arc=0pt,boxrule=1pt}}
  
\usepgfplotslibrary{fillbetween}
\usetikzlibrary{shapes,snakes,positioning}
\usetikzlibrary{automata,fit}
\usepackage{todonotes}
\usepackage{newtxmath}
\usepackage[T1]{fontenc}
\usepackage[bookmarksnumbered=true]{hyperref} 
\hypersetup{
     colorlinks = true,
     linkcolor = red,
     anchorcolor = blue,
     citecolor = blue,
     filecolor = blue,
     urlcolor = blue
     }
\usepackage{natbib}

\newtheorem{definition}{Definition}
\newtheorem{proposition} {Proposition}
\newtheorem{lemma} {Lemma}
\newtheorem{theorem} {Theorem}

\newtheorem{corollary}  {Corollary}

\title{Efficiency with(out) intermediation in\\ repeated bilateral trade}
\author{Rohit Lamba\footnote{Pennsylvania State University, rlamba@psu.edu. This paper is based on Chapter 1 of my PhD thesis at Princeton University, and a previous version was circulated under the title "Repeated Bargaining: A Mechanism Design Approach". I'm deeply indebted to my committee: Stephen Morris, Dilip Abreu, Marco Battaglini and Sylvain Chassang for their advice, comments and encouragement. I'm particularly thankful to Stephen for his guidance. Thanks also to Nemanja Antic, Anmol Bhandari, Ben Brooks, Bob Evans, Jacob Goldin, Ilia Krasikov, Jay Lu, David Martimort, Roger Myerson, Charles Roddie and Juuso Toikka for their comments. I also benefited from comments from seminar participants at New York University, Yale, Washington University St. Louis, Penn State University, University of Chicago, Indian School of Business, Indian Statistical Institute Delhi, Paris School of Economics, HEC Paris, H\"{u}mboldt University and University of Cambridge. Part of the work was done when I was a postdoctoral fellow at the Cambridge-INET institute at University Cambridge, and I'm grateful to colleagues for the wonderful research environment there. } \\ \scriptsize{First Version: November 2013}}
\date{February 2022}

\begin{document}

\maketitle

\begin{abstract}
This paper analyzes repeated version of the bilateral trade model where the independent payoff relevant private information of the buyer and the seller is correlated across time. Using this setup it makes the following five contributions. First, it derives necessary and sufficient conditions on the primitives of the model as to when efficiency can be attained under ex post budget balance and participation constraints. Second, in doing so, it introduces an intermediate notion of budget balance called interim budget balance that allows for the extension of liquidity but with participation constraints for the issuing authority interpreted here as an intermediary. Third, it pins down the class of all possible mechanisms that can implement the efficient allocation with and without an intermediary. Fourth, it provides a foundation for the role of an intermediary in a dynamic mechanism design model under informational constraints.  And, fifth, it argues for a careful interpretation of the "folk proposition" that less information is better for efficiency in dynamic mechanisms under ex post budget balance and observability of transfers.

\end{abstract}

\section{Introduction}

When can we achieve efficient trade under private information, voluntary participation and cap on external subsidies? In a seminal article \citet{myerson1983efficient} showed that under plausible assumptions on preferences and distribution of information, the answer to aforementioned question is \emph{never}.\footnote{One of the motivations of this idea was to generate a contradiction to Coase's Theorem. \citet{coase1960cost} argued that if transactions costs are low enough and trade a possibility, bargaining will eventually lead to an efficient outcome independent of the initial distribution of property rights.} This paper reprises the interaction of the same three institutional constraints (incentive compatibility, individual rationality and budget balance) elegantly modeled in \citet{myerson1983efficient}, in a dynamic context. It asks and precisely answers the question: when can efficiency be sustained in repeated bilateral trade?

Why study dynamic trade? Arguably, many economic transactions modeled in mechanism design are inherently dynamic, where information revealed today can be used to set contracts tomorrow. Public goods such as food subsidies are provided repeatedly. In a fast changing technological landscape, spectrum auctions and buybacks are taking place repeatedly. Taxation is often dynamic and tagged with age, social security being a case in point. Wage contracts and bonuses depend on performance parameters evaluated over time. Online selling can now rely on a huge treasure trove of past buying data. This paper contributes towards a theory of such dynamic institutions in what is now a recently burgeoned literature on dynamic mechanism design.

I study repeated version of the standard bilateral trade problem with two sided asymmetric information. The buyer has a hidden valuation for a non-durable good and the seller has a hidden cost of producing it, referred to as their types. These evolve over time according to independent Markov processes. Trade can happen with some probability and transfers are made to the agents in each period. Trade is voluntary in the sense that at the start of each period both agents expect a non-negative utility from continuing in the relationship. Finally, budget must be balanced.


At a technical level, an "open question" in the literature has been the identification of the set of parameters for which efficiency can be sustained with voluntary participation and ex post budget balance.\footnote{Here efficiency refers to the trading rule that whenever the valuation of the buyer is greater than the cost of production for the seller, trade should happen. Ex post budget balance simply refers to a rule that transfers to both agents add up to zero.} Intuitively speaking, for highly impatient agents, we must get inefficiency due to the aforementioned impossibility result in the static environment. For very high levels of "persistence" in types we must also get inefficiency because at constant types the optimal contract involves repetition of the static optimum. The first contribution of this paper is to completely characterize the set of parameters for which efficiency can be sustained in the repeated version of the bilateral trade problem.

While incentive compatibility and individual rationality have natural analogues in the dynamic setup, budget balance can be defined in a variety of ways, depending on the economic application at hand. Ex post budget balance may seem extreme in the presence of credit markets. An intermediary can offer to broker trade between the agents. Its job is to provide subsidy in bad times, and extract surplus in good times, making a profit in the bargain. Motivated from static mechanism design, a natural rule so far employed in the literature that involves a third party is ex ante budget balance. Each period both agents make payments to the intermediary or get paid by it. Ex ante budget balance demands that the initial expected net present value of these transfers is non-negative.

Ex ante budget balance demands a very strong form of commitment from the intermediary. The evolution of valuations (of the buyer) and production costs (of the seller) can lead the contract to a region where the intermediary expects to loose money by continuing to broker trade between the agents. In such a scenario shouldn't it be allowed to walk away or file for bankruptcy?

The second contribution of the paper is to introduce a new concept of budget balance which I term \emph{interim budget balance}. It allows for the role of an intermediary protected by participation constraints. After any history, the expected net present value of current and future cash flows from the buyer and the seller must be non-negative. Technically speaking, it imposes "ex ante budget balance" at every node of the contract tree. It is weaker than ex post budget balance, but stronger than ex ante budget balance. It can be regarded as a self-enforcing constraint that tempers the demands of commitment from the credit issuing authority.

I show that implementation under interim and ex post budget balance are \emph{equivalent} in the following sense. Fix any incentive compatible and individually rational trading rule. If it can be implemented under interim budget balance, then there exists a set of transfers that ensure that it is also implementable under ex post budget balance (converse of this is trivially true). The result is surprising because interim budget balance takes expectations over the entire future sequence of transfers, whereas ex post budget balance is bereft of dynamics---it is a static, distribution free concept.

In order to establish the equivalence result, I state and prove a \emph{history dependent version of the payoff equivalence result}. Consider two incentive compatible mechanism that implement the same allocation. The expected utility of the agents in both mechanisms must differ from each other through linearly additive history dependent numbers. Starting from an implementable interim budget balanced mechanism, payoff equivalence helps construct an implementable ex post budget balanced mechanism.

The third contribution is to provide an exhaustive set of mechanisms that can implement the efficient allocation under interim (and ex post) budget balance. Once the fact of efficiency has been established, there are multiple ways of distributing the efficient surplus across the buyer, seller and intermediary. The payoff equivalence result allows us to precisely capture all such bargaining protocols through a vector of history dependent weights.

I identity a particularly simple mechanism that implements efficient allocation under interim budget balance. Every period the buyer and the seller pay a Markov fee to the intermediary to avail of its services. The fees is Markov in the sense that it only depends on the agents' types from the last period. After accepting the fees, the intermediary runs the standard Vickery-Clarkes-Groves (VCG) mechanism. It is simple to understand since it employs a mixture of a fixed fee and VCG mechanism. It relatively detail free in that it does not depend on entire history of the contract, it has a one-period memory. I also show that with constant support of valuations and production costs the mechanism is also stationary.

While the equivalence result between interim and ex post budget balance is theoretically appealing, it can also lead the reader towards an interpretation that the intermediary is irrelevant. What, if anything, could explain the desirability of intermediation in dynamic trade? In the fourth contribution of the paper, I provide one such foundation---informational constraints.

What if the agents can access signals about the other agent's current type? What if information leakage or transmission is a concern for the design of the underlying institution? To accommodate these concerns the stronger notion of ex post incentive compatibility is invoked. I first show that for interim budget balance when the efficient allocation can be implemented under standard (Bayesian) incentive compatibility, it can also be implemented under (within period) ex post incentive compatibility. However, there is no such hope under ex post budget balance. Hence, with information sharing or inability to deter information snooping amongst the agents, intermediation can strictly improve upon welfare.

It is also plausible that information about the other's past types can be hidden from the agents, and only partial signals be made available to them, say in the form of prices. In such a situation too the equivalence between interim and ex post budget balance breaks down. I show that under observability of transfers (which is a reasonable assumption in many practical scenarios) this muted flow of information makes ex post budget balance highly restrictive. On the other hand, interim budget balance allows implementation for a larger class of parameters.\footnote{There could of course be other reasons for the desirability of intermediation such as credit constraints or lack of commitment. In this paper I focus only on informational constraints, parsing our their role in making intermediation attractive. Exploring other channels are useful avenues of future research.}

As a fifth and final contribution, I make an observation that the folk theorem--\emph{less information is better in dynamic mechanisms}--does not hold under ex post budget balance and measurability with respect to transfers.\footnote{The folk theorem that less information is better is typically attributed to \citet{myerson1986multi}.} If agents observe their transfers every period, then letting them see each other's reports to the mechanism designer at the end of each period does better in terms of efficiency than hiding this information.\\

{\bf Related Literature.} The bilateral trading problem I study has a rich tradition in static mechanism design---\citet{cs1983bargaining} and \citet{myerson1983efficient} being two of the early papers. \citet{cs1983bargaining} looked at the second-best, in the process defining the solving for the double auction game. \citet{myerson1983efficient} impose all the institutional details on the model and then establish the impossibility of achieving the first-best or efficient allocation. On the other hand, \citet{mm1994bayesian}, \citet{williams1999eff}, and \citet{krishna2000eff} fix an efficient, incentive compatible and individually rational mechanism and show that it can never satisfy budget balance, thereby again proving the same impossibility result using a different technique. I mainly exploit a dynamic generalization of this latter approach.

This paper belongs to the rapidly developing literature on dynamic mechanism design.\footnote{Some of the early papers include \citet{courty2000sequential}, \citet{battaglini2005long}, and \citet{esHo2007optimal}. See \citet{vohra2012dynamic}, \citet{pavan2016dynamic}, and \citet{bergemann2018dynamic} for excellent surveys.} Therein the repeated bilateral trade problem provides an able tool to think formally about dynamic incentive and feasibility constraints as approximations for real institutions. A big picture question in the literature is: What are the apt generalizations of the possibility or impossibility of efficiency results from static mechanism design?

The most well known static mechanisms that implement the efficient allocation are the Vickery-Clarkes-Groves (VCG) and d'Aspremont and G\'{e}rard-Varet (AGV) mechanisms. \citet{bergemann2010dynamic} and \citet{athey2013efficient} develop their respective dynamic generalizations. \citet{bergemann2010dynamic} directly construct a dynamic version of the VCG mechanism that satisfies an efficient exit condition. They also provide an intuitive dynamic auction implementation. Their mechanism, however, does not satisfy budget balance. \citet{athey2013efficient} construct the appropriate generalization of the AGV mechanism to the dynamic environment. It satisfies ex post budget balance. However, they do not allow for individual rationality in every period, thereby demanding a very strong form of commitment from the agents. When they do allow the agents to walk away, a folk theorem is established for the underlying stochastic game.

This paper is the most closely related to \citet{athey2007efficiency} and \citet{skrzypacz2015mechanisms}. \citet{athey2007efficiency} study the repeated bilateral trading problem under iid types, ex ante and ex post budget  balance, and ex post incentive compatibility. They use a bounded budget account to show approximate efficiency under ex post budget balance. \citet{skrzypacz2015mechanisms} analyze the same problem with persistent types and multidimensional initial information. They establish a necessary and sufficient condition for efficiency under ex ante budget balance, thereby requiring high levels of commitment from a third party willing to subsidize trade.\footnote{Using the balancing trick of \citet{athey2013efficient}, this condition also guarantees implementation under ex post budget balance, but then like Athey and Segal, individual rationality has to be relaxed beyond the first period.}

In contemporaneous work, \citet{yoon2015bb} establishes a limit possibility result under ex ante budget balance. He shows that if the agents' types follow an irreducible Markov process, there always exits a high enough discount factor that achieves efficiency.

I depart from previous work in two important ways. First, I demand efficiency under interim and ex post budget balance, while maintaining individual rationality for both agents in each period. This creates a repeated Myerson and Satterthwaite like situation where the three constraints interact every period. Therefore, dynamics is the primary driver of possibility results. Second, I do not just look for limit results. I characterize the entire set of parameters for which efficiency can be sustained.

I also establish a dynamic generalization of the payoff equivalence result. Previous work has stated and proven ex ante versions of the payoff equivalence result for dynamic models, see for example \citet*{pavan2014dynamic}.\footnote{The payoff equivalence result is proven by what has come to be known as the dynamic envelope formula. See also \citet{battaglini2015optimal} and \citet{szentes2015dynamic}.} These results establish that any two incentive compatible mechanisms implementing the same allocation differ in the ex ante net present value of transfers to an agent through a constant. But what can we say about the equivalence of payoffs at other points of history? I prove that the payoffs for any two mechanisms implementing the same allocation are connected by history dependent constants in each period of the dynamic mechanism. This result is used for proving the equivalence of implementation under interim and ex post budget balance, and to characterize the class of all incentive compatible and individually rational mechanisms that implement the efficient allocation.

None of the aforementioned papers provide a intuitively plausible implementation rule of the efficient allocation under budget balance. And, neither do they capture the magnitude of transfers as a function of the parameters. This paper provides a very simple mechanism of implementing the efficient mechanism with the help of an intermediary. Moreover, using the two types model, it also provides a sense of the magnitude of transfers made each period.

Finally, there is a rich literature in financial economics on the desirability of intermediation. \citet{allen2001finint}, and \citet{gorton2002handbook} are some references. To the best of my knowledge, there has been no theoretical attempt at understanding how intermediation can actually increase welfare while ensuring participatory rents for the intermediary through a long-term contract. In an admittedly stylized setup, this paper shows how the presence of an intermediary can increase the efficiency of trade under informational constraints.

\section{Model}

Two agents, each with private information, agree to be in a dynamic bilateral trading relationship for a non-durable good. The buyer (B) has a hidden valuation for the good and the seller (S) is endowed with a technology to produce the good each period at a hidden cost. I assume that the buyer's valuation and the seller's cost are random variables denoted by $v$ and $c$.\footnote{These shall interchangeably be referred to as their types.} They are distributed according to priors $F$ and $G$ on $\mathcal{V} = \left\{v_{1},...., v_{N}\right\}$ and $\mathcal{C}=\left\{c_{1},...., c_{M}\right\}$, and evolve according to independent Markov processes $F(.\vert.): \mathcal{V}\times\mathcal{V}\rightarrow [0,1]$ and $G(.\vert.):\mathcal{C}\times\mathcal{C}\rightarrow [0,1]$, respectively. Both $F(.\vert .)$ and $G(.\vert .)$ satisfy first order stochastic dominance.\footnote{Mathematically speaking $F(v'\vert v)$ is weakly decreasing in $v$ and $G(c'\vert c)$ is weakly increasing in $c$.}
The densities have full support and are denoted by $f$, $g$, $f(.\vert.)$ and $g(.\vert.)$, respectively. Further, $\mathcal{V}\cap \mathcal{C} = \emptyset$, $v_{i+1}>v_{i}$, $c_{j+1}>c_{j}$, and denote $\Delta v_{i+1} = v_{i+1} - v_{i}$, $\Delta c_{j+1} = c_{j+1} - c_{j}$. For an intuitive exposition, I will often write $\underline{v} = v_{1}$ and $\overline{c} = c_{M}$.\footnote{I choose the discrete type model for three reasons. First, it elucidates the key economic forces without having to keep track of measure theoretic details. Second, it allows us to do comparative statics for simple examples, specifically when both agents have two possible types. And, finally many applications of dynamic mechanism design use numerical methods which require a discrete state space. All results naturally converge to their appropriate continuous types counterparts, see \citet{lamba2014phd}.}

Each period $p_{t}$ determines the probability of trade, that is, the production and allocation of the good from the seller to the buyer. $x_{B,t}$ denotes the transfer from the buyer to the mechanism designer, and $x_{S,t}$ the transfer to the seller from the mechanism designer. The per period payoffs are given by $v_{t}p_{t} - x_{B,t}$ and $x_{S,t} - c_{t}p_{t}$ for the buyer and seller respectively.\footnote{The $t$ subscript will not be used when the set of histories make the time dimension obvious.} The (contractual) relationship lasts for $T$ discrete periods, where $T\leq \infty$. Both the agents discount future payoffs with a common discount factor $\delta$. \citet{myerson1983efficient}, and \citet{cs1983bargaining} studied the static version of this model, $\delta = 0$, with continuous type spaces.

In addition to being a mediator who receives reports and recommends actions in the classical Myersonian sense, the mechanism designer here can be considered as a financial intermediary, an institution as part of a larger social contract facilitating trade, or a simple transfer scheme in case $x_{B}= x_{S}$. I will refer to this economic agent interchangeably as the mediator, mechanism designer or intermediary.\footnote{\citet{myerson1986multi} would refer to the mediator as a fictitious agent separate from the intermediary.}

Taking the institutional details as given, both the buyer and seller can commit to the mechanism. Participation constraints every period temper the role commitment will play in the model, as we elaborate below. Both agents know their first period valuation and cost respectively when the contract is signed, and these then stochastically evolve over time. All the parametric and institutional details enunciated thus far are common knowledge between the buyer and seller.

In the spirit of much of the literature, I look at dynamic (history-dependent) direct mechanisms. Every period the agents learn their own types, and then send a report to the mechanism designer, who in turn specifies the allocation (that is, the probability of trade)  and transfer rules. Unlike the static model, the amount of information revealed by the mechanism to the agents matters for future incentives (see \citet{myerson1986multi}). I will focus on the \emph{public mechanism} where one agent's report is observed by the other at the end of each period.\footnote{In Section \ref{sec_hi} I shall introduce and explore the the implications of what I call the \emph{intermediate mechanism}; intermediate because it reveals the hard information of allocations and individual transfers, but hides the soft information of reports and transfers of the other agent. Each agent only observes whether or not trade happens, and transfers are measurable with respect to this information structure. The extreme case would be the \emph{opaque mechanism} where the buyer gets zero information about the seller's reports, and vice-versa. This would of course entail the agents not being able to see allocations or transfers, making it a bit unrealistic for most practical applications.}

Formally, a direct (public) mechanism, say $m$, is a collection of history dependent probability and transfer vectors: $m = \left\langle {\bf p,x}\right\rangle =\left( p\left( v _{t},c_{t}\left\vert h^{t-1}\right.\right),x_{B}\left( v _{t}, c_{t}\left\vert h^{t-1}\right. \right), x_{S}\left( v _{t}, c_{t}\left\vert h^{t-1}\right. \right) \right)_{t=1}^{T}$, where $h^{t-1}$ and $\left(v _{t},c_{t}\right)$ are, respectively, the history of reports up to $t-1$ and current report at time $t$. These can also be succinctly written as $p(h^{t})$, etc. In general, $h^{t}$ is defined recursively as $h^{t}=\left\{ h^{t-1},\left(v _{t},c_{t}\right)\right\} $, with $h^{0}=\emptyset $. I will often write $h^{t} = (v^{t},c^{t})$, where $v^{t}$ and $c^{t}$ are the history of reports by the buyer and seller respectively. The set of possible histories at time $t$ is denoted by $H^{t}$ (for simplicity $H=H^{T} $).

The strategies of the buyer and the seller depend on their private histories. For the buyer, the information available before his period $t$ report is given by $h_{B}^{t} = \left\{h^{t-1}, \hat{v}^{t}\right\}$, starting with $h^{1}_{B} = \left\{\hat{v}_{1}\right\}$, where $\hat{v}^{t} = \left(\hat{v}_{1},...,\hat{v}_{t}\right)$ is the vector of true (actual) types of the buyer. The seller's information is defined analogously. Let the set of these private histories at time $t$ be denoted by $H_{B}^{t}$ and $H_{S}^{t}$, respectively. Thus, fixing the set of parameters: $\Gamma =\left\langle \mathcal{V}\times\mathcal{C}, F,F(.\vert .),G,G(.\vert .),\delta \right\rangle$, for any given mechanism, we have a dynamic game described by $\left\langle m\right\rangle_{\Gamma}$ in which the strategy for the buyer, $\left(\sigma_{B,t}\right)_{t=1}^{T}$, is then simply a function that maps private history into an announcement every period, $\sigma_{B,t}: H^{t}_{B} \mapsto \mathcal{V}$, and similarly for the seller, $\sigma_{S,t}: H^{t}_{S} \mapsto \mathcal{C}$.

\section{The institutional framework}

The edifice of the institutional machinery has three key foundations: private information, voluntary participation and limits on external subisdy. In the mechanism design lexicon, these would respectively be associated with incentive compatibility, individual rationality and budget balance constraints.

For a fixed mechanism $m$, and strategies $\sigma = \left(\sigma_{B}, \sigma_{S}\right)$, the expected utilities on the induced allocation and transfers, after each possible history are defined as follows.
\begin{equation}
\label{eqUb}
U_{B}^{m,\sigma}(h^{t}_{B}) = \mathbb{E}^{m,\sigma}\left[\sum\limits_{\tau=t}^{T}\delta^{\tau-1}\left(v_{\tau}p_{\tau} - x_{B,\tau}\right)\vert h^{t}_{B}\right]
\end{equation}
and,
\begin{equation}
\label{eqUs}
U_{S}^{m,\sigma}(h^{t}_{S}) = \mathbb{E}^{m,\sigma}\left[\sum\limits_{\tau=t}^{T}\delta^{\tau-1}\left(x_{S,\tau}- c_{\tau}p_{\tau}\right)\vert h^{t}_{S}\right]
\end{equation}
Let $U_{i}^{m} = U_{i}^{m\sigma^{*}}$ be the expected utility vector under the truth-telling strategy $\sigma^{*}$. At time $t$, under truth-telling, I will write $U_{B}(v_{t},c_{t}\vert h^{t-1})$ and $U_{S}(v_{t},c_{t}\vert h^{t-1})$ to denote within period ex post expected utility of the buyer and seller respectively.

\subsection{A change of variables}

Much in the spirit of static contract theory\footnote{See for example \cite{stole2001lectures}.}, I propose a change of variables in the structure of the mechanism from $\left\langle {\bf p,x}\right\rangle$ to $\left\langle {\bf p,U}\right\rangle$. As I will later show, this transformation will be central in my endeavor to establish a tight characterization of efficiency.

In order to keep notation simple the type/variable over which expectation is taken is removed. For example
\[p(v_{t}\vert h^{t-1}) = \sum\limits_{j=1}^{M} p\left(v _{t},c_{j,t}\left\vert h^{t-1}\right.\right) g(c_{j,t}\vert c_{t-1}),\]
where $c_{t-1}$ is the $t-1$ period announcement of the seller, known to the buyer.\footnote{$v_{i,t}$ and $c_{j,t}$ refer to type $v_{i}$ for the buyer and type $c_{j}$ for the seller in period $t$.}

Expected utility of the buyer can be recursively defined as
\begin{equation}
\label{eqxU}
U_{B}(v_{t},c_{t}\vert h^{t-1}) = v_{t}p(v_{t},c_{t}\vert h^{t-1})  - x_{B}(v_{t},c_{t}\vert h^{t-1})  + \delta \sum\limits_{i=1}^{N}U_{B}(v_{i,t+1}\vert h^{t-1}, v_{t},c_{t}) f(v_{i,t+1}\vert v_{t})
\end{equation}
Utility of the buyer of type $v_{t}$ from misreporting (once) to be type $v'_{t}$, for a fixed type $c_{t}$ of the seller, can be succinctly written as
\[
U_{B}(v'_{t}c_{t};v_{t}\vert h^{t-1}) =U_{B}(v'_{t},c_{t}\vert h^{t-1}) + (v_{t}-v'_{t})p(v'_{t},c_{t}\vert h^{t-1})  +\]
\[
\delta \sum\limits_{i=1}^{N} U_{B}(v_{t+1,i}\vert h^{t-1},v'_{t},c_{t})\cdot\left(f(v_{i,t+1}\vert v_{t}) - f(v_{i,t+1}\vert v'_{t})\right)
\]
$U_{B}(v_{t}\vert h^{t-1})$ and $U_{B}(v'_{t};v_{t}\vert h^{t-1})$ are defined by taking expectation over $c_{t}$, given $c_{t-1}$.\footnote{That is, $U_{B}(v_{t}\vert h^{t-1}) = v_{t}p(v_{t}\vert h^{t-1})  - x_{B}(v_{t}\vert h^{t-1})  + \delta \sum\limits_{i=1}^{N}U_{B}(v_{i,t+1}\vert h^{t-1}, v_{t}) f(v_{i,t+1}\vert v_{t})$, and $U_{B}(v'_{t};v_{t}\vert h^{t-1})  =U_{B}(v'_{t}\vert h^{t-1}) + (v_{t}-v'_{t})p(v'_{t}\vert h^{t-1})  +\delta \sum\limits_{i=1}^{N}  U_{B}(v_{i,t+1}\vert h^{t-1},v'_{t})\cdot\left(f(v_{i,t+1}\vert v_{t}) - f(v_{i,t+1}\vert v'_{t})\right)$. } The seller's utility, $U_{S}$, can be similarly defined.

It is straightforward to note that a mechanism $m =\left\langle {\bf p,x}\right\rangle$, which is a collection of history dependent allocation and transfer vectors, can be equivalently defined to be $m =\left\langle {\bf p,U}\right\rangle$, where (fixing the allocation) the duality between transfers and expected utility vectors is completely described by equation (\ref{eqxU}).

\subsection{Incentive compatibility}

Exploiting the one-shot deviation principle, incentive compatibility can be defined as follows.
\begin{definition}
A mechanism $m=\left\langle {\bf p,U}\right\rangle$ satisfies incentive compatibility if
\[U_{B}(v_{t}\vert h^{t-1}) \geq U_{B}(v'_{t}; v_{t}\vert h^{t-1})\quad \text{and} \quad U_{S}(c_{t}\vert h^{t-1}) \geq U_{S}(c'_{t}; c_{t}\vert h^{t-1})\]
$\forall v_{t}, v'_{t} \in \mathcal{V}$, $\forall c_{t}, c'_{t} \in \mathcal{C}$, $\forall h^{t-1} \in H^{t-1}$, $\forall t$.
\end{definition}

It states that along all truthful histories, buyer and the seller have no incentive to misreport their type. In order to define a perfect Bayesian equilibrium in the underlying dynamic game associated with this dynamic mechanism design problem, we assume that both agents believe that the other agent reports truthfully with probability one.\footnote{Note that the full support assumption of the Markov processes governing evolution of types for the buyer and seller plays a key role here. Suppose the seller has two possible types $\left\{c_{H},c_{L}\right\}$, and type $c_{L}$ is an absorbing state. Then, if a type $c_{H}$ misreports to be type $c_{L}$, the beliefs which put probability one on truthtelling are no longer consistent, and one-shot deviation property can no longer be applied. Following this report, it is confirmed that the seller would always be type $c_{L}$.} Since agents never observe the other agent's actual type, these beliefs under the above definition of incentive compatibility define a perfect Bayesian equilibrium in the dynamic game.\footnote{See Definition 2 in \citet*{pavan2014dynamic}.}

\subsection{Individual rationality}

Even though  commitment is assumed as part of our institutional architecture, the agents are allowed to walk away after learning their type after any history if their utility from continuing in the contract falls below their reservation thresholds, which are normalized to zero.
\begin{definition}
A mechanism $m=\left\langle {\bf p,U}\right\rangle$ satisfies individually rationality if
\[U_{B}(v_{t}\vert h^{t-1}) \geq 0 \quad \text{and} \quad U_{S}(c_{t}\vert h^{t-1}) \geq 0\]
$\forall v_{t}\in \mathcal{V}$, $\forall c_{t} \in \mathcal{C}$, $\forall h^{t-1} \in H^{t-1}$, $\forall t$.
\end{definition}
A mechanism is termed \emph{implementable} if it is incentive compatible and individually rational.

\subsection{Budget balance}

In mechanism design budget balance is seen as the limits on insurance or external subsidies available to the agents. In addition to the traditional notions of ex ante and ex post budget balance, we introduce an intermediate notion of interim budget balance.

A mechanism is \emph{interim budget balanced} if
\[\mathbb{E}\left[\sum\limits_{\tau = t}^{T}\delta^{\tau-t} \left(x_{B,\tau} - x_{S,\tau}\right) \mid h^{t-1}\right] \geq 0\]
$\forall$ $h^{t-1} \in H^{t-1}$, $\forall t$. The mechanism is \emph{ex ante budget balanced} if interim budget balance holds for the null history. Moreover, the mechanism is \emph{ex post budget balanced} if the entire vector of transfers are are point-wise equal at every history, $x_{B} = x_{S}$.

Using equations (\ref{eqUb}) and (\ref{eqUs}) we can write the expected budget surplus that a mechanism, $m$, generates after any history $h^{t-1}$ to be
\begin{equation}
\label{eqEBS}
\Pi(h^{t-1}) = \mathbb{E}^{m}\left[\sum\limits_{\tau=t}^{T}\delta^{\tau -t}\left(v_{\tau} - c_{\tau}\right)p_{\tau} - U_{B}(v_{t}\vert h^{t-1}) - U_{S}(c_{t}\vert h^{t-1}) \mid h^{t-1}\right]
\end{equation}
The ex ante budget surplus is denoted simply by $\Pi = \Pi(h^{0})$.
\begin{definition}
\label{dIBS}
A mechanism $\left\langle {\bf p,U}\right\rangle$ satisfies interim budget balance if
\[\Pi(h^{t-1}) \geq 0 \quad \forall h^{t-1}\in H^{t-1}, \text{ } \forall t\]
\end{definition}
This can be motivated in many ways. First, it can be viewed as a participation constraint for the mechanism designer---after any history, just like the two agents, the mechanism designer must have an incentive to continue in the relationship. Second, it is a bankruptcy constraint for the intermediary. If the contract reaches a stage the where the intermediary expects to loose money, it should be allowed to shut shop.
\begin{definition}
\label{dEBS}
A mechanism $\left\langle {\bf p,U}\right\rangle$ satisfies ex ante budget balance if
\[\Pi \geq 0\]
\end{definition}
This is the weakest possible notion of budget balance for this dynamic model. It means that the mechanism designer does not loose money in an expected ex ante sense. Importantly, it demands a very strong form of commitment from the intermediary.

Finally, the most standard (and strictest) definition of budget balance from the static literature that can be generalized to dynamic environments states the transfers should exactly equal across all histories for all time periods.
\begin{definition}
A mechanism $m=\left\langle {\bf p,x}\right\rangle$ satisfies ex post budget balance if
\[x_{B}(v_{t},c_{t}\vert h^{t-1}) - x_{S}(v_{t},c_{t}\vert h^{t-1}) = 0,\]
$\forall v_{t}\in \mathcal{V}$, $\forall c_{t} \in \mathcal{C}$, $\forall h^{t-1} \in H^{t-1}$, $\forall t$.
\end{definition}
A natural way to motivate this in the dynamic model is the absence of an outside insurance provider or financial intermediary. Both the agents insure each other against bad shocks, and premium paid can be recovered through continuation utility.

It is fairly easy to see a hierarchy amongst the three notions of budget balance:
\[\text{ex post  budget balance }  \Rightarrow \text{ interim budget balance} \Rightarrow \text{ ex ante budget balance}\]

\subsection{Objective}

One of the most widely accepted objectives of mechanism design is that of efficiency.\footnote{See \citet{holmsmyerson1983eff} for the various notions of efficiency.} We shall invoke the strongest possible version in its ex post form.

\begin{definition}
A mechanism $m=\left\langle {\bf p,U}\right\rangle$ satisfies efficiency if
\[p(v_{t},c_{t}\vert h^{t-1}) =   \left \{
\begin{array} {ccc}
 1 &  \text{ if  } v_{t}>c_{t}  \\
 0 & \text{ otherwise}
 \end{array} \right. \]
 $\forall v_{t}\in \mathcal{V}$, $\forall c_{t} \in \mathcal{C}$, $\forall h^{t-1} \in H^{t-1}$, $\forall t$.
\end{definition}
Trade happens with certainty under a positive instantaneous surplus, and does not happen otherwise.
It is a direct generalization of the notion typically used in static models.

\subsection{The interaction of constraints}

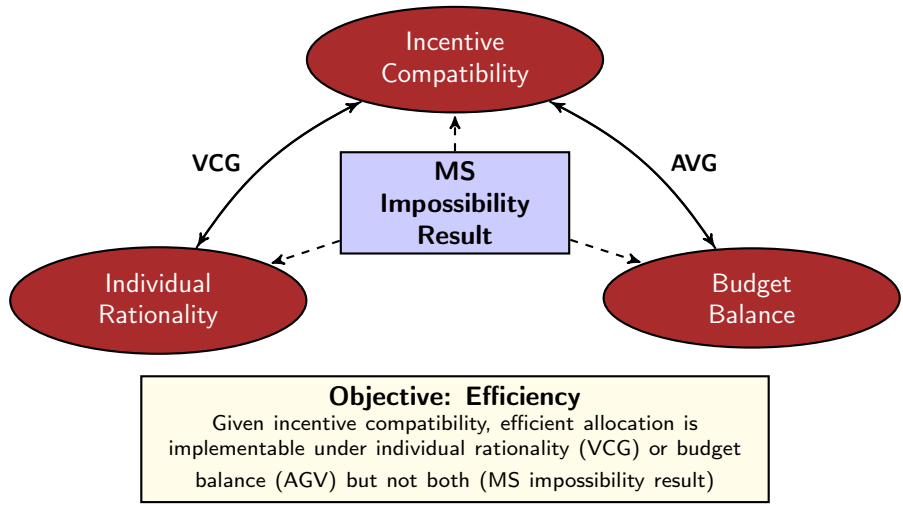
\begin{figure}
\begin{center}
\begin{tikzpicture}[->,>=stealth',shorten >=1pt,auto, align = center,
  thick,main node/.style={draw,font=\sffamily\small}]
\node[main node] (MS) [fill=blue!20, text width=2.75cm] {\textbf{MS \\ Impossibility\\ Result}};
\node[main node] (IC) [ellipse, above = 0.5cm of MS, fill={rgb:red,160;green,40;blue,40}, text width=2.5cm] {\textcolor{white}{Incentive \\ Compatibility}};
\node[main node] (IR) [ellipse, below left = 0.1cm and 1cm of MS,fill={rgb:red,160;green,40;blue,40}, text width=2.5cm] {\textcolor{white}{Individual \\ Rationality}};
\node[main node] (BB) [ellipse, below right = 0.1cm and 1cm of MS, fill={rgb:red,160;green,40;blue,40}, text width=2.5cm] {\textcolor{white}{Budget \\ Balance}};
\node[main node] (Text) [below = 1.6cm of MS, fill = yellow!10, text width=8cm] {\textbf{Objective: Efficiency}\\
\scriptsize{Given incentive compatibility, efficient allocation is \\ implementable under individual rationality (VCG) or budget \\ balance (AGV) but not both (MS impossibility result)}};
\path[every node/.style={font=\sffamily\footnotesize, inner sep=1pt}]
(IC) edge[bend right=15] node[left = 3mm] {\textbf{VCG}} (IR)
(IR) edge[bend left=15] (IC)
(IC) edge[bend left=15] node[right = 3mm] {\textbf{AVG}} (BB)
(BB) edge[bend right=15] (IC)
(MS) edge[dashed] (IC)
(MS) edge[dashed] (IR)
(MS) edge[dashed] (BB);
;
\end{tikzpicture}
\end{center}
\caption{Interaction of constraints in static mechanism design}
\label{fig_interactions}
\end{figure}

Before we jump into the results, it worthwhile to investigate the coexistence of the various forces laid out in this section. From static mechanism design we know that it is the simultaneous interaction of private information (and hence incentive compatibility) with participation (hence individual rationality) and budget balance that leads to the impossibility of efficiency result of \citet{myerson1983efficient}. If we wished to implement the efficient allocation under only individual rationality, the well known Vickery-Clarkes-Groves (VCG) mechanism does so. On the other hand, if we wished to implement the efficient allocation under only budget balance, the d'Aspremont and G\'{e}rard-Varet (AGV) mechanism does so. The three forces all together lead to a departure from efficiency. Figure \ref{fig_interactions} elucidates the three scenarios.

In order to understand how dynamics can overcome the impossibility results, we must model the simultaneous interactions of these forces every period. This cannot be captured under ex-ante budget balance, nor when individual rationality constraint is imposed only in the first period. In this paper I look at ex post and interim budget balance along with repeated individual rationality, which creates a Myerson and Satterthwaite like situation in every period. Herein, dynamics are the primary drivers of possibility results.

\section{Characterization of ex post budget balance}

A standard result in static mechanism design with quasi-linear preferences states that if there exists a mechanism that implements an allocation under ex ante budget balance, then there also exists a mechanism that implements the same allocation under ex post budget balance.\footnote{See \citet{bn2009budget}. Also, note that the converse of this result is obvious.} The intuition for this result is simple. Aggregate payment is equal to total surplus minus information rents. Ex ante budget balance ensures that the value of the underlying economic relationship is large enough to provide information rents to all types (of all agents). Given such a mechanism, we can redistribute the surplus in a way that payments add up to zero ex post. This is because there are enough resources to compensate for the difference between realized and expected payments by cross subsidizing across types.

Is there an equivalent result in mechanism design for dynamic models? The answer is yes, and I characterize this result for repeated bilateral trade. The key conceptual innovation to this end is the idea of interim budget balance.

\begin{proposition}
\label{p_ante_post}
There exists an implementable mechanism $\left\langle {\bf p,x}\right\rangle$ $\left(\text{or } \left\langle {\bf p,U}\right\rangle\right)$ that satisfies ex post budget balance if and only if there exists another implementable mechanism $\left\langle {\bf p,\hat{x}}\right\rangle$ $\left(\text{or }\left\langle {\bf p,\hat{U}}\right\rangle\right)$ that satisfies interim budget balance.
\end{proposition}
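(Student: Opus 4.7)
The only if direction is immediate: if $\hat{x}_B \equiv \hat{x}_S$ pointwise at every history, the expected discounted difference in transfers is identically zero, which is interim budget balance. So the content lies in the if direction. Given an implementable interim budget balanced mechanism $m = \langle \mathbf{p},\mathbf{x}\rangle$, I would construct an implementable ex post budget balanced mechanism $\hat{m} = \langle \mathbf{p}, \hat{\mathbf{x}}\rangle$ preserving the allocation $\mathbf{p}$.

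My plan leans on the history-dependent payoff equivalence result announced later in the paper. It tells us that any two IC mechanisms sharing the allocation $\mathbf{p}$ have continuation utilities related by $\hat{U}_B(v_t\vert h^{t-1}) = U_B(v_t\vert h^{t-1}) + \alpha_B(h^{t-1})$ and $\hat{U}_S(c_t\vert h^{t-1}) = U_S(c_t\vert h^{t-1}) + \alpha_S(h^{t-1})$ for some history-indexed additive terms. These terms are the only degrees of freedom I have in reshaping $m$ into $\hat{m}$.

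Two constraints pin down the admissible $(\alpha_B, \alpha_S)$. First, ex post budget balance forces $\hat{\Pi}(h^{t-1}) = 0$ at every history; since the allocation is unchanged, the identity $\hat{\Pi}(h^{t-1}) = \Pi(h^{t-1}) - \alpha_B(h^{t-1}) - \alpha_S(h^{t-1})$ dictates $\alpha_B(h^{t-1}) + \alpha_S(h^{t-1}) = \Pi(h^{t-1})$. Second, individual rationality requires $\hat{U}_B, \hat{U}_S \geq 0$, which holds whenever $\alpha_B, \alpha_S \geq 0$. The two requirements are simultaneously satisfiable precisely because interim budget balance of $m$ delivers $\Pi(h^{t-1}) \geq 0$ at every history, so a non-negative split — for instance $\alpha_B(h^{t-1}) = \alpha_S(h^{t-1}) = \Pi(h^{t-1})/2$ — always exists.

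What remains is to realize these targets through a single pointwise transfer $\hat{x}(v_t, c_t\vert h^{t-1})$ playing the role of both $\hat{x}_B$ and $\hat{x}_S$. Proceeding by backward recursion (and a fixed-point argument in the infinite horizon, exploiting the Markov structure), at each history the per-period transfer must match $N$ buyer-side marginals $\mathbb{E}_{c_t}[\hat{x}(v,c_t\vert h^{t-1})]$ and $M$ seller-side marginals $\mathbb{E}_{v_t}[\hat{x}(v_t,c\vert h^{t-1})]$ among $NM$ unknowns, with consistency guaranteed by $\alpha_B + \alpha_S = \Pi$; this transportation-type system always admits a solution. I expect the main obstacle to be verifying incentive compatibility of the assembled $\hat{m}$: within-period IC should follow because misreport utilities depend on the opposing agent's conditional expectation of $\hat{x}$, which is precisely what the marginal equations control, while across-period IC is encoded in the payoff equivalence machinery. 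Putting the pieces together delivers an implementable, ex post budget balanced $\hat{m}$ with allocation $\mathbf{p}$.
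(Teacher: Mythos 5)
Your proposal is correct and follows essentially the same route as the paper: use the history-dependent payoff-equivalence lemma to add $a_B(h^{t-1})+a_S(h^{t-1})=\Pi(h^{t-1})\ge 0$ and force $\tilde\Pi(h^{t-1})=0$ at every history, then apply the static AGV-style balancing (your ``transportation'' system, which the paper solves explicitly via $\hat{x}(v_t,c_t\vert h^{t-1})=\tilde{x}_S(c_t\vert h^{t-1})+\tilde{x}_B(v_t\vert h^{t-1})-\tilde{x}_B(.\vert h^{t-1})$) so that interim transfers, and hence incentives and participation, are unchanged. One small caveat: you invoke payoff equivalence as if \emph{any} two IC mechanisms with the same allocation differ by such constants, which is the converse of Lemma \ref{l_pe} and holds only in the continuous-type limit, but your argument never actually needs that direction.
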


The proof for the static result is constructive, and the proof for Proposition \ref{p_ante_post} is no different. However, the main challenge is in preserving incentive compatibility as we translate an interim budget balanced mechanism to satisfy ex post budget balance. I start with a mechanism that is incentive compatible and individually rational and calibrate it to to satisfy ex post budget balance while preserving the first two properties. \citet{athey2007effbt}, on the other hand, start with a mechanism that is incentive compatible and (ex post) budget balanced, and calibrate it to satisfy individual rationality; however, as described above, they run into the problem of preserving incentives while moving transfers and produce only a partial characterization. Commenting on this, they write:

\begin{quote}
\small{The only degree of freedom the transfers offer in transferring utility across players is a fixed constant K (if it varied with history, it would affect incentives). Thus, we look for a K that allows IR to be satisfied in the worst-case scenario for the buyer, and separately in the worst-case scenario for the seller.}
\end{quote}

The problem emanates from the interlinking of incentives. If we change transfers in period $t$ it affects incentives in period $t-k$ and $t+k$. I get around this problem by operating in the $\left\langle {\bf p,U}\right\rangle$ rather than $\left\langle {\bf p,x}\right\rangle$ space. The following lemma characterizes the class of incentive compatible mechanisms allowing us to sidestep the problem stated by \citet{athey2007effbt}.

\begin{lemma}
\label{l_pe}
If $\left\langle {\bf p,U}\right\rangle$ is an incentive compatible mechanism, and $\left\langle {\bf p,\tilde{U}}\right\rangle$ is another mechanism such that
\[\tilde{U}_{B}(v_{t}\vert h^{t-1}) = U_{B}(v_{t}\vert h^{t-1})  + a_{B}(h^{t-1}) \text{ }\forall v_{t},  \text{ and}\]
\[ \tilde{U}_{S}(c_{t}\vert h^{t-1})= U_{S}(c_{t}\vert h^{t-1})  + a_{S}(h^{t-1}) \text{ }\forall c_{t}\]
for a family of constants $\left(a_{B}(h^{t-1}), a_{S}(h^{t-1})\right)$ where $\vert a_{B}(h^{t-1})\vert, \vert a_{S}(h^{t-1})\vert \leq C$ for some positive constant $C$, then $\left\langle {\bf p,\tilde{U}}\right\rangle$ is also incentive compatible.
\end{lemma}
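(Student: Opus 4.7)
The plan is to verify incentive compatibility of $\langle \mathbf{p}, \tilde{\mathbf{U}} \rangle$ by the one-shot deviation principle at every history, showing that the reporting gain under the new mechanism coincides exactly with the reporting gain under the original mechanism. Since the definition of IC is already phrased in one-shot form, it suffices to check, for each history $h^{t-1}$, each pair $v_t,v'_t \in \mathcal{V}$ (and symmetrically for the seller), the inequality $\tilde{U}_B(v_t|h^{t-1}) \geq \tilde{U}_B(v'_t;v_t|h^{t-1})$.

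First, I would unwind the misreport payoff $\tilde{U}_B(v'_t;v_t|h^{t-1})$ using the expression in the footnote preceding the lemma, namely
\[
\tilde{U}_B(v'_t;v_t|h^{t-1}) = \tilde{U}_B(v'_t|h^{t-1}) + (v_t - v'_t)\, p(v'_t|h^{t-1}) + \delta \sum_{i=1}^{N} \tilde{U}_B\bigl(v_{i,t+1}\,\big|\,h^{t-1},v'_t\bigr)\bigl(f(v_{i,t+1}|v_t) - f(v_{i,t+1}|v'_t)\bigr).
\]
Substituting the hypothesis $\tilde{U}_B(\cdot|\cdot) = U_B(\cdot|\cdot) + a_B(\cdot)$ everywhere, two families of constants appear. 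The constant $a_B(h^{t-1})$ enters once through $\tilde{U}_B(v'_t|h^{t-1})$ on the right-hand side, and once through $\tilde{U}_B(v_t|h^{t-1})$ on the left-hand side, so it cancels when we form the IC inequality. The deeper constants, which arise from $\tilde{U}_B(v_{i,t+1}|h^{t-1},v'_t) = U_B(v_{i,t+1}|h^{t-1},v'_t) + \bar{a}_B(h^{t-1},v'_t)$ with $\bar{a}_B(h^{t-1},v'_t) := \sum_{c_t} a_B(h^{t-1},v'_t,c_t) g(c_t|c_{t-1})$, get pulled out of the inner sum because they do not depend on $v_{i,t+1}$.

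The key step is then the elementary cancellation
\[
\sum_{i=1}^{N}\bigl(f(v_{i,t+1}|v_t) - f(v_{i,t+1}|v'_t)\bigr) = 1 - 1 = 0,
\]
which annihilates the continuation constants $\bar{a}_B(h^{t-1},v'_t)$. What remains after all substitutions is exactly $\tilde{U}_B(v_t|h^{t-1}) - \tilde{U}_B(v'_t;v_t|h^{t-1}) = U_B(v_t|h^{t-1}) - U_B(v'_t;v_t|h^{t-1}) \geq 0$, where the last inequality is the assumed IC of $\langle \mathbf{p}, \mathbf{U}\rangle$. The symmetric calculation handles the seller's constraint using $\sum_{j}(g(c_{j,t+1}|c_t) - g(c_{j,t+1}|c'_t)) = 0$.

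The only real obstacle I anticipate is the bookkeeping of which constants survive versus which multiply a zero-sum expression; the pattern to keep in mind is that the only constants that act nontrivially are those attached to histories at which the truthful and misreporting paths diverge, and they always appear scaled by a difference of probability distributions that must sum to zero. The uniform bound $|a_B(h^{t-1})|,|a_S(h^{t-1})| \leq C$ plays no role in the algebraic cancellation but is needed to guarantee that the discounted sums defining $\tilde{U}_B, \tilde{U}_S$ remain finite when $T = \infty$, so that all recursive identities are legitimate.
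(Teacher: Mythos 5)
Your proof is correct and follows essentially the same route as the paper: the paper establishes the result by first proving the stronger within-period ex post IC version (its Lemma \ref{l_pe2}) and then taking expectations over $c_t$, but the substance is the identical two-part cancellation you identify --- the own-history constant $a_B(h^{t-1})$ appears on both sides of the period-$t$ constraint and drops out, while the continuation constants factor out of $\sum_i\bigl(f(v_{i,t+1}\vert v_t)-f(v_{i,t+1}\vert v'_t)\bigr)=0$. Your observation about the role of the uniform bound $C$ (finiteness of the discounted sums, hence validity of the one-shot-deviation formulation) is a point the paper leaves implicit.
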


Lemma \ref{l_pe} should be seen as a dynamic generalization of the standard payoff equivalence result in mechanism design. It states that starting with an incentive compatible contract, we can construct another incentive compatible contract that implements the same allocation by linearly translating the expected utility vectors after every history. It is important to note that $a_{B}$ is independent of $v_{t}$ and $a_{S}$ is independent of $c_{t}$.\footnote{For the continuous types model, we can include the converse of Lemma \ref{l_pe}. See \citet{lamba2014phd}.} It may not be obvious at the outset why adding history dependent constants in this fashion keeps the mechanism incentive compatible. In the appendix we show that the proof is fairly straightforward.

Now, given Lemma \ref{l_pe}, the proof of Proposition \ref{p_ante_post} is carried out in two steps. First, define the instantaneous expected budget surplus to be the following.
\[\Pi_{t}(h^{t-1}) = \Pi(h^{t-1}) - \delta \mathbb{E}\left[\Pi(h^{t})\vert h^{t-1}\right] \]
In terms of $\left\langle {\bf p,x}\right\rangle$ representation of the mechanism,
\[\Pi_{t}(h^{t-1}) = \mathbb{E}\left[x_{B}(v_{t},c_{t}\vert h^{t-1}) - x_{S}(v_{t},c_{t}\vert h^{t-1}) \vert h^{t-1}\right]\]

Starting with an implementable mechanism $\left\langle {\bf p,U}\right\rangle$ that is interim budget balanced, choose a new mechanism with constants in Lemma \ref{l_pe} to be $a_{B}(h^{t-1}) = \beta\Pi(h^{t-1})$ and $a_{S}(h^{t-1}) = (1-\beta)\Pi(h^{t-1})$ for any $\beta\in[0,1]$. It is easy to see that in this new mechanism, say $\left\langle {\bf p,\tilde{U}}\right\rangle$, we have $\tilde{\Pi}(h^{t-1}) = 0$ $\forall$ $h^{t-1} \in H^{t-1}$, $\forall$ $t$; ensuring that $\tilde{\Pi}_{t}(h^{t-1}) = 0$ $\forall$ $h^{t-1} \in H^{t-1}$, $\forall$ $t$.

As the second and final step, $\tilde{\Pi}_{t}(h^{t-1}) = 0$ gives us a mechanism that satisfies ``ex ante budget balance in the static sense" for every history. Thus using the standard static mechanism design tools, we can create a third mechanism $\left\langle {\bf p,\hat{U}}\right\rangle$ (or $\left\langle {\bf p,\hat{x}}\right\rangle$) that satisfies ex post budget balance.

\section{Efficiency: a full characterization}
\label{sec_charac}

Now, we ask the main theoretical question: when can efficiency be sustained under ex post and interim budget balance while satisfying incentive compatibility and individual rationality? To this end, I generalize the techniques from static mechanism design developed in \citet{mm1994bayesian}, \citet{williams1999eff}, and \citet{krishna2000eff} to the dynamic model. Start with a repetition of the static VCG mechanism, adjusted for the discreteness of the type space:

\[p^{vcg}(v_{t},c_{t}\vert h^{t-1})  =   \left \{
\begin{array} {ccc}
 1&  \text{ if } v_{t}>c_{t}  \\
 0 & \text{ otherwise}
 \end{array} \right.  \]
\[x^{vcg}_{B}(v_{t},c_{t}\vert h^{t-1}) =   \left \{
\begin{array} {ccc}
 \min\left\{v\vert v>c_{t}\right\} &  \text{ if } v_{t}>c_{t}  \\
 0 & \text{ otherwise}
 \end{array} \right. \]
 \[x^{vcg}_{S}(v_{t},c_{t}\vert h^{t-1}) =   \left \{
\begin{array} {ccc}
 \max\left\{c\vert v_{t}>c\right\} &  \text{ if } v_{t}>c_{t}  \\
 0 & \text{ otherwise}
 \end{array} \right. \]

A standard VCG mechanism would make the buyer pay his externality on the seller, that is the seller's cost, and the make the seller pay her externality on the buyer, that is the buyer's valuation: $x_{B}(v,c) = c \mathbbm{1}_{\left\{v>c\right\}}$ and $x_{S}(v,c) =  v\mathbbm{1}_{\left\{v>c\right\}}$. This mechanism is also incentive compatible. However, for discrete types, gaps between in the type space ensure that it is not the tightest possible mechanism. The mechanism we choose adjusts for gaps and local incentive constraints bind.\footnote{See \citet{maneakos2009discrete} for such a modified VCG mechanism in the static model with discrete types.} It converges to the original VCG mechanism as the model converges to continuous types. Call this mechanism $\left\langle {\bf p^{vcg},x^{vcg}}\right\rangle$ or $\left\langle {\bf p^{vcg},U^{vcg}}\right\rangle$, where given transfers, the expected utility vectors are defined uniquely by equation (\ref{eqxU}).

This mechanism satisfies incentive compatibility and individual rationality, but may violate all notions of budget balance. Trade happens when $v>c$, and payments flowing to the mechanism designer then may be negative. The challenge thus is to construct a new mechanism that satisfies budget balance whenever the primitives of the model allow us to do so without disturbing incentive compatibility and individual rationality.

I take a leaf from \citet{krishna2000eff}'s book, and construct the unique incentive compatible mechanism that binds the individual rationality constraint of the "lowest" type in each period. If the mechanism thus constructed produces an expected budget surplus after every history, we are done, and only if this mechanism produces an expected budget surplus are we done. I call it the \emph{min-max dynamic VCG mechanism}: starting from a VCG mechanism, while preserving incentive compatibility and individual rationality, it produces the minimal possible utility for the agents, and the maximal possible payments to the mechanism designer after every history. \\

\noindent\framebox[1.1\width]{Constructing the Min-Max Dynamic VGC mechanism}\\

\textbf{Step 1.} Start with the dynamic VCG mechanism, $\left\langle {\bf p^{vcg},U^{vcg}}\right\rangle$.  Note that the mechanism is incentive compatible and individually rational; and moreover it is \emph{tight}: local incentive compatibility constraints hold as equalities.\\ \vspace{1mm}

\textbf{Step 2.} Let ${\bf p^{*}}={\bf p^{vcg}}$. Extract all possible agent-surplus in each period, that is maintaining binding local incentive constraints, select ${\bf U^{*}}$  so that $\inf\limits_{v\in \mathcal{V}} U^{*}_{B}(v\vert h^{t-1}) =  0= \inf\limits_{c \in \mathcal{C}} U^{*}_{S}(c\vert h^{t-1})$ for all $v_{t}, c_{t}$ and $h^{t-1}$. According to Lemma \ref{l_pe}, the construction entails a choice of constants: $a_{B}(h^{t-1}) = -\inf\limits_{v\in \mathcal{V}} U^{vcg}_{B}(v\vert h^{t-1})$ and $a_{S}(h^{t-1}) = -\inf\limits_{c \in \mathcal{C}} U^{vcg}_{S}(c\vert h^{t-1})$. Corresponding ex post expected utility vectors:
\begin{equation}
\label{e_mech_B}
U^{*}_{B}(v_{t},c_{t}\vert h^{t-1})= U^{vcg}_{B}(v_{t},c_{t}\vert h^{t-1})-\inf\limits_{v\in \mathcal{V}} U^{vcg}_{B}(v,c_{t}\vert h^{t-1})
\end{equation}
\begin{equation}
\label{e_mech_S}
U^{*}_{S}(v_{t},c_{t}\vert h^{t-1})= U^{vcg}_{S}(v_{t},c_{t}\vert h^{t-1})-\inf\limits_{c\in \mathcal{C}} U^{vcg}_{S}(v_{t},c\vert h^{t-1})
\end{equation}
for all $v_{t}, c_{t}$ and $h^{t-1}$ complete the description of the mechanism.\\ \vspace{1mm}

\textbf{Step 3.} Show that an incentive compatible and individually rational mechanism guaranteeing efficient trade under interim budget balance can exist if and only if $\left\langle {\bf p^{*},U^{*}}\right\rangle$ runs an expected budget surplus, that is, $\Pi^{*}(h^{t-1}) \geq 0$ $\forall h^{t-1}$, $\forall t$. $\qquad \qquad \qquad \qquad \qquad \qquad \qquad \qquad \qquad \qquad\qquad \qquad\square$ \\

The following result characterizes efficiency in terms of the primitives of the model.

\begin{theorem}
\label{t_main}
There exists an incentive compatible and individually rational mechanism that implements the efficient allocation under interim budget balance (or ex post budget balance) if and only if $\Pi^{*}(h^{t-1})\geq  0$ $\forall h^{t-1}$, $\forall t$.
\end{theorem}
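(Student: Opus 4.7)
The plan is to argue both directions, relying on Proposition \ref{p_ante_post} to collapse interim and ex post budget balance, and on Lemma \ref{l_pe} together with the binding local downward incentive constraints of the dynamic VCG to pin down the shape of $\mathbf{U}^*$ relative to any competing mechanism with the same allocation.

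Sufficiency is a direct verification. Under the hypothesis $\Pi^*(h^{t-1}) \geq 0$ for every $h^{t-1}$ and $t$, I would show that $\langle \mathbf{p}^*, \mathbf{U}^*\rangle$ is itself the desired mechanism. Incentive compatibility is immediate from Lemma \ref{l_pe} applied to the dynamic VCG with the history-dependent constants $a_B(h^{t-1}) = -\inf_{v} U^{vcg}_B(v|h^{t-1})$ and $a_S(h^{t-1}) = -\inf_{c} U^{vcg}_S(c|h^{t-1})$. Individual rationality holds since $\inf_v U^*_B(v|h^{t-1}) = \inf_c U^*_S(c|h^{t-1}) = 0$ at every history, efficiency is built in through $\mathbf{p}^* = \mathbf{p}^{vcg}$, and interim budget balance is precisely the hypothesis. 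Proposition \ref{p_ante_post} then converts to an ex post budget balanced implementation.

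For the converse, let $\langle \mathbf{p}^{eff}, \mathbf{U}\rangle$ be any IC and IR mechanism implementing the efficient allocation with $\Pi(h^{t-1}) \geq 0$. From the definition (\ref{eqEBS}) it suffices to establish the pointwise bounds $U_B(v|h^{t-1}) \geq U^*_B(v|h^{t-1})$ and $U_S(c|h^{t-1}) \geq U^*_S(c|h^{t-1})$: integrating yields $\Pi(h^{t-1}) \leq \Pi^*(h^{t-1})$, which combined with interim budget balance forces $\Pi^*(h^{t-1}) \geq 0$. I would prove these bounds by backward induction on $t$. In the terminal period $T$, iterating local downward IC for the buyer gives $U_B(v_i|h^{T-1}) - U_B(v_1|h^{T-1}) \geq \sum_{k=2}^{i} \Delta v_k \, p(v_{k-1}|h^{T-1})$; the right-hand side equals $U^*_B(v_i|h^{T-1})$ because local downward IC binds in the dynamic VCG and $U^*_B(v_1|h^{T-1}) = 0$, and IR on $v_1$ closes the gap. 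The seller bound is symmetric, with $c_M$ playing the role of the lowest-utility type.

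The inductive step is the main obstacle. At $h^{t-1}$ the local downward IC brings in continuation utilities weighted by $f(v'|v_i) - f(v'|v_{i-1})$, which is not sign-definite, so the inductive lower bound $U_B(\cdot|h^t) \geq U^*_B(\cdot|h^t)$ cannot be substituted term by term. I would resolve this by strengthening the induction to carry monotonicity of $U_B(\cdot|h^t)$ in $v$---a property $U^*_B$ itself inherits from the FOSD assumption on $F(\cdot|\cdot)$---so that first-order stochastic dominance of $f(\cdot|v_i)$ over $f(\cdot|v_{i-1})$ controls the continuation difference in the correct direction. Equivalently, I would argue that without loss of generality $\langle \mathbf{p}^{eff}, \mathbf{U}\rangle$ may be taken to have binding local downward IC at every history, since any slack can be absorbed by lowering higher types' utilities, a move that weakly raises $\Pi$ and hence preserves the hypothesis. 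Once local downward IC binds, $U_B(v|h^{t-1})$ is determined by the allocation up to the constant $U_B(v_1|h^{t-1}) \geq 0$, reproducing the construction of $\mathbf{U}^*$ exactly. For $T=\infty$ I would pass to the limit of the finite-horizon bounds, using $\delta < 1$ and the boundedness of the type spaces to control tails.
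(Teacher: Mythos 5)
Your sufficiency direction is correct and matches the paper: under the hypothesis $\Pi^{*}(h^{t-1})\geq 0$ the min-max dynamic VCG mechanism itself is incentive compatible (Lemma \ref{l_pe}), individually rational, efficient and interim budget balanced, and Proposition \ref{p_ante_post} converts it to an ex post budget balanced implementation. The gap is in the necessity direction, and it sits exactly where you flag it. Your first proposed fix --- carrying monotonicity of $U_{B}(\cdot\vert h^{t})$ in $v$ through the backward induction --- does not close the argument. What must be shown is
\[
\sum_{l}\bigl(U_{B}(v_{l}\vert h^{t-1},v_{k-1})-U^{*}_{B}(v_{l}\vert h^{t-1},v_{k-1})\bigr)\bigl(f(v_{l}\vert v_{k})-f(v_{l}\vert v_{k-1})\bigr)\geq 0,
\]
and since the weights $f(v_{l}\vert v_{k})-f(v_{l}\vert v_{k-1})$ change sign, neither the pointwise inductive bound $U_{B}\geq U^{*}_{B}$ nor monotonicity of $U_{B}$ alone delivers this. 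The invariant you actually need is that the \emph{difference} $U_{B}(\cdot\vert h^{t})-U^{*}_{B}(\cdot\vert h^{t})$ is non-decreasing in the current type, i.e., that the increments $U_{B}(v_{l+1}\vert h^{t})-U_{B}(v_{l}\vert h^{t})$ dominate those of $U^{*}_{B}$. That invariant does propagate: Abel summation turns the continuation term into a sum of increments against the non-negative FOSD weights $\sum_{j>l}\bigl(f(v_{j}\vert v_{k})-f(v_{j}\vert v_{k-1})\bigr)$, local downward IC bounds each increment of $U_{B}$ below by the binding value defining $U^{*}_{B}$ plus the inductively dominated continuation increments, and the level bound follows by summing increments from $v_{1}$ and using $U_{B}(v_{1}\vert h^{t-1})\geq 0=U^{*}_{B}(v_{1}\vert h^{t-1})$. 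With that repair your route is a legitimate alternative to the paper's.

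Your second proposed fix --- assume without loss of generality that local downward IC binds because ``slack can be absorbed by lowering higher types' utilities, a move that weakly raises $\Pi$'' --- is essentially the paper's argument, but as stated it skips the delicate step: lowering $U_{B}(v_{k},c\vert h^{t-1})$ for all $k\geq i$ changes continuation values that enter the period-$(t-1)$ incentive constraints, so the perturbation is not free. The paper's Lemma \ref{lem_bind} handles this by propagating the perturbation backward through every earlier period and checking, via first-order stochastic dominance, that each induced change only relaxes the affected constraint; it then solves a relaxed maximization of $\Pi(h^{t-1})$ subject only to local IC and the lowest types' IR, shows all these constraints bind at the optimum, and observes that the resulting utility vectors are attained by the min-max dynamic VCG mechanism, which is fully implementable --- whence no implementable efficient mechanism can generate a surplus above $\Pi^{*}(h^{t-1})$ at any history. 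If you take this route, the backward propagation must be supplied explicitly.
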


Note that the condition constitutes a set of inequality constraints for the entire set of histories. Since this a necessary and sufficient condition, in principal, there is no hope for simplification. Though two features of the min-max dynamic VCG mechanism make the inequalities easy to digest. First, the within-period ex post expected utility vectors, $\left( U^{vcg}_{B}(v_{t},c_{t}\vert h^{t-1}), U^{vcg}_{S}(v_{t},c_{t}\vert h^{t-1})\right)$ are independent of history, $h^{t-1}$. Thus, the min-max dynamic VCG mechanism defined by equations (\ref{e_mech_B}) and (\ref{e_mech_S}) is also independent of history. Second, interim expected utility depends on history only through expectations. Starting from period 2, conditioning on the type of the other agent in the previous period is the history dependence required, which makes interim expected utility Markov. Moreover, given constant support, it is also stationary. Thus, by construction interim transfers are completely defined by the vector
\begin{equation}
\label{U_star}
\left(U^{*}_{B}(v), U^{*}_{S}(c), U^{*}_{B}(v\vert c), U^{*}_{S}(c\vert v)\right) \quad \forall \text{ } (v,c) \in \mathcal{V}\times \mathcal{C}
\end{equation}
Consequently, in order to check for the possibility of efficiency, we need to evaluate $N\times M +1$ constraints: $\Pi^{*}\geq 0$ and $\Pi^{*}(v,c) \geq 0$ for all $v\in\mathcal{V}$ and $c\in\mathcal{C}$.\footnote{See Section \ref{STP} in the appendix, and Figures \ref{fPi} and \ref{fPivector} in Section \ref{sec_cs} for a fully worked out example where both the buyer and trader have two possible types.}

To explore the formulation of the $\Pi^{*}$-vector in terms of the primitives of the environment, define $\Pi^{vcg}$ and $\Pi^{vcg}(v,c)$ respectively to be the ex ante budget surplus and expected budget surplus after history $(v,c)$ for the dynamic VCG mechanism:
\[\Pi^{vcg} = \mathbb{E}\left[\sum\limits_{t=1}^{\infty} \delta^{t-1}\left((v_{t} -c_{t})\mathbbm{1}_{\left\{v_{t}>c_{t}\right\}}\right) - U^{vcg}_{B}(v_{1},c_{1}) - U_{S}^{vcg}(v_{1},c_{1})\right]\]
\[\Pi^{vcg}(v,c) = \mathbb{E}\left[\sum\limits_{t=2}^{\infty} \delta^{t-2}\left((v_{t} -c_{t})\mathbbm{1}_{\left\{v_{t}>c_{t}\right\}}\right) - U^{vcg}_{B}(v_{2},c_{2}) - U_{S}^{vcg}(v_{2},c_{2}) \mid v_{1} = v, c_{1} = c\right]\]
Given first-order stochastic dominance, the infimum of expected utility vectors invoked in equations (\ref{e_mech_B}) and (\ref{e_mech_S}) is attained at the numerically lowest value for the buyer and the numerically highest for the seller. Thus, it is easy to see that
\begin{equation}
\label{e_pi}
\Pi^{*} = \Pi^{vcg} + U^{vcg}_{B}(\underline{v}) + U^{vcg}_{S}(\overline{c})
\end{equation}
and,
\begin{equation}
\label{e_pi2}
\Pi^{*}(v,c) = \Pi^{vcg}(v,c) + U^{vcg}_{B}(\underline{v}\vert c) + U^{vcg}_{S}(\overline{c}\vert v)
\end{equation}
for all $(v,c) \in \mathcal{V}\times\mathcal{C}$

Equations (\ref{e_pi}) and (\ref{e_pi2}) characterize the budget surplus generated in the minimal dynamic VCG mechanism after every history. In particular, equation (\ref{e_pi}) parallels a similar expression obtained in \citet{skrzypacz2015mechanisms} for the continuous type space model, and their result follows from Theorem \ref{t_main}.

\begin{corollary}
\label{cor_ST}
There exists an incentive compatible and individually rational mechanism that implements the efficient allocation under ex-ante budget balance if and only if $\Pi^{*}\geq 0$.\footnote{In the iid model, ex ante and interim budget balance put the exact same restrictions on the set of implementable allocations. \citet{athey2007efficiency} establish that in a continuous type space iid model with $\mathcal{V}=\mathcal{C}$, efficiency can always be sustained for all $\delta\geq\frac{1}{2}$. The same result holds in the continuous types limit of this model.}
\end{corollary}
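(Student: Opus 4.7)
The plan is to prove both directions by appealing to the min-max dynamic VCG construction laid out in Steps 1--2 preceding Theorem \ref{t_main}, applied at the null history $h^{0}$, since ex ante budget balance is precisely the constraint $\Pi\geq 0$ there. For the ``if'' direction, I would observe that when $\Pi^{*}\geq 0$ the min-max mechanism $\left\langle\mathbf{p^{*}},\mathbf{U^{*}}\right\rangle$ itself acts as a witness: it is incentive compatible by Lemma \ref{l_pe} (its construction is exactly a history-dependent additive translation of the dynamic VCG), it is individually rational because the Step 2 infima equal the chosen adjustments and therefore zero out the worst-type utility after every history, it implements the efficient allocation since $\mathbf{p^{*}}=\mathbf{p^{vcg}}$, and its ex ante budget surplus is $\Pi^{*}(h^{0})=\Pi^{*}\geq 0$, matching Definition \ref{dEBS}.

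For the ``only if'' direction, I would show that $\Pi^{*}$ is the maximal ex ante budget surplus among incentive compatible and individually rational implementations of the efficient allocation, so ex ante budget balance ($\Pi\geq 0$) of any such implementation forces $\Pi^{*}\geq 0$. Concretely, I would take any such mechanism $\left\langle\mathbf{p^{vcg}},\mathbf{U}\right\rangle$ and use the surplus identity (\ref{eqEBS}), together with the fact that both mechanisms share the same allocation, to reduce $\Pi\leq\Pi^{*}$ to the inequality $\mathbb{E}[U_{B}(v_{1},c_{1})]+\mathbb{E}[U_{S}(v_{1},c_{1})]\geq\mathbb{E}[U^{*}_{B}(v_{1},c_{1})]+\mathbb{E}[U^{*}_{S}(v_{1},c_{1})]$. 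Then I would establish the stronger pointwise bound $U_{B}(v\vert h^{t-1})\geq U^{*}_{B}(v\vert h^{t-1})$ after every history (and symmetrically for the seller) by backward induction on $t$: chaining upward local incentive constraints from $\underline{v}$ lower-bounds $U_{B}(v\vert h^{t-1})-U_{B}(\underline{v}\vert h^{t-1})$ by the same expression that the min-max pins down as an equality, and individual rationality $U_{B}(\underline{v}\vert h^{t-1})\geq 0$ closes the inequality.

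The main technical obstacle will be the inductive step, in which the period-$t$ local incentive constraint carries continuation-utility terms of the form $\delta\sum_{j}U_{B}(v_{j,t+1}\vert h^{t-1},v_{i})[f(v_{j,t+1}\vert v_{i+1})-f(v_{j,t+1}\vert v_{i})]$. The induction hypothesis will ensure these continuation utilities dominate their min-max counterparts, but the FOSD-shifted density difference is sign-indefinite, so the argument needs the monotonicity of $U_{B}(\cdot\vert h^{t-1},v_{i})$ in the continuation type (a consequence of incentive compatibility together with first-order stochastic dominance) to push the dynamic correction in the right direction. Once this is in place, the remainder is the null-history specialization of the argument behind Theorem \ref{t_main}, and the closed form (\ref{e_pi}) makes the condition $\Pi^{*}\geq 0$ directly checkable from the primitives, recovering \citet{skrzypacz2015mechanisms}'s characterization.
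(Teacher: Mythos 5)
Your overall architecture is the right one, and it matches what the paper intends: Corollary \ref{cor_ST} carries no standalone proof in the appendix precisely because it is the null-history specialization of Theorem \ref{t_main} --- the proof of that theorem establishes that the min-max dynamic VCG mechanism attains the \emph{maximal} expected budget surplus $\Pi^{*}(h^{t-1})$ after every history (hence $\Pi^{*}=\Pi^{*}(h^{0})$ is the maximal ex ante surplus) over all implementable efficient mechanisms, so ex ante budget balance is achievable iff $\Pi^{*}\geq 0$. Your ``if'' direction is exactly the paper's: the min-max mechanism itself is the witness. Where you diverge is the ``only if'' direction: the paper gets maximality of $\Pi^{*}$ from a perturbation argument on a relaxed program (Lemma \ref{lem_bind}: if a local IC or worst-type IR were slack at the optimum, a uniform downward shift of the utilities of all types above the slack constraint, propagated backward through earlier histories using FOSD, would raise the objective), whereas you propose a direct pointwise domination $U_{B}(v\vert h^{t-1})\geq U^{*}_{B}(v\vert h^{t-1})$ by chaining local incentive constraints. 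That route can be made to work and is arguably more transparent for this corollary, but as written it has a gap.

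The gap is in your proposed fix for the dynamic correction term. After subtracting the min-max mechanism's binding local IC from the arbitrary mechanism's (weak) local IC, the increment you must sign is
\[
\delta\sum_{j}\bigl[U_{B}-U^{*}_{B}\bigr](v_{j,t+1}\vert h^{t-1},v_{i-1},\cdot)\,\bigl(f(v_{j,t+1}\vert v_{i})-f(v_{j,t+1}\vert v_{i-1})\bigr),
\]
and since the bracketed density difference sums to zero and changes sign, neither the induction hypothesis $U_{B}\geq U^{*}_{B}$ nor monotonicity of $U_{B}(\cdot\vert h^{t-1},v_{i-1})$ in the continuation type --- the property you invoke --- controls its sign: the difference of two monotone functions need not be monotone, so Abel summation against the FOSD-ordered measures does not go through. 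What you actually need is monotonicity of the \emph{difference} $U_{B}-U^{*}_{B}$ in the continuation type, which requires its own backward induction (the displayed inequality itself shows the difference is non-decreasing at $h^{t-1}$ whenever it is non-decreasing at successor histories, with the base case immediate in a terminal period), plus a truncation or limiting argument when $T=\infty$. Alternatively, you can dispense with the chaining altogether and simply cite the relaxed-problem argument in the proof of Theorem \ref{t_main}, which already delivers $\Pi\leq\Pi^{*}(h^{0})=\Pi^{*}$ for every implementable efficient mechanism.
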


The mechanism employed by \citet{skrzypacz2015mechanisms} (and \citet{yoon2015bb}), which I call the \emph{bond mechanism}, extracts all possible surplus from the agents in the first period in the form of an upfront payment or bond, and thus ensures ex ante budget balance. However, my construction performs the surplus extraction in the spirit of \citet{krishna2000eff} after every history, and thus demands smaller payments every period. This can only be done in an incentive compatible manner with the aid of Lemma \ref{l_pe}, and be used as characterization of ex post budget balance through Proposition \ref{p_ante_post}, both of which are stated and established in this paper.

The min-max dynamic VCG mechanism delivers two things: first it precisely characterizes efficiency in this dynamic mechanism design problem, and second it provides a working method of implementing efficiency. I will expound more on the latter point in section \ref{sec_impl}.

\subsection{Comparative statics}
\label{sec_cs}

The necessary and sufficient condition in Theorem \ref{t_main} in terms of the $\Pi^{*}$-vector is a joint restriction on the overlap of support, discounting and stochastic process governing the types. Since \citet{myerson1983efficient} we know that in the continuous types model, impossibility of efficiency result holds given any overlap of support. Moreover, the distance from efficiency (measured by the extent of external subsidy required to restore full efficiency) is a decreasing function of the measure of types for which $v>c$. The same intuition carries over to the dynamic model as well. In this subsection we explore the interaction between persistence of types and level of discounting, and its impact on efficiency.

Note that $\Gamma = \left\{\mathcal{V}, \mathcal{C}, F,G,F(.\vert.),G(.\vert.), \delta\right\}$ defines the set of primitives of the model. Intuitively speaking, a high $\delta$ is better for efficiency, and highly persistent Markov processes of types is bad for efficiency. In a dynamic mechanism with commitment, the iid model has the least informational frictions, whereas the fully persistent (or constant types) case kills any advantages that dynamics may offer. In the next two corollaries we show that for the intermediate case, the possibility of efficiency relies of relative sizes of discounting and persistence.

\begin{corollary}
\label{c_limit_d}
For any $\Gamma\setminus\{\delta\}$, there exists a $\delta^{*}$ such that for all $\delta>\delta^{*}$, the  efficient allocation is implementable under interim (or ex post) budget balance.
\end{corollary}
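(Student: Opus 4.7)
By Theorem \ref{t_main} it suffices to find $\delta^*$ such that $\Pi^*(h^{t-1}) \ge 0$ at every history whenever $\delta > \delta^*$. The first move is the reduction already noted in the text around equation (\ref{U_star}): the one-period memory and stationarity of the min-max dynamic VCG collapse this infinite family of conditions to the finite list $\Pi^* \ge 0$ and $\Pi^*(v,c) \ge 0$ for $(v,c) \in \mathcal{V}\times\mathcal{C}$. So I only need to drive $NM+1$ numbers above zero by taking $\delta$ large.

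Next I unpack these via equations (\ref{e_pi}) and (\ref{e_pi2}). Let $s(v,c) := (v-c)\mathbbm{1}\{v>c\}$ and let $u_B(v,c)$, $u_S(v,c)$ denote the bounded per-period VCG utilities implicit in the definitions of $x_B^{vcg}$ and $x_S^{vcg}$. Each summand in
\[
\Pi^*(v,c) \;=\; \Pi^{vcg}(v,c) \;+\; U_B^{vcg}(\underline{v}\mid c) \;+\; U_S^{vcg}(\overline{c}\mid v)
\]
is a $\delta$-discounted expectation of one of $s$, $-u_B$, or $-u_S$ along the Markov chain $\{(v_t,c_t)\}$ started from a history-adjusted initial condition; the ex ante analogue using (\ref{e_pi}) has the same structure.

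The heart of the proof is an Abelian limit. Because $F(\cdot\mid\cdot)$ and $G(\cdot\mid\cdot)$ have full support and evolve independently, the joint chain is irreducible and aperiodic on the finite product state space $\mathcal{V}\times\mathcal{C}$, hence geometrically ergodic with a unique stationary distribution $\mu$. The standard Abelian theorem for finite ergodic chains then gives, for every bounded $\varphi$ and every initial state $(v,c)$,
\[
(1-\delta)\,\mathbb{E}\!\left[\sum_{t=1}^{\infty}\delta^{t-1}\varphi(v_t,c_t)\,\Big|\,v_1=v,\,c_1=c\right] \;\longrightarrow\; \mathbb{E}_\mu[\varphi] \quad\text{as }\delta \to 1,
\]
with the limit independent of the initial state (the spectral gap makes the initial-state error decay exponentially, hence negligible after multiplying by $1-\delta$). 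Applying this to $\varphi \in \{s,\,u_B,\,u_S\}$ in each of the three terms above, the rent contributions cancel and
\[
(1-\delta)\,\Pi^*(v,c) \;\longrightarrow\; \mathbb{E}_\mu[s] \;>\; 0,
\]
provided the supports overlap (otherwise efficient trade never occurs and the claim is vacuous). The identical limit holds for $(1-\delta)\Pi^*$. Since there are only $NM+1$ conditions, we can choose $\delta^*$ large enough that all are strictly positive for $\delta > \delta^*$.

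The step I expect to be most delicate is aligning the discounting conventions inside the cancellation: $\Pi^{vcg}(v,c)$ is written with the time index shifted to start at $t=2$, while $U_B^{vcg}(\underline{v}\mid c)$ and $U_S^{vcg}(\overline{c}\mid v)$ condition on last-period types and hence implicitly introduce the same one-period lag. A clean bookkeeping is to factor out $\delta$ uniformly, multiply through by $(1-\delta)$, and note that all three rescaled quantities are Abel averages of bounded functionals of the same ergodic chain, so they share the common stationary limit. Modulo this indexing, the argument is conceptually transparent: as $\delta \to 1$ initial conditions wash out, the min-max mechanism returns the stationary information rents to the agents, and what remains for the intermediary is exactly the full stationary efficient surplus.
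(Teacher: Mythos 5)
Your proposal is correct and follows essentially the same route as the paper: reduce to the finite list of $NM+1$ inequalities, write each $\Pi^{*}$-term as a discounted sum of bounded flow payoffs along the irreducible aperiodic chain on $\mathcal{V}\times\mathcal{C}$, and let $\delta\to 1$ so that initial conditions wash out, the rent terms cancel, and the stationary expected surplus $\mathbb{E}_{\mu}[s]>0$ remains. The paper carries out the Abelian limit by hand (splitting each series at a finite $\hat{t}$ and bounding head and tail with explicit constants) rather than citing an ergodic-chain Abelian theorem, but the substance is identical.
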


Corollary \ref{c_limit_d} generalizes the possibility of efficiency result under ex ante budget balance in (contemporaneous work in) \citet{yoon2015bb}. It is also the mechanism design counterpart to the decentralized game studied by \citet{athey2013efficient}. The key take away from this result is that with a compact support, even for arbitrarily high levels of persistence of types, we can always find a discount factor high enough that attains efficiency. Conversely, what about a fixed discount factor and arbitrarily high levels of persistence of types?

To answer this question we need to embed the notion of persistence in some parametrization. Consider the set $\Lambda$ of all Markov processes $F(.\vert.)$ and $G(.\vert.)$ satisfying first-order stochastic dominance, parameterized by $\alpha_{B},\alpha_{S}\in[0,1]$ such that $\lim_{\alpha_{B}\rightarrow 1}f(v\vert v) = 1$ $\forall v\in \mathcal{V}$, and $\lim_{\alpha_{S}\rightarrow 1}g(c\vert c) = 1$ $\forall c\in \mathcal{C}$. Note that $\alpha_{B}$ and $\alpha_{S}$ are just indexes and the set $\Lambda$ imposes fairly minimal restriction on the class of stochastic process. For example, different types could converge to one at differing rates.\footnote{Three simple examples of distributions in $\Lambda$ are as follows. $(i)$ [\emph{Renewal model}] $f(v \vert v) = \alpha_{B}$, $f(v'\vert v) = \frac{1-\alpha_{B}}{N-1}$ for $v\neq v'$, and $g(c\vert c) = \alpha_{S}$, $g(c'\vert c) = \frac{1-\alpha_{S}}{M-1}$ for $c\neq c'$. $(ii)$ [\emph{Mixing with identity}] Let $\tilde{f}$ and $\tilde{g}$ be two Markov matrices that satisfy FOSD. Define $f = (1-\alpha_{B}) \tilde{f} + \alpha_{B}I$, and $g = (1-\alpha_{S}) \tilde{g} + (1-\alpha_{S}) I$, where $I$ is the identity matrix. $(iii)$ [\emph{Truncated Normal}] Let $\Delta v_{i} = \Delta v_{j} = \Delta v$, that is types are equally spaced. Define $f(v'\vert v) = \frac{A_{B}(v)}{\sigma_{B}} \Phi\left(\frac{v'-v}{\sigma^{v}_{B}}\right) \Delta v$, where $\Phi$ is the standard normal, and $A_{B}(v)$ is chosen so that the probabilities sum to 1. In this case $\alpha^{v}_{B} = 1-\sigma^{v}_{B}$, and $\alpha_{B} = \min\limits_{v} \alpha^{v}_{B}$. $g(.\vert.)$ is defined analogously.}

Using this machinery, the following corollary states that even for an arbitrarily high discount factor, we can find a Markov process close enough to the identity matrix so that efficiency cannot be sustained.

\begin{corollary}
\label{c_limit_a}
Suppose $\Pi^{*}(\delta=0)<0$. For any $\Gamma\setminus\left\{F(.\vert.), G(.\vert.)\right\}$, and $\left(F(.\vert.), G(.\vert.)\right) \in \Lambda$, there exists $\alpha^{*}_{B}$ and $\alpha^{*}_{S}$ such that for all $\alpha_{B}>\alpha^{*}_{B}$ and $\alpha_{S}>\alpha^{*}_{S}$, the efficient allocation cannot be implemented under interim (or ex post) budget balance.
\end{corollary}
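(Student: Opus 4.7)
The plan is to invoke Theorem \ref{t_main} and reduce the corollary to showing that the ex ante budget surplus $\Pi^{*}$ of the min-max dynamic VCG mechanism becomes strictly negative as persistence approaches unity. Since a single violation $\Pi^{*}(h^{t-1}) < 0$ suffices for non-implementability, I would target the null history $h^{0}$ and work from the closed form in equation (\ref{e_pi}),
\[
\Pi^{*} \;=\; \Pi^{vcg} + U^{vcg}_{B}(\underline{v}) + U^{vcg}_{S}(\overline{c}),
\]
which remains valid along the entire approach because every kernel in $\Lambda$ satisfies FOSD, pinning the infima in equations (\ref{e_mech_B}) and (\ref{e_mech_S}) to the endpoints $\underline{v}$ and $\overline{c}$.

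The first step is a limit computation at $\alpha_{B} = \alpha_{S} = 1$. At this boundary both transition kernels collapse to the identity, so the initial draw $(v_{1}, c_{1}) \sim F \otimes G$ is frozen for all subsequent periods, and the dynamic VCG mechanism becomes a literal $\delta$-discounted repetition of its static counterpart. Factoring $1/(1-\delta)$ out of each term in equation (\ref{e_pi}) separately yields
\[
\left.\Pi^{*}\right|_{\alpha_{B} = \alpha_{S} = 1} \;=\; \tfrac{1}{1-\delta}\,\Pi^{*}(\delta = 0) \;<\; 0,
\]
by the hypothesis of the corollary. The second step is continuity of $\Pi^{*}$ as a function of the pair $\bigl(f(\cdot\mid\cdot),\, g(\cdot\mid\cdot)\bigr)$ in the standard finite-dimensional topology. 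Because $\mathcal{V}$ and $\mathcal{C}$ are finite and the per-period primitives $(v_{t} - c_{t})\mathbbm{1}_{\{v_{t} > c_{t}\}}$ together with the VCG per-period utilities are uniformly bounded by some constant $C = C(\mathcal{V},\mathcal{C})$, truncating each of the three discounted sums at horizon $T_{\varepsilon}$ leaves a tail of size at most $C \delta^{T_{\varepsilon}}/(1-\delta)$, while each truncated expectation is a polynomial in the finitely many entries of the two kernels and therefore jointly continuous. Combining the two steps produces thresholds $\alpha^{*}_{B}, \alpha^{*}_{S} < 1$ such that $\alpha_{B} > \alpha^{*}_{B}$ and $\alpha_{S} > \alpha^{*}_{S}$ force $\Pi^{*} < \tfrac{1}{2(1-\delta)}\,\Pi^{*}(\delta = 0) < 0$, which via Theorem \ref{t_main} rules out implementation of the efficient allocation under interim (and hence ex post) budget balance.

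The main obstacle is conceptual rather than computational: one must be sure that when moving off the corner $\alpha_{B} = \alpha_{S} = 1$ along a path in $\Lambda$, no $\arg\min$ in the min-max construction jumps discontinuously, since such a jump could invalidate the simple three-term representation of $\Pi^{*}$ on which the truncation argument relies. FOSD across the entire family $\Lambda$ delivers exactly this stability, pinning the infima to $\underline{v}$ and $\overline{c}$ uniformly in the parameters and reducing the analysis to the three clean discounted expectations in equation (\ref{e_pi}).
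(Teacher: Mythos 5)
Your proposal is correct and follows essentially the same route as the paper: both evaluate $\Pi^{*}$ at the fully persistent corner to obtain $\tfrac{1}{1-\delta}\Pi^{*}(\delta=0)<0$ and then argue that $\Pi^{*}$ varies continuously as the kernels approach the identity. The only cosmetic difference is in how that continuity is justified --- the paper decomposes the surplus into contributions from constant versus non-constant type histories and shows the latter vanish, whereas you truncate the discounted sums and use polynomial dependence on the finitely many kernel entries; both rest on the same uniform boundedness of the per-period terms.
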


The two corollaries together formalize perhaps a folk wisdom that holding the (compact) supports fixed, the possibility of efficiency depends on the order of limits between discounting and persistence. This intuition is well captured by the following simple example. Suppose $\mathcal{V} = \left\{v_{H},v_{L}\right\}$, $\mathcal{C} = \left\{c_{H},c_{L}\right\}$, where $v_{H}>c_{H}>v_{L}>c_{L}$.\footnote{Any other arrangement of the two types model will render posted prices efficient in the static model and hence in the repeated model. This was first noted by \citet{matsuo1989twotypes}.} Let $\Pi^{*}(\delta=0)<0$. We call this the simple trading problem (\emph{STP}).\footnote{Note that for the STP, the VCG mechanism (modified for the discrete type space) is given by:
\[x^{vcg}_{B}(v_{H},c_{H}) = v_{H}, x^{vcg}_{B}(v_{H},c_{L}) = v_{L}, x^{vcg}_{B}(v_{L},c_{H}) = 0, x^{vcg}_{B}(v_{L},c_{L}) = v_{L}\]
\[x^{vcg}_{S}(v_{H},c_{H}) = c_{H}, x^{vcg}_{S}(v_{H},c_{L}) = c_{H}, x^{vcg}_{S}(v_{L},c_{H}) = 0, x^{vcg}_{S}(v_{L},c_{L}) = c_{L}\]}
To evaluate of possibility of efficiency wee need to calculate five expressions: $\Pi^{*}$ and $\Pi^{*}(v,c)$ for $(v,c) \in\left\{v_{H},v_{L}\right\}\times\left\{c_{H},c_{L}\right\}$. These can be done in closed form, see Section \ref{STP} in the appendix.

To further simplify the parameter space, inspired by \citet{myerson1985two}, I look at a dynamic version of the symmetric uniform trading problem: $\mathcal{V} = \left\{1,v\right\}$, $\mathcal{C} = \left\{c,0\right\}$ with $1>c>v>0$ and $\Delta v = 1-v = c = \Delta c$; assume a uniform prior $f=g=\left(\frac{1}{2}.\frac{1}{2}\right)$, and symmetric Markov evolution: $f(v\vert v) = g(c\vert c) = \alpha$ for $\frac{1}{2}<\alpha<1$.\footnote{$\Pi^{*}(\delta=0)<0$ simplifies to $c-v>\frac{1}{2}$.} Call this the uniform symmetric simple trading problem (\emph{USSTP}). Owing to symmetry, it immediately follows that the ex post utility vectors for the buyer and seller in this model are equal, that is $U^{*}_{B}=U^{*}_{S}$, and $\Pi^{*}(v_{H},c_{H}) = \Pi^{*}(v_{L},c_{L})$. The latter implies that we need to check four inequality constraints for efficiency.
\begin{figure}
        \centering
        \begin{subfigure}[b]{0.43\textwidth}
                \includegraphics[width=\textwidth]{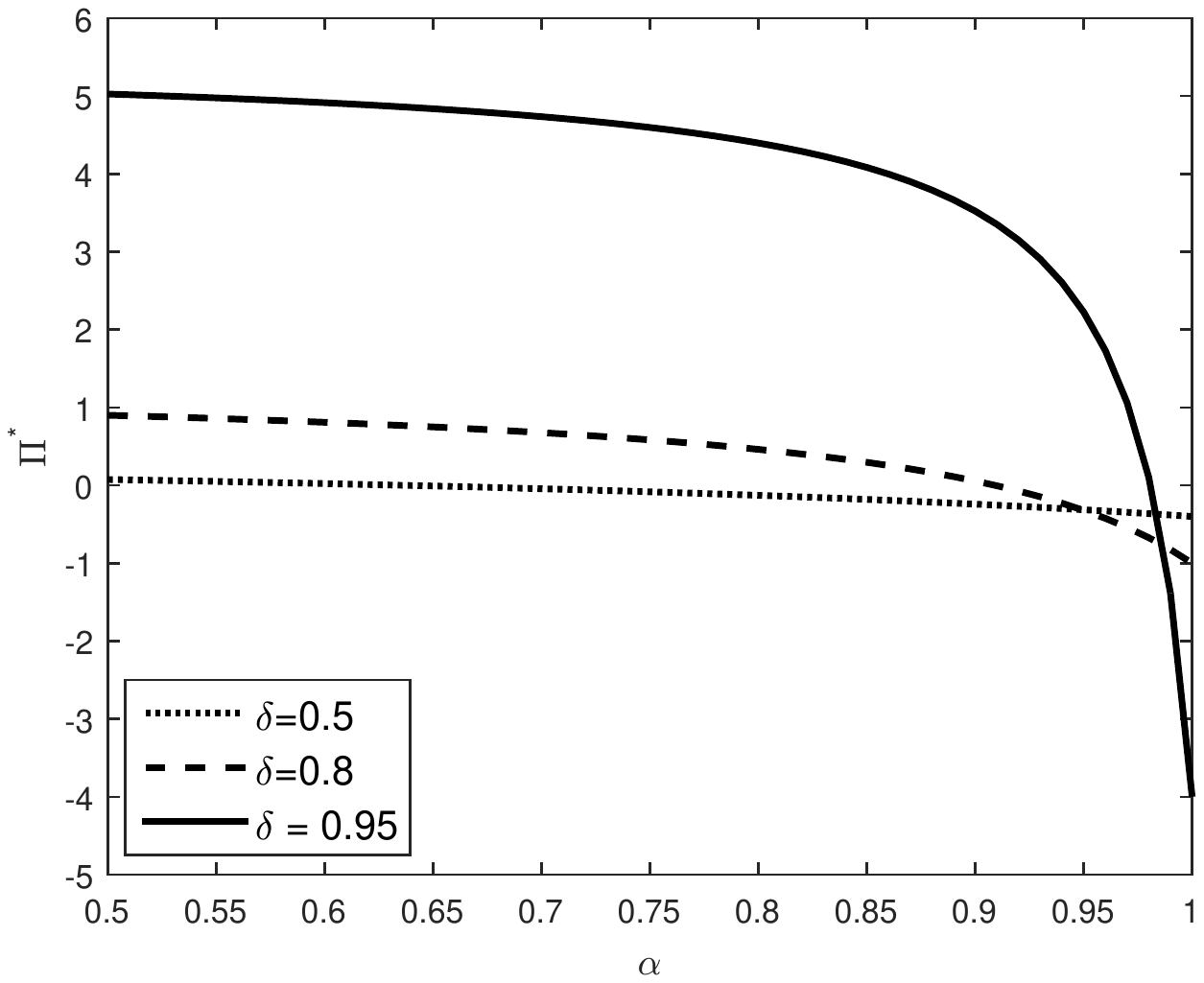}
                \caption{$\Pi^{*}$}
                \label{Pstar}
        \end{subfigure}
        \begin{subfigure}[b]{0.43\textwidth}
                \includegraphics[width=\textwidth]{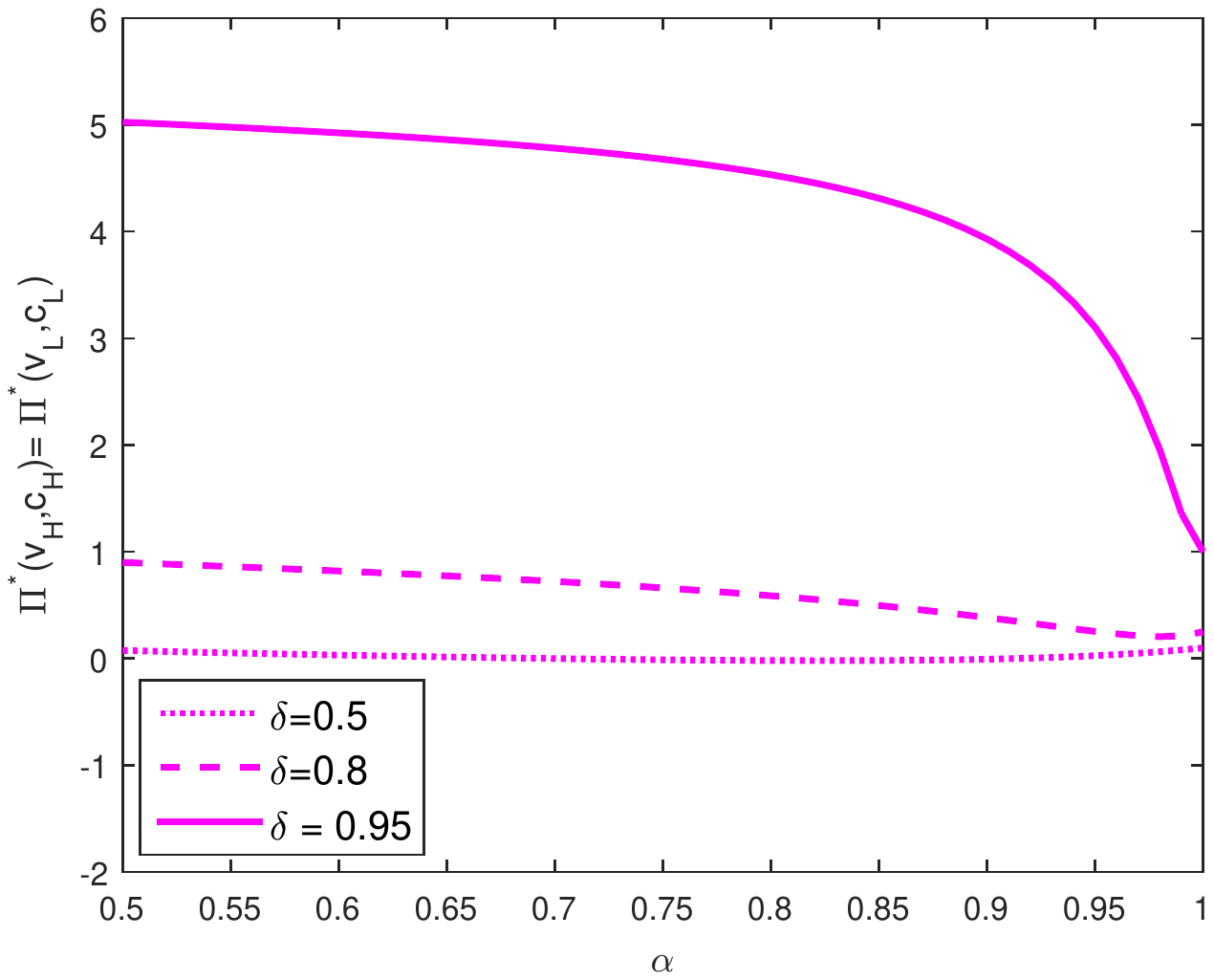}
               \caption{$\Pi^{*}(v_{H},c_{H})=\Pi^{*}(v_{L},c_{L})$}
                \label{Phhstar}
        \end{subfigure} \\
        \begin{subfigure}[b]{0.43\textwidth}
                 \includegraphics[width=\textwidth]{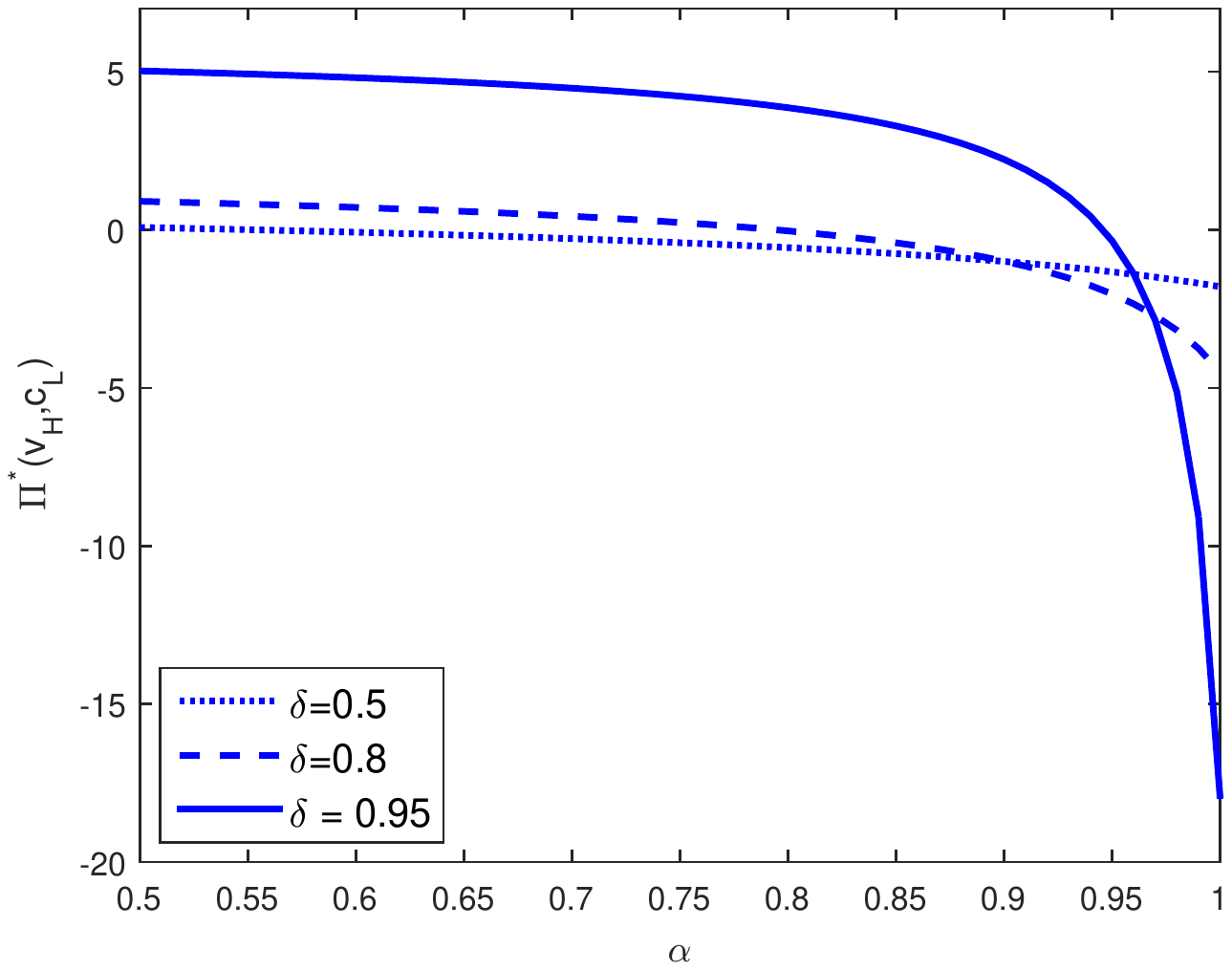}
               \caption{$\Pi^{*}(v_{L},c_{H})$}
                \label{Phlstar}
        \end{subfigure}
          \begin{subfigure}[b]{0.43\textwidth}
                 \includegraphics[width=\textwidth]{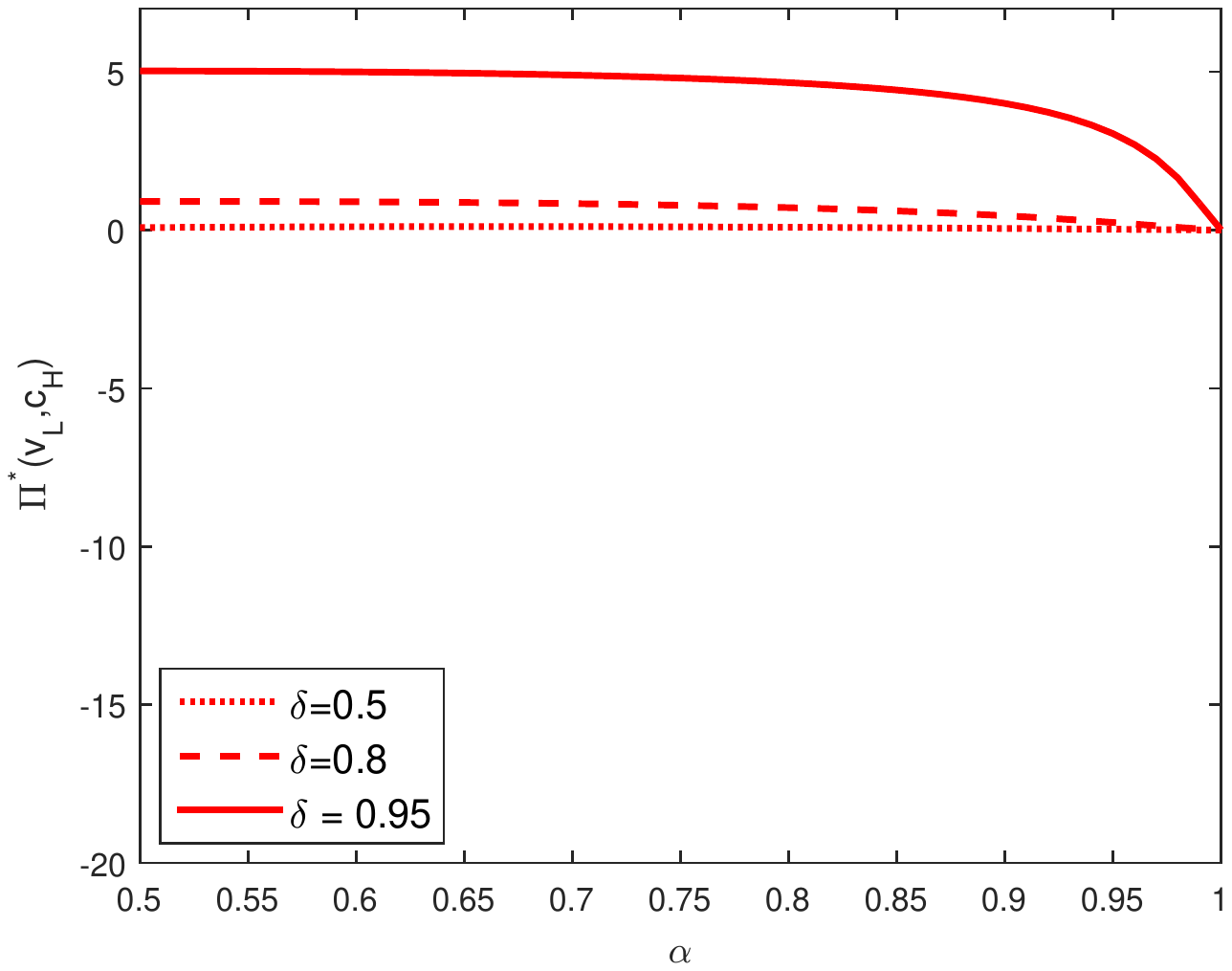}
               \caption{$\Pi^{*}(v_{L},c_{L})$}
                \label{Plhstar}
        \end{subfigure}

       \caption{$\Pi^{*}$ and $\Pi^{*}(v,c)$ for the uniform symmetric simple trading problem, $\delta = 0.5,0.8,0.95$ }\label{fPi}
\end{figure}
\begin{figure}
        \centering
        \begin{subfigure}[b]{0.43\textwidth}
                \includegraphics[width=\textwidth]{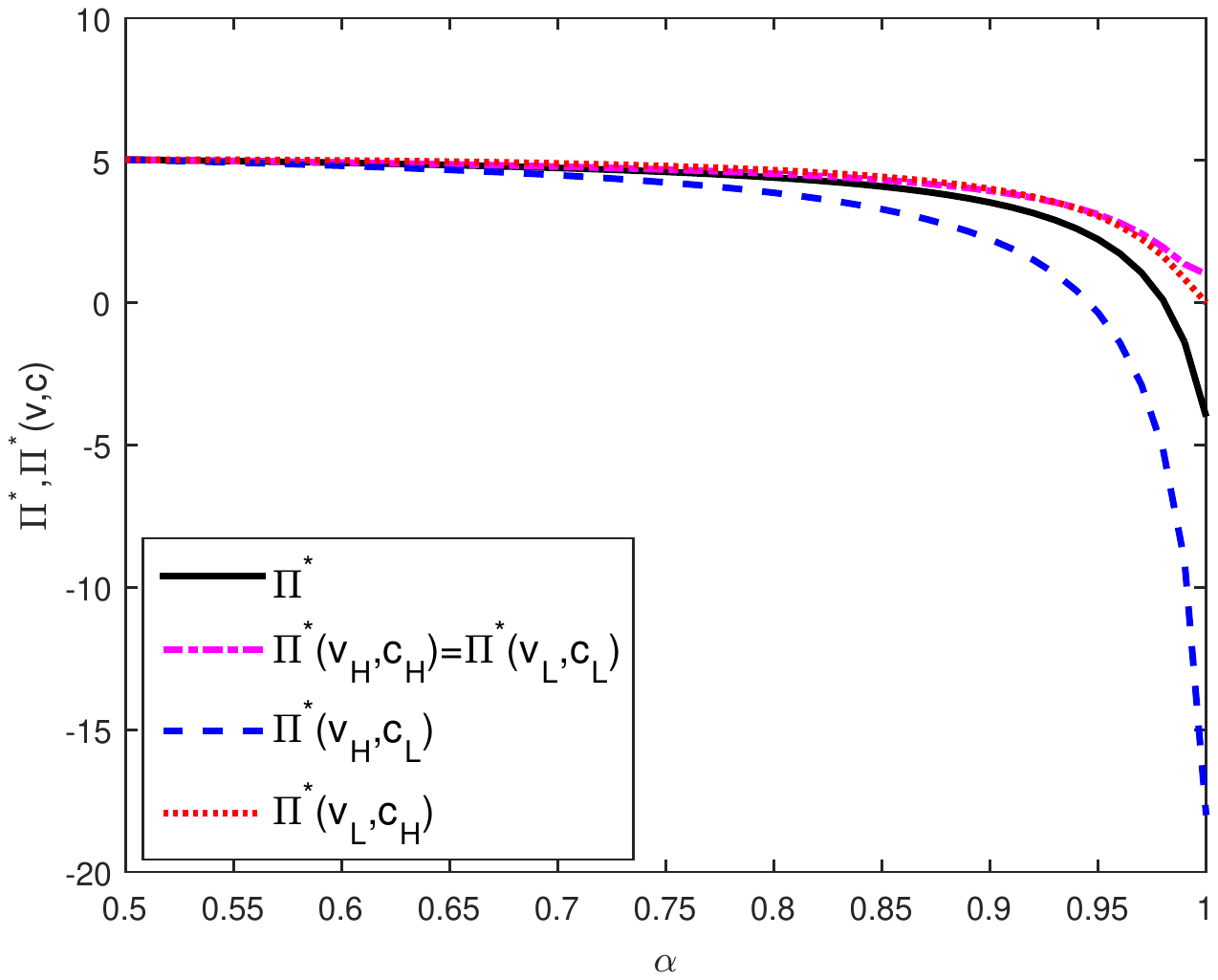}
                \caption{$\delta = 0.95$}
                \label{Piv1}
        \end{subfigure}
        \begin{subfigure}[b]{0.43\textwidth}
                \includegraphics[width=\textwidth]{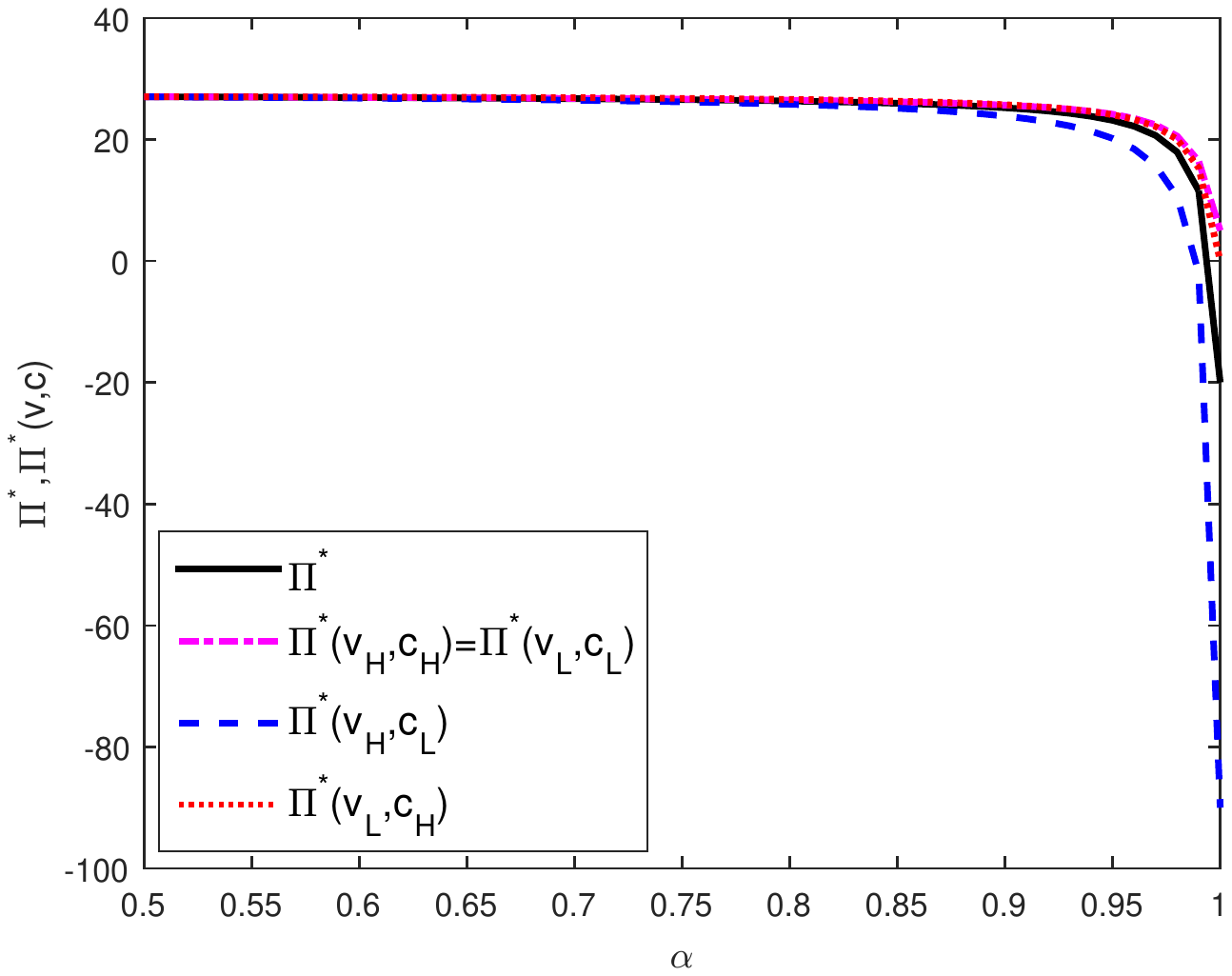}
                \caption{$\delta = 0.99$}
                \label{Piv2}
        \end{subfigure}
       \caption{On the relative "sizes" of $\delta$ and $\alpha$ for the uniform symmetric simple trading problem.}\label{fPivector}
\end{figure}

Figure \ref{fPi} plots each of the $\Pi^{*}$-vector for the \emph{USSTP} as a function of persistence, $\alpha$, for three different values of $\delta$. It is clear that higher persistence is unambiguously bad and higher discount factor is unambiguously good for efficiency, and these hold keeping the other fixed as well. Figure \ref{fPivector} maps all the elements of the $\Pi^{*}$-vector for fixed discount factors 0.95 and 0.99. Three facts must be noted. First, all elements of the vector start at the same value at $\alpha=0.5$, which is akin to the iid model and hence independent of history. Second, there is a clear ranking in the expected budget surplus generated by the efficient allocation. Somewhat counteractively, $\Pi^{*}(v,c)$ is non-monotonic in $v$ and $c$; in particular $\inf_{v,c} \Pi^{*}(v,c) = \Pi^{*}(v_{H},c_{L})$. The best possible types in terms of efficiency, generate the lowest expected budget surplus.\footnote{Equation (\ref{e_pi2}) tells us that $\Pi^{*}(v,c)$ is composed of the budget deficit due the VCG mechanism plus the surplus meted out to the lowest possible type in the VCG mechanism. Now, it can be checked that for the USSTP $\Pi^{vcg}(v_{H},c_{H})= \Pi^{vcg}(v_{L},c_{L})> \Pi^{vcg}(v_{L},c_{H}) > \Pi^{vcg}(v_{H},c_{L})$, $U^{vcg}_{B}(\underline{v}\vert c_{L}) > U^{vcg}_{B}(\underline{v}\vert c_{H})$ and $U^{vcg}_{S}(\overline{c}\vert v_{H}) > U^{vcg}_{S}(\overline{c}\vert v_{L})$. Therefore, even though more surplus is extracted from the agents when the type realization is $(v_{H},c_{L})$ the larger budget deficit of the original VCG mechanism cannot be compensated for enough culminating in the lowest value of in the $\Pi^{*}$-vector.} Third, the scales of y-axis for Figures \ref{Piv1} and \ref{Piv2} are different by some orders of magnitude. At high levels of discounting the values of the $\Pi^{*}$-vector increase non-linearly with the discount factor.

\section{Implementation}
\label{sec_impl}

Given the full characterization result in Section \ref{sec_charac}, a natural question to ask is what plausible mechanisms would implement the efficient allocation. In what follows, constrained by fixing the efficient allocation and the three institutional constraints, I look for simple and intuitively meaningful transfer schemes. The question is divided into implementation under interim and ex post budget balance. Throughout it is assumed that $\Pi^{*}(h^{t-1})\geq  0$ $\forall h^{t-1}$, $\forall t$, or more succinctly, $\Pi^{*}\geq 0$ and $\Pi^{*}(v,c) \geq 0$ for all $v\in\mathcal{V}$ and $c\in\mathcal{C}$.

\subsection{Under interim budget balance}

As pointed out before, the ex post utility vectors in the dynamic min-max VCG mechanism, defined by Equations (\ref{e_mech_B}) and (\ref{e_mech_S}), are forward looking and independent of history. Thus, the ex post transfers in the associated mechanism $\left\langle {\bf p^{*},x^{*}}\right\rangle$ are also history independent, lending a rather simple structure to the direct mechanism.

In this subsection I show, in addition, the min-max dynamic VCG mechanism provides another simple, and perhaps importantly also an exhaustive method of implementing the efficient allocation under interim budget balance. First, I describe the mechanical steps of implementation, and then later expand on its simplicity and exhaustiveness.\\

\noindent\framebox[1.1\width]{Implementing the Min-Max Dynamic VGC mechanism}\\

\textbf{Step 1.} At the start of period $t$, the buyer and seller pay a fixed fee: $z_{B,t}(c_{t-1})$ and $z_{S,t}(v_{t-1})$. From the second period onwards this fee only depends on the type of the other agent in the previous period. Thus, the fee structure is completely described by: $z_{B,1}, z_{S,1}$, and $z_{B}(c), z_{S}(v)$ $\forall (v,c) \in\mathcal{V}\times\mathcal{C}$.  \\ \vspace{1mm}

\textbf{Step 2.} After the fixed fee is paid, the mechanism designer runs a simple static VCG mechanism as defined above: $\left\langle {\bf p^{vcg},x^{vcg}}\right\rangle$. This (stationary) mechanism is repeated every period. \\ \vspace{1mm}

\textbf{Step 3.} Comparing ${\bf U^{vcg}}$ and ${\bf U^{*}}$, by construction the fixed fees are defined by:
\[U^{vcg}_{B}(\underline{v}\vert c) = z_{B}(c) + \delta \sum\limits_{\tilde{c}\in\mathcal{C}}U^{vcg}_{B}(\underline{v}\vert\tilde{c})g(\tilde{c}\vert c), \quad U^{vcg}_{B}(\underline{v}) = z_{B,1} + \delta \sum\limits_{\tilde{c}\in\mathcal{C}}U^{vcg}_{B}(\underline{v}\vert\tilde{c})g(\tilde{c}) \]
\[U^{vcg}_{S}(\overline{c}\vert v) = z_{S}(v) + \delta \sum\limits_{\tilde{v}\in\mathcal{V}}U^{vcg}_{S}(\overline{c}\vert\tilde{v})f(\tilde{v}\vert v), \quad U^{vcg}_{S}(\overline{c}) = z_{S,1} + \delta \sum\limits_{\tilde{v}\in\mathcal{V}}U^{vcg}_{S}(\overline{c}\vert\tilde{v})f(\tilde{v}) \]
$\qquad \qquad \qquad \qquad  \qquad \qquad \qquad \qquad \qquad \qquad \qquad \qquad \qquad \qquad \qquad \qquad \qquad \qquad \qquad \square$ \\

It is easy to the see that the per-period transfers in the original min-max dynamic VCG mechanism constructed in Section \ref{sec_charac} are given by
\[x^{*}_{B}(v,c\vert \tilde{v}, \tilde{c}) = x^{vcg}_{B}(v,c) + z_{B}(\tilde{c}), \quad x^{*}_{B}(v,c) = x^{vcg}_{B}(v,c) + z_{B,1}\]
\[x^{*}_{S}(v,c\vert \tilde{v}, \tilde{c}) = x^{vcg}_{S}(v,c) - z_{S}(\tilde{v}), \quad x^{*}_{S}(v,c) = x^{vcg}_{S}(v,c) - z_{S,1}\]

The mechanism is quite simple: it is Markov in that transfers only depend on the current reports and report of the other agent in the previous period, and it is stationary. It is intuitive in that every period it requires a small participation fees and then runs a VCG mechanism.

The magnitude of the fixed payments with respect to other parameters of the problem are delineated for the uniform symmetric simple trading problem in Table \ref{tab_mag_fixed}. Fixed fee payments have a reasonable magnitude: the never exceed one-fourth of the highest type. Moreover, transfers decrease with the level of persistence. An increase in persistence decreases the amount of total economic surplus, and since the expected utility of the buyer and seller are already at their lowest possible value, this implies a decrease in the share of the pie for the mechanism designer, and hence a reduction in fixed fee payments.
\begin{table}[h!]
\centering
\begin{tabular}{|c|c|c|c|}
 \hline
$\alpha$ &  $z_{B}(c_{H})$ &  $z_{B}(c_{L})$ & $z_{B,1}$  \\ \hline
0.5 & 0.225& 0.225 & 0.225   \\ \hline
0.6 &0.215 & 0.230& 0.222  \\ \hline
0.7 &0.192 & 0.243& 0.218 \\ \hline
0.8 & 0.160& 0.259& 0.209\\ \hline
0.9 & 0.114& 0.261& 0.188\\ \hline
\end{tabular} 
\caption{Fixed fee payments for the \emph{USSTP} in the min-max dynamic VCG mechanism. Parametrization: $\mathcal{V} = \left\{1,0.05\right\},\mathcal{C} = \left\{0.95,0\right\}, \delta = 0.95$.}
\label{tab_mag_fixed}
\end{table}

How is the construction exhaustive in implementability? Consider the class of all possible \emph{tight} mechanisms that implement the efficient allocation, that is all mechanisms where local incentive constraints hold as equalities.\footnote{A formal description is provided in Section \ref{sec_tight} in the appendix.} These mechanisms are the "right" mechanisms to consider in the discrete types framework: they are a natural dynamic analog of the static discrete model, and in the limit of the dynamic continuous type space model they cover all possible incentive compatible mechanisms since local incentive constraints always "bind" owing to the envelope theorem (see \citet*{pavan2014dynamic}).

The min-max dynamic VCG mechanism is \emph{one} such tight mechanism that is uniquely defined by the fact that it transfers all possible expected surplus at every history to the mechanism designer. All other mechanisms are linear translations of this mechanism. A full characterization follows.
\begin{corollary}
\label{cor_all_int_mech}
Suppose $\Pi^{*}(h^{t-1})\geq  0$ $\forall h^{t-1}$, $\forall t$. A tight mechanism $\left\langle {\bf p^{*},U^{\beta}}\right\rangle$ implements the efficient allocation under interim budget balance if and only if
\[U^{\beta}_{B}(v_{t}\vert h^{t-1}) = U^{*}_{B}(v_{t}\vert h^{t-1}) + \beta_{B}(h^{t-1})\Pi^{*}(h^{t-1})\]
\[U^{\beta}_{S}(v_{t}\vert h^{t-1}) = U^{*}_{S}(c_{t}\vert h^{t-1}) + \beta_{S}(h^{t-1})\Pi^{*}(h^{t-1})\]
$\forall$ $h^{t-1}$, $\forall$ $t$ with $\beta_{B}(h^{t-1}), \beta_{S}(h^{t-1})\geq 0$, and $\beta_{B}(h^{t-1}) + \beta_{S}(h^{t-1}) \leq 1$.
\end{corollary}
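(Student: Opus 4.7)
The plan is to prove both directions by combining Lemma \ref{l_pe} with the observation that for tight mechanisms the binding local incentive constraints (together with the fixed allocation $p^*$) pin down the entire within-history vector of expected utilities up to a single additive scalar per history. The ``if'' direction will then be essentially a direct application of Lemma \ref{l_pe}, and the ``only if'' direction will follow by reading off the free scalars as $a_i(h^{t-1})$ and re-expressing them as $\beta_i(h^{t-1})\Pi^*(h^{t-1})$.

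For the ``if'' direction, fix admissible weights $\beta_B, \beta_S$ and consider $\langle {\bf p^*, U^\beta}\rangle$. Because each shift $\beta_i(h^{t-1})\Pi^*(h^{t-1})$ depends only on $h^{t-1}$ and not on the current reported type, Lemma \ref{l_pe} delivers incentive compatibility (and tightness is inherited from $\langle {\bf p^*, U^*}\rangle$). Individual rationality follows from $\min_v U^*_B(v|h^{t-1}) = 0 = \min_c U^*_S(c|h^{t-1})$ (by construction of the min-max mechanism) together with $\beta_i \geq 0$ and $\Pi^*(h^{t-1}) \geq 0$. For interim budget balance, the surplus functional $\Pi$ in equation (\ref{eqEBS}) is linear in the utility vectors, and shifting by a type-independent constant yields
\[\Pi^\beta(h^{t-1}) = \Pi^*(h^{t-1}) - \beta_B(h^{t-1})\Pi^*(h^{t-1}) - \beta_S(h^{t-1})\Pi^*(h^{t-1}) = (1 - \beta_B - \beta_S)\Pi^*(h^{t-1}) \geq 0\]
under the stated constraint $\beta_B + \beta_S \leq 1$.

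For the ``only if'' direction, suppose $\langle {\bf p^*, U^\beta}\rangle$ is any tight, implementable, interim budget balanced mechanism implementing $p^*$. The pinning-down step (discussed below) produces scalars $a_B(h^{t-1}), a_S(h^{t-1})$ with $U^\beta_i(\cdot|h^{t-1}) = U^*_i(\cdot|h^{t-1}) + a_i(h^{t-1})$. IR at $\underline{v}$ and $\overline{c}$ forces $a_B, a_S \geq 0$; the displayed identity combined with interim budget balance forces $a_B(h^{t-1}) + a_S(h^{t-1}) \leq \Pi^*(h^{t-1})$. Setting $\beta_i(h^{t-1}) := a_i(h^{t-1})/\Pi^*(h^{t-1})$ wherever $\Pi^*(h^{t-1}) > 0$ (and $\beta_i(h^{t-1}) := 0$ at histories where $\Pi^*(h^{t-1}) = 0$, at which $a_i$ is forced to vanish anyway) yields the desired decomposition.

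The main obstacle is the pinning-down step itself, a converse to Lemma \ref{l_pe} within the tight class. Writing $d_B := U^\beta_B - U^*_B$, subtracting the two binding local IC equations at adjacent buyer types yields
\[d_B(v_{i+1}|h^{t-1}) - d_B(v_i|h^{t-1}) = \delta \sum_{j} d_B(v_j|h^{t-1}, v_i)\bigl(f(v_j|v_{i+1}) - f(v_j|v_i)\bigr).\]
If the continuation $d_B$ is type-independent, the right-hand side vanishes (since $\sum_j f(v_j|\cdot) = 1$), which propagates backward from the terminal period $T$ by induction. The infinite-horizon case requires an additional boundedness/continuity argument exploiting the uniform bound on $a$ in Lemma \ref{l_pe}. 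The symmetric argument applied to the seller yields $a_S(h^{t-1})$, completing the proof.
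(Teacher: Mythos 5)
Your proposal is correct and follows essentially the same route as the paper: Lemma~\ref{l_pe} plus the linear surplus identity $\Pi^{\beta}=(1-\beta_{B}-\beta_{S})\Pi^{*}$ for the ``if'' direction, and payoff equivalence together with individual rationality at $\underline{v}$ and $\overline{c}$ (which is where $U^{*}$ vanishes) for the ``only if'' direction. The one place you go beyond the paper is in explicitly sketching the converse of Lemma~\ref{l_pe} for tight mechanisms---the paper simply cites Lemma~\ref{l_pe}, which as stated only gives the forward implication---and your backward induction on the differenced binding local incentive constraints is the right way to justify that pinning-down step (the paper itself only remarks in a footnote that the converse holds in the continuous-type model).
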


Note that $\beta_{B}(h^{t-1}), \beta_{S}(h^{t-1})$, and $1-\beta_{B}(h^{t-1})-\beta^{S}(h^{t-1})$ are the fractions of surplus $\Pi^{*}(h^{t-1})$ shared by the buyer, seller and the mechanism designer after history $h^{t-1}$. Just as before this new mechanism can be implemented using a fixed fee and VCG mashup.  And, if we choose $\beta_{B}$ and $\beta_{S}$ to be Markov and stationary, the new mechanism will also adopt the Markov and stationary properties of the original min-max dynamic VCG mechanism.\footnote{For the min-max dynamic VCG mechanism $\beta_{B}=\beta_{S}=0$.}

All possible (history dependent) bargaining protocols that can implement the efficient allocation under interim budget balance are captured by the vector ${\bf \beta} = \left(\beta_{B}(h^{t-1}),\beta_{S}(h^{t-1})\right)$. Once the fact of efficiency is established by the min-max dynamic VCG mechanism, this vector lists all possible divisions of the economic surplus that can emerge in equilibrium.

This a good place to contrast my mechanism from the \emph{bond} mechanisms typically adopted in the literature (see  \citet{skrzypacz2015mechanisms} and \citet{yoon2015bb}) in which the agents are asked to pay an upfront fees ensuring ex ante (but not interim or ex post) budget balance. In Table \ref{table_bond} (for the uniform symmetric simple trading problem), the maximum fixed fee payment across all possible histories in the min-max dynamic VCG mechanism is normalized to 1. The upfront payment is always $1400\%$ more than the fixed fee payment. The bond mechanism, while plausible in certain contexts, may not work unless the agents have very deep pockets; small credit constraints will render it infeasible.
\begin{table}[h!]
\centering
\begin{tabular}{|c|c|c|}
\hline
$\alpha$ &  $\max \vert z_{B} \vert$ (normalized) & $UP$  \\ \hline
0.5 & 1 & 2000\%     \\ \hline
0.6 & 1 & 1934\%   \\ \hline
0.7 & 1 & 1790\%  \\ \hline
0.8 & 1 & 1619\% \\ \hline
0.9 & 1 & 1437\% \\ \hline
\end{tabular}
\caption{Maximum fixed fess in dynamic min-max VCG mechanism versus upfront payment (UP) in the bond mechanism for the \emph{USSTP}. Parametrization: $\mathcal{V} = \left\{1,0.05\right\},\mathcal{C} = \left\{0.95,0\right\}, \delta = 0.95$.}
\label{table_bond}
\end{table}

\subsection{Under ex post budget balance}

Unlike the case with an active mechanism designer (or intermediary), implementation under ex post budget balance does not allow differential payments to be made by the buyer and to the seller. Restriction to "direct transfers" shrinks the family of incentive compatible sharing rules of the future economic surplus, but as established in Proposition \ref{p_ante_post}, it does not change the fact of efficiency.

The direct mechanism for implementation under ex post budget balance has a clear intuitive formulation. Starting from the dynamic min-max VCG mechanism, let $\left\langle {\bf p^{*},U^{o}}\right\rangle$ be the mechanism that generates exactly zero expected budget surplus after every history and distributes any extra economic surplus to both agents equally; that is $\forall$ $h^{t-1}$:
\begin{equation}
\label{ub_zero}
U^{o}_{B}(v_{t}\vert h^{t-1}) = U^{*}_{B}(v_{t}\vert h^{t-1})  + \frac{1}{2}\Pi^{*}(h^{t-1}) \text{ }\forall v_{t}
\end{equation}
\begin{equation}
\label{us_zero}
U^{o}_{S}(c_{t}\vert h^{t-1})= U^{*}_{S}(c_{t}\vert h^{t-1})  + \frac{1}{2}\Pi^{*}(h^{t-1}) \text{ }\forall c_{t}
\end{equation}
The new mechanism is of course Markov and stationary starting period 2. Importantly, it generates a zero static expected budget surplus after any history:
\[\Pi^{o}_{t}(h^{t-1}) =\mathbb{E}\left[x^{o}_{B}(v_{t},c_{t}\vert h^{t-1}) - x^{o}_{S}(v_{t},c_{t}\vert h^{t-1}) \vert h^{t-1}\right]= \Pi^{o}(h^{t-1}) - \delta \mathbb{E}\left[\Pi^{o}(h^{t})\vert h^{t-1}\right] = 0 \]

The direct mechanism implementing the efficient allocation is then completely characterized by the following payments:
\begin{equation}
\label{transfers_expost0}
x^{\dagger}(v,c) = x^{o}_{S}(c) + \left[x^{o}_{B}(v) - \bar{x}^{o}_{B}\right]
\end{equation}
\begin{equation}
\label{transfers_expost}
x^{\dagger}(v,c\vert \tilde{v},\tilde{c}) = x^{o}_{S}(c\vert \tilde{v}, \tilde{c}) + \left[x^{o}_{B}(v\vert \tilde{v},\tilde{c}) - \bar{x}^{o}_{B}(\tilde{v},\tilde{c})\right]
\end{equation}
where $\bar{x}^{o}_{B} = \mathbb{E}\left[x_{B}^{o}(v,c)\right]$, and $\bar{x}^{o}_{B}(\tilde{v},\tilde{c}) = \mathbb{E}\left[x_{B}^{o}(v,c\vert \tilde{v},\tilde{c})\mid (\tilde{v},\tilde{c}) \right]$ are the total expected payments by the buyer in period 1 and period 2 respectively in the mechanism $\left\langle {\bf p^{*},x^{o}}\right\rangle$. Note that $x^{\dagger}(v) = x^{o}_{B}(v)$, $x^{\dagger}(v\vert \tilde{v},\tilde{c}) = x^{o}_{B}(v\vert \tilde{v},\tilde{c})$, and $x^{\dagger}(c) = x^{o}_{S}(c)$, $x^{\dagger}(c\vert \tilde{v},\tilde{c}) = x^{o}_{S}(c\vert \tilde{v},\tilde{c})$. Therefore, the mechanism $\left\langle {\bf p^{*},x^{\dagger}}\right\rangle$, where ${\bf x_{B}^{\dagger} = x_{S}^{\dagger} = x^{\dagger}}$, is by construction implementable and satisfies ex post budget balance.

While the interim budget balance allows for dynamically balancing transfers through external liquidity, under ex post budget balance the promises are direct and have to be fulfilled over time by the agents themselves. The agents take over the role of both the producers of economic surplus and liquidity providers; the two components of the payment scheme $x^{o}_{S}(c\vert \tilde{v}, \tilde{c})$ and $\left[x^{o}_{B}(v\vert \tilde{v},\tilde{c}) - \bar{x}^{o}_{B}(\tilde{v},\tilde{c})\right]$, respectively mirror the two roles.

Table \ref{table_expost} reports payments in the mechanism $\left\langle {\bf p^{*},x^{\dagger}}\right\rangle$ for two possible histories. $x^{\dagger}(v_{H},c_{L}\vert$ $v_{H},c_{L})$ represents the transfer for the most favorable realization of types in terms of trade, and $x^{\dagger}(v_{L},c_{H}\vert v_{L},c_{H})$ the least favorable. It is clear that while the former involves payments from the buyer to the seller, in the latter it is in fact the seller that compensates the buyer for the realization of bad types. The flow of funds in both directions captures the role of agents as liquidity providers in the absence of an intermediary.
\begin{table}[h!]
\centering
\begin{tabular}{|c|c|c|c|c|c|}
\hline
$\alpha$ &  $x^{\dagger}(v_{H},c_{L}\vert v_{H},c_{L}) $ & $x^{\dagger}(v_{H},c_{L}\vert v_{H},c_{H}) $ & & $x^{\dagger}(v_{L},c_{H}\vert v_{L},c_{H})$ & $x^{\dagger}(v_{L},c_{H}\vert v_{H},c_{H})$  \\ \hline
0.5 & 0.625 & 0.625 & & 0.125 & 0.125 \\ \hline
0.6 & 0.596 & 0.742 & & 0.009 & 0.118\\ \hline
0.7 & 0.567 & 0.879 & &-0.096 & 0.043\\ \hline
0.8 & 0.540 & 1.090 & &-0.195 &-0.178\\ \hline
0.9 & 0.517 & 1.607 & &-0.289 &-0.8831\\ \hline
\end{tabular}
\caption{Transfers under ex post budget balance in an equal sharing of economic surplus for the \emph{USSTP}. Parametrization: $\mathcal{V} = \left\{1,0.05\right\},\mathcal{C} = \left\{0.95,0\right\}, \delta = 0.95$.}
\label{table_expost}
\end{table}

Now, what about other possible (direct) mechanisms that implement the efficient allocation under ex post budget balance? A simple characterization result follows.
\begin{corollary}
\label{cor_expost_charac}
Suppose $\Pi^{*}(h^{t-1})\geq  0$ $\forall h^{t-1}$, $\forall t$. If a tight mechanism $\left\langle {\bf p^{*},U^{\beta}}\right\rangle$ implements the efficient allocation under ex post budget balance, then
\[U^{\beta}_{B}(v_{t}\vert h^{t-1}) = U^{*}_{B}(v_{t}\vert h^{t-1}) + \beta(h^{t-1})\Pi^{*}(h^{t-1})\]
\[U^{\beta}_{S}(v_{t}\vert h^{t-1}) = U^{*}_{S}(c_{t}\vert h^{t-1}) + (1-\beta(h^{t-1}))\Pi^{*}(h^{t-1})\]
$\forall$ $h^{t-1}$, $\forall$ $t$, for some $\beta(h^{t-1})\in [0,1]$ $\forall$ $h^{t-1}$, $\forall$ $t$. Conversely, if a mechanism $\left\langle {\bf p^{*},U^{\beta}}\right\rangle$ satisfies these two equations for a family of constants $\beta(h^{t-1})\in [0,1]$ $\forall$ $h^{t-1}$, $\forall$ $t$, then there exists a mechanism $\left\langle {\bf p^{*},\tilde{U}^{\beta}}\right\rangle$ that implements the efficient allocation under ex post budget balance, and $U^{\beta}_{B}(v_{t}\vert h^{t-1}) = \tilde{U}^{\beta}_{B}(v_{t}\vert h^{t-1})$, $U_{S}^{\beta}(c_{t}\vert h^{t-1}) = \tilde{U}^{\beta}_{S}(c_{t}\vert h^{t-1})$ $\forall$ $v_{t},c_{t}, h^{t-1}$.
\end{corollary}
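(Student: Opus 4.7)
The plan is to derive both directions from the infrastructure already built: Corollary \ref{cor_all_int_mech} characterizes tight mechanisms under interim budget balance, and the static AGV-style within-history redistribution used in the proof of Proposition \ref{p_ante_post} supplies the only additional construction needed.

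For the forward direction, I would first note that ex post budget balance implies interim budget balance, so any tight mechanism $\langle \mathbf{p^*, U^\beta}\rangle$ implementing efficiency under ex post balance falls under Corollary \ref{cor_all_int_mech}. Hence there exist $\beta_B(h^{t-1}), \beta_S(h^{t-1}) \geq 0$ with $\beta_B + \beta_S \leq 1$ such that $U^\beta_B = U^*_B + \beta_B \Pi^*$ and $U^\beta_S = U^*_S + \beta_S \Pi^*$. The additional bite of ex post balance is that $x^\beta_B = x^\beta_S$ pointwise, which forces $\Pi^\beta(h^{t-1}) = 0$ at every history. Since the efficient allocation pins down the total surplus, a one-line computation gives $\Pi^\beta(h^{t-1}) = (1 - \beta_B(h^{t-1}) - \beta_S(h^{t-1}))\Pi^*(h^{t-1})$. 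Whenever $\Pi^*(h^{t-1}) > 0$ this forces $\beta_B + \beta_S = 1$, and setting $\beta(h^{t-1}) := \beta_B(h^{t-1})$ yields the asserted form. At histories where $\Pi^*(h^{t-1}) = 0$, $U^\beta$ collapses to $U^*$ and any choice $\beta(h^{t-1}) \in [0,1]$ is consistent.

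For the converse direction, I would start with $U^\beta$ of the stated form. Lemma \ref{l_pe} applied with $a_B = \beta\Pi^*$ and $a_S = (1-\beta)\Pi^*$ delivers incentive compatibility; individual rationality follows from $U^*_B, U^*_S \geq 0$, $\Pi^* \geq 0$, and $\beta, 1-\beta \in [0,1]$. Since $\beta_B + \beta_S \equiv 1$, the construction satisfies $\Pi^\beta(h^{t-1}) \equiv 0$, so in particular $\Pi^\beta_t(h^{t-1}) = \Pi^\beta(h^{t-1}) - \delta\mathbb{E}[\Pi^\beta(h^t) \mid h^{t-1}] = 0$. Recover the marginal transfers $x^\beta_B(v_t \mid h^{t-1}), x^\beta_S(c_t \mid h^{t-1})$ from $U^\beta$ via the recursion (\ref{eqxU}) and then, at each node $h^{t-1}$, define the common ex post transfer
\[
x^\dagger(v,c \mid h^{t-1}) = x^\beta_B(v \mid h^{t-1}) + x^\beta_S(c \mid h^{t-1}) - \bar{x}^\beta_B(h^{t-1}),
\]
where $\bar{x}^\beta_B(h^{t-1}) = \mathbb{E}[x^\beta_B \mid h^{t-1}] = \mathbb{E}[x^\beta_S \mid h^{t-1}]$ (the equality uses $\Pi^\beta_t = 0$). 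By construction $x^\dagger$ has the correct one-sided marginals $\mathbb{E}_{c}[x^\dagger \mid h^{t-1}, v] = x^\beta_B(v \mid h^{t-1})$ and $\mathbb{E}_{v}[x^\dagger \mid h^{t-1}, c] = x^\beta_S(c \mid h^{t-1})$; equations (\ref{transfers_expost0}) and (\ref{transfers_expost}) are exactly the $\beta = \tfrac{1}{2}$ instance of this recipe.

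The main obstacle is the invariance claim $\tilde{U}^\beta = U^\beta$ in the converse direction: changing the pointwise transfers within a period could in principle spill over to earlier periods' continuation utilities through the recursion (\ref{eqxU}). The AGV redistribution sidesteps this because each player's own interim utility depends only on the marginal transfer $x_B(v \mid h^{t-1})$ (respectively $x_S(c \mid h^{t-1})$) and on continuation utilities under the \emph{unchanged} downstream mechanism; both are preserved by the redistribution. The deviation payoff $U_B(v'_t; v_t \mid h^{t-1})$ likewise depends only on these preserved quantities, so incentive compatibility carries over. That we can perform the redistribution independently at each node relies crucially on $\Pi^\beta_t(h^{t-1}) = 0$ holding \emph{node by node}, which is the dynamic payoff of having $\beta_B + \beta_S \equiv 1$ rather than merely $\beta_B + \beta_S \leq 1$.
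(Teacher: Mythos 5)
Your proposal is correct and follows essentially the same route as the paper: the forward direction applies Corollary \ref{cor_all_int_mech} and then uses $\Pi^{\beta}(h^{t-1})=\left(1-\beta_{B}(h^{t-1})-\beta_{S}(h^{t-1})\right)\Pi^{*}(h^{t-1})=0$ to force $\beta_{B}+\beta_{S}=1$, while the converse uses Lemma \ref{l_pe} for incentive compatibility and the node-by-node AGV-style redistribution $x^{\dagger}=x^{\beta}_{S}(c\vert h^{t-1})+\left[x^{\beta}_{B}(v\vert h^{t-1})-\bar{x}^{\beta}_{B}(h^{t-1})\right]$, exactly as in the paper's construction. Your added care about the $\Pi^{*}(h^{t-1})=0$ histories and the explicit verification that the redistribution preserves interim transfers (and hence $\tilde{U}^{\beta}=U^{\beta}$) only makes the argument more complete.
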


Corollary \ref{cor_expost_charac} provides a precise characterization of interim transfers or expected utility vectors for the class of all incentive compatible tight mechanisms that implement the efficient allocation under ex post budget balance. Given interim transfers, the ex post transfers can be constructed in a manner analogous to Equations (\ref{transfers_expost0}) and (\ref{transfers_expost}).

Before we end the implementation section, a comment on posted prices and double auctions is in order. When implementing under ex post budget balance, these two mechanisms from the static world immediately come to mind. Posted price mechanisms are robust to informational assumptions but demanding in terms of the underlying equilibrium notion.\footnote{A posted price mechanism is the one where a price is exogenously generated either deterministically or in a random fashion, and the buyer and seller simultaneously decide whether to trade at that price. \citet{hr1987robust} show that in the static bilateral trade problem the only mechanism that satisfies dominant strategy incentive compatibility and ex post individual rationality is the posted price mechanism.} No posted price mechanism can sustain the efficient allocation in a dynamic set up that takes expectations over future utility. Implementing through a double auction seems more hopeful. Though a repetition of a standard static double auction will constraint the mechanism to a form of limited liability, and hence won't sustain efficiency. A mechanism with a fixed payment every period followed by a Markovian double auction seems plausible. This is an interesting question for future work.

\section{Information in dynamic mechanisms}

The equivalence between implementation under interim and ex post budget balance can leave one in a dilemma about the structure of our model. While the former takes expectations over the entire future sequence of transfers, the latter involves no expectation, is independent of future transfers and does not allow for any "leakages" whatsoever. In fact, the total expected utility of the two agents in the model with ex post budget balance in weakly greater than that under interim budget balance, since no part of the economic surplus has to be paid to a third party. A intuitive economic meaning of this equivalence is in making irrelevant the desirability of an intermediary. How and when must the intermediary become salient to the production of the efficient economic surplus in this repeated bilateral trade setting?

The role of an intermediary can become salient if reports to her (or prices in an indirect mechanism) either need to be transparent or can be hidden. I expand on both cases.

\subsection{Robustness to information}
\label{rob_inf}

What if the agents observe reports by the other and before revealing their type or taking a decision to walk away. The incentive and individual rationality constraints must then incorporate a robustness to the current type of the other agent. \citet{athey2007efficiency} motivate it in the following manner: ".. if a player were able to delay his announcement or spy on the other player's information, he might be tempted to make an untruthful announcement once learning about the other player's type." When issues of information leakage or transparency are a concern it may be better to focus on an equilibrium concept where "truthful announcement is incentive compatible regardless of any information he may learn that is not relevant to his own payoff." Hence follows (with-in period) ex post incentive compatibility.
\begin{definition}
A mechanism $m=\left\langle {\bf p,U}\right\rangle$ satisfies ex post incentive compatibility if
\[U_{B}(v_{t},c_{t} \vert h^{t-1}) \geq U_{B}(v'_{t},c_{t};v_{t}\vert h^{t-1})\quad \text{and} \quad U_{S}(v_{t},c_{t}\vert h^{t-1}) \geq U_{S}(v_{t},c'_{t}; c_{t}\vert h^{t-1})\]
$\forall v_{t}, v'_{t} \in \mathcal{V}$, $\forall c_{t}, c'_{t} \in \mathcal{C}$, $\forall h^{t-1} \in H^{t-1}$, $\forall t$.
\end{definition}

Ex post individual rationality can be analogously defined. \citet{athey2007efficiency} and \citet{miller2012collusion} show that efficiency cannot be implemented under ex post budget balance under the stronger notions of ex post incentive compatibility and individual rationality for iid types; it is easy to see that there is therefore no hope for Markovian private information.

In Section \ref{sec_expost} in the appendix, I show that an analogue of Lemma \ref{l_pe}, the history dependent payoff equivalence result, holds for ex post incentive compatibility. Following the construction of the ex post expected utility vectors in Section \ref{sec_charac}, we can state and prove a possibility of efficiency result for ex post incentive compatibility and individual rationality. I omit the proof for it follows the same steps as the proof of Theorem \ref{t_main}.

\begin{proposition}
There exists an ex post incentive compatible and individually rational mechanism that implements the efficient allocation under interim budget balance if and only if $\Pi^{*}(h^{t-1})\geq  0$ $\forall h^{t-1}$, $\forall t$.
\end{proposition}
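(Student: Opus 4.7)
Necessity follows directly from Theorem \ref{t_main}. Ex post incentive compatibility implies Bayesian incentive compatibility by integrating the ex post inequalities against the conditional distribution of the other agent's current type, and ex post individual rationality likewise implies the period-wise (Bayesian) individual rationality used in Theorem \ref{t_main}. Hence any ex post incentive compatible and individually rational mechanism implementing the efficient allocation under interim budget balance satisfies the hypotheses of Theorem \ref{t_main}, forcing $\Pi^{*}(h^{t-1})\geq 0$ at every history.

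For sufficiency, I would replicate the three-step construction of Section \ref{sec_charac} verbatim, substituting the ex post analogue of Lemma \ref{l_pe} from Section \ref{sec_expost} for Lemma \ref{l_pe} itself. The starting point $\langle \mathbf{p}^{vcg}, \mathbf{U}^{vcg}\rangle$ is ex post incentive compatible because (i) the discrete-types static VCG is dominant-strategy (hence ex post) incentive compatible within each period, and (ii) the independence of the two Markov type processes means that a deviation by one agent does not alter the distribution of the other agent's future reports, so that the one-shot deviation principle lifts within-period ex post incentive compatibility to the full dynamic mechanism. It is also ex post individually rational by construction. The min-max step then adjusts the utilities by quantities that depend on the history $h^{t-1}$ and on the other agent's current type, but \emph{not} on the deviating agent's own current report; the ex post payoff equivalence is tailored to preserve ex post incentive compatibility under precisely this class of translations, and since we subtract only an infimum of ex post utilities, ex post individual rationality is preserved as well. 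Finally, with $\Pi^{*}(h^{t-1})\geq 0$ in hand, I redistribute surplus via the same sharing construction used in the proof of Proposition \ref{p_ante_post}, adding $\beta(h^{t-1})\Pi^{*}(h^{t-1})$ to the buyer's and $(1-\beta(h^{t-1}))\Pi^{*}(h^{t-1})$ to the seller's utility for any choice of $\beta(h^{t-1})\in[0,1]$. This delivers interim budget balance, preserves ex post incentive compatibility through the same payoff equivalence (the added quantities are again independent of current reports), and preserves ex post individual rationality because only nonnegative quantities are added.

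The main obstacle is ensuring that every translation employed along the way falls within the admissible class for which the ex post analogue of Lemma \ref{l_pe} is valid, i.e.\ translations that are independent of the deviating agent's own current type. Once that lemma from Section \ref{sec_expost} is in place, the entire machinery developed for Theorem \ref{t_main}---the min-max extraction of agent surplus and the surplus-redistribution step behind Proposition \ref{p_ante_post}---transports mechanically to the ex post setting, and no new characterization condition beyond $\Pi^{*}(h^{t-1})\geq 0$ appears.
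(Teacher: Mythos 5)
Your proposal is correct and follows essentially the same route as the paper, which omits the proof precisely because it "follows the same steps as the proof of Theorem \ref{t_main}" once the ex post payoff-equivalence result (Lemma \ref{l_pe2} in Section \ref{sec_expost}) replaces Lemma \ref{l_pe}: the min-max dynamic VCG mechanism is ex post incentive compatible and individually rational, and the translations used in its construction depend only on $h^{t-1}$ and the other agent's current type, exactly the admissible class. The only remark is that your final surplus-redistribution step is superfluous for this proposition, since under $\Pi^{*}(h^{t-1})\geq 0$ the min-max dynamic VCG mechanism itself already satisfies interim budget balance.
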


Therefore, with the help of an intermediary one can ensure robustness to information in the sense just described, which is impossible under ex post budget balance.

\subsection{Hiding information?}

\label{sec_hi}

What if on the other hand the intermediary can hide information from the agents? Would it help from the perspective of efficiency? \citet{myerson1986multi} established that maximal information to the mediator and minimal information to the agents produces the most permissible results. This subsection, considers implementation under interim and ex post budget balance for a mechanism that hides information from the agents.

I will exposit the key ideas through what I call the \emph{intermediate mechanism}. Every period the agents only learn about the other agent's reported type through the realized allocation, that is whether trade happens. Thus, in the event of a trade the buyer only learns that the seller's reported type was less than his, and seller that the buyer's reported type was more than her type. No trade gives the opposite information. Importantly, transfers are observable and must be measurable with respect to this information.

Consider the simple trading problem (STP): $\mathcal{V} = \left\{v_{H},v_{L}\right\}$, $\mathcal{C} = \left\{c_{H},c_{L}\right\}$, and $v_{H}>c_{H}>v_{L}>c_{L}$. Buyer who reports type $v_{L}$ learns the type reported by the seller: it could only have been $c_{L}$ in case of trade or $c_{H}$ in case of no trade, because after any history $h^{t-1}$, $p^{*}(v_{L},c_{L}\vert h^{t-1}) =1$ and $p^{*}(v_{L},c_{H}\vert h^{t-1}) = 0$. On the other hand buyer of type $v_{H}$ learns nothing about the seller's report: $p^{*}(v_{H},c_{H}\vert h^{t-1}) = p^{*}(v_{H},c_{L}\vert h^{t-1}) = 1$. Analogously, if the seller reports type $c_{L}$ she learns the type reported by the buyer, and if she reports $c_{H}$ she does not learn anything.\footnote{For the efficient allocation observing trade is equivalent to observing the actual probability of trade since it takes only two values: 0 or 1. For the second-best these two are not the same for a trade probability between 0 and 1 would mean being able to observe the randomization device of the mediator.}

A little bit of notation follows. For any buyer type $v\in\mathcal{V}$, define
\[\left[v\right]_{1} = \left\{c\in \mathcal{C} \vert c<v\right\} \quad \text{and} \quad \left[v\right]_{0} = \left\{c\in \mathcal{C} \vert c>v\right\} \]
Fix the efficient allocation. Whenever the buyer reports a type $v$ and trade takes place he learns that the seller's type lies in the set $\left[v\right]_{1}\subset\mathcal{C}$, and when trade does not take place he learns that in must lie in $\left[v\right]_{0}\subset\mathcal{C}$. Similarly for any $c\in \mathcal{C}$, define
\[\left[c\right]_{1} = \left\{v\in \mathcal{V} \vert c<v\right\} \quad \text{and} \quad \left[c\right]_{0} = \left\{v\in \mathcal{V} \vert c>v\right\} \]

Starting from $\hat{h}^{1}_{B} = \left\{\hat{v}_{1}\right\}$ and $\hat{h}^{1}_{S} = \left\{\hat{c}_{1}\right\}$, the private histories before the report at time $t+1$ are defined by:
\[\hat{h}^{t+1}_{B} = \left\{h^{t}_{B}, p^{*}_{t}, v_{t}, \left[v_{t}\right]_{p^{*}_{t}}, \hat{v}_{t+1}\right\} \quad \text{and} \quad \hat{h}^{t+1}_{S} = \left\{h^{t}_{S}, p^{*}_{t}, c_{t}, \left[c_{t}\right]_{p^{*}_{t}}, \hat{c}_{t+1}\right\}\]
where $\left(\hat{v}_{t}, \hat{c}_{t}\right)$ and $(v_{t},c_{t})$ are respectively the actual and reported types at time $t$, and $p^{*}_{t} = p^{*}(v_{t},c_{t})$ is the observed efficient trading rule. Note that for the mechanism designer observes the history of all reports: $h^{t} = \left(v^{t},c^{t}\right)$. Moreover, along the truthful history we succinctly define the information partition of the agents as:
\[h^{t+1}_{B} = \left\{h^{t}_{B}, v_{t+1}, p^{*}_{t+1}, \left[v_{t+1}\right]_{p^{*}_{t+1}}\right\} \quad \text{and} \quad h^{t+1}_{S} = \left\{h^{t}_{S}, c_{t+1}, p^{*}_{t+1}, \left[c_{t+1}\right]_{p^{*}_{t+1}}\right\}\]
starting at $h^{1}_{B} = \left\{v_{1}, p^{*}_{1}, \left[v_{1}\right]_{p^{*}_{1}}\right\}$ and $h^{1}_{S} = \left\{c_{1}, p^{*}_{1}, \left[c_{1}\right]_{p^{*}_{1}}\right\}$. The partition of public histories into private history equivalence classes is defined as
\[\left[h^{t}_{B}\right] = \left\{h^{t}=\left(v^{t},c^{t}\right)\in H^{t}\mid c_{\tau}\in \left[v_{\tau}\right]_{p^{*}_{\tau}} \forall \tau\leq t\right\} \text{ and } \left[h^{t}_{S}\right] = \left\{h^{t}=\left(v^{t},c^{t}\right)\in H^{t}\mid v_{\tau}\in \left[c_{\tau}\right]_{p^{*}_{\tau}} \forall \tau\leq t\right\} \]

The min-max dynamic VCG intermediate mechanism defines the expected utility of the agents at public history $h^{t}$ to be
\[U^{**}(h^{t}_{B}) = \mathbb{E}\left[U_{B}^{*}(\tilde{h}^{t})\mid \tilde{h}^{t}\in \left[h^{t}_{B}\right]\right] \quad \text{and} \quad U^{**}(h^{t}_{S}) = \mathbb{E}\left[U_{S}^{*}(\tilde{h}^{t})\mid \tilde{h}^{t}\in \left[h^{t}_{S}\right]\right] \]
where I use the short hand $U_{i}^{*}(h^{t}) = U^{*}_{i}(v_{t},c_{t}\vert h^{t-})$ for $i=B,S$. As a final step, the expected budget surplus generated by the mechanism is given by
\begin{equation*}
\Pi^{**}(h^{t-1}) = \mathbb{E}\left[\sum\limits_{\tau=t}^{T}\delta^{\tau -t}\left(v_{\tau} - c_{\tau}\right)p_{\tau} - U^{**}_{B}(\tilde{h}_{B}^{t}) - U^{**}_{S}(\tilde{h}_{S}^{t}) \mid h^{t-1}\right]
\end{equation*}
Then, we have the characterization of efficiency for the intermediate mechanism.
\begin{proposition} \label{result intermediate mechanism}
There exists an incentive compatible and individually rational intermediate mechanism that implements the efficient allocation under interim budget balance if and only if $\Pi^{**}(h^{t-1})\geq  0$ $\forall h^{t-1}$, $\forall t$.
\end{proposition}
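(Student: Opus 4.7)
The plan is to mirror the proof of Theorem \ref{t_main}, replacing the public-mechanism min-max dynamic VCG benchmark with its intermediate-mechanism analogue $\langle {\bf p^{*}}, {\bf U^{**}}\rangle$ already constructed just before the statement. Since in the intermediate mechanism the agents only distinguish public histories up to the equivalence classes $[h^{t}_{B}]$ and $[h^{t}_{S}]$, any admissible transfer scheme must be measurable with respect to this coarser partition; conditionally averaging $U^{*}$ over the relevant equivalence class is therefore the natural analogue of the "lowest-type surplus extraction" step, and by construction it yields the tightest possible intermediate mechanism consistent with binding IR of the worst-case type at each coarse history.

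For sufficiency, I would first verify that $\langle {\bf p^{*}}, {\bf U^{**}}\rangle$ is incentive compatible and individually rational under the intermediate information structure. IC follows by averaging the pointwise IC inequalities of the underlying $\langle {\bf p^{*}}, {\bf U^{*}}\rangle$ over $[h^{t}_{B}]$ and $[h^{t}_{S}]$, using the FOSD-monotonicity of the conditional beliefs to preserve local deviation inequalities; IR is inherited because the conditional average of non-negative quantities is non-negative. Given the hypothesis $\Pi^{**}(h^{t-1}) \geq 0$ everywhere, I would then invoke an intermediate-mechanism version of Lemma \ref{l_pe} in which the history-dependent constants $a_{B}(h^{t-1})$, $a_{S}(h^{t-1})$ are required to be measurable with respect to the coarser observable partition, and, as in the proof of Proposition \ref{p_ante_post}, choose $a_{B}(h^{t-1}) + a_{S}(h^{t-1}) = \Pi^{**}(h^{t-1})$ at every node, thereby zeroing the per-period expected surplus and delivering interim budget balance.

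For necessity, I would fix any incentive compatible and individually rational intermediate mechanism $\langle {\bf p^{*}}, {\bf U'}\rangle$ implementing the efficient allocation under interim budget balance. The intermediate-mechanism payoff-equivalence argument, combined with the fact that ${\bf U^{**}}$ is the coarsely-measurable utility vector that just binds IR of the worst type at every history, implies $U'_{B}(h^{t}_{B}) \geq U^{**}_{B}(h^{t}_{B})$ and $U'_{S}(h^{t}_{S}) \geq U^{**}_{S}(h^{t}_{S})$ along truthful histories. Since the expected budget surplus is total efficient surplus minus the two agents' expected utilities, this domination gives $\Pi^{**}(h^{t-1}) \geq \Pi'(h^{t-1}) \geq 0$, which is the required inequality.

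The main obstacle is the IC verification in the first step of sufficiency: because a buyer's coarse private history $h^{t}_{B}$ depends on his own past reports through the realized allocations $p^{*}_{\tau}$ and the induced sets $[v_{\tau}]_{p^{*}_{\tau}}$, a misreport today reshapes tomorrow's information partition, so one must check carefully that averaging $U^{*}$ over equivalence classes and allowing only partition-measurable deviations preserves the one-shot deviation inequalities. Establishing the intermediate analogue of Lemma \ref{l_pe} — namely, that translations by partition-measurable constants $(a_{B}, a_{S})$ keep the mechanism incentive compatible — is the other technical nut to crack; once both are in hand, the remainder of the argument is a direct transcription of the proofs of Proposition \ref{p_ante_post} and Theorem \ref{t_main}.
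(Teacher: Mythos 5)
Your proposal takes essentially the same route as the paper: the paper's own proof of this proposition is only a three-sentence sketch that defers to the proof of Theorem \ref{t_main}, naming exactly the three ingredients you identify — tight mechanisms adapted to the coarser information partition, an intermediate-mechanism version of the payoff-equivalence Lemma \ref{l_pe} with partition-measurable constants, and the min-max dynamic VCG intermediate benchmark $\left\langle {\bf p^{*},U^{**}}\right\rangle$ as the surplus-maximizing mechanism. Your write-up is in fact more explicit than the paper's, correctly flagging the one genuinely delicate step (that a misreport reshapes the agent's future information partition, so the averaged IC inequalities need direct verification) which the paper leaves implicit.
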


What about ex post budget balance? It is quite straightforward to note that the equivalence between ex post and interim budget balance breaks down for the intermediate mechanism. The measurability restriction on transfers does not allow any more information other than the allocation rule to be revealed to the agents. Under ex post budget balance it culminates into a unique price for trade, and a unique one for no trade, This in turn precludes the possibility of efficiency in most environments of interest. For example, in the (STP), since $v_{H}>c_{H}>v_{L}>c_{L}$, in order to get trade when types realization is $(v_{L},c_{L})$, the price $p$ must be such that $v_{L}\geq p \geq c_{L}$; a restriction to unique price for trade would then immediately rule out the trade for type realizations $(v_{H},c_{H})$. In the dynamic model, trade can be conditioned on how many trades have taken place in the past, however, this static inefficacy remains pervasive.

\subsection{Carefully interpreting a folk proposition}

The above observation is of course limited to the intermediate mechanism. As we allow more permissibility in the measurability restriction, more prices can be employed to sustain trade, however, the most "price discrimination" is achieved in the public mechanism. In fact the public mechanism allows discriminatory prices on the basis of past and current types, resulting in the greatest possibility of efficiency (as long as transfers are observable). While a deeper analysis of dynamic mechanism design with a parametrization of how much information is shared amongst agents is beyond the scope here, we can still point towards a more careful interpretation of the observation made in \citet{myerson1986multi} that \emph{less information sharing is better in terms of allocative efficiency in dynamic models of mechanism design.}  To be sure, there is no error here, Myerson is absolutely correct in pointing out that less information sharing reduces information rents by pooling incentive constraints, however, measurability restrictions imposed by observability of transfers in ex post budget balance is not explicitly considered in that framework.

It is useful to note that if we are only interested in ex ante budget balance, then there is not difference between the intermediate or public mechanism in terms of when efficiency can be sustained. It is straightforward to see that first period expected budget surplus are the same in both mechanisms: $\Pi^{*}=\Pi^{**}$; however the same is not true about $\Pi^{*}(v,c)$ and $\Pi^{**}(v,c)$. In fact there is no obvious way to rank the vector. For example, for the (STP):
\[\Pi^{**}(v_{H},c_{H}) > \Pi^{*}(v_{H},c_{H}), \Pi^{**}(v_{H},c_{L}) < \Pi^{*}(v_{H},c_{L}), \Pi^{**}(v_{L},c_{H}) = \Pi^{*}(v_{L},c_{H}).\]

Using these facts, I make three economically meaningful observations. First, with observability of transfers the most permissive results under ex post budget balance are obtained for the public mechanism. Second, in the event that prices do not communicate all possible underlying information, it is welfare improving to have an intermediary (interim budget balance does better than ex post budget balance). Third, even if the intermediary has the ability to hide information, it may choose not to; its decision will depend on how it evaluates the vectors $\Pi^{*}$ and $\Pi^{**}$.

\section{Conclusion}

I studied the repeated bilateral trade problem and the viability of various institutional rules to sustain efficient trade. In particular I introduced a novel notion of budget balance that allows for an intermediary to broker trade in a self enforcing manner. In the conclusion I briefly explore some questions for future work.

This paper is concerned with implementing the first-best outcome. It would be interesting to look at the second best when the first-best cannot be implemented, for example the allocation rule that maximizes the gains from trade. With general Markovian types, we would run into the problem of binding global incentive constraints (see \citet{battaglini2015optimal}). Solving the AR(1) or two types model should be promising. Relatedly, figuring out the optimal dynamic double auction as a pricing implementation of the second-best allocation rule would be important (see \citet{ad2021} for a related model of dynamic double auction with applications to financial markets).

Understanding how the set implementable allocations change under various institutional arrangements as the level of information flow between agents is made more or less Blackwell informative is an interesting question to pursue in future work. Even when the direction is obvious: less information expands the feasible set, the exact nature or rate of expansion may not be obvious. We observed that the direction of results would also change when ex post budget balance and observability is required.

Interim budget balance does not allow the mechanism designer to draw from past surplus. It can be viewed as a constraint on the intermediary's commitment power. While this would a reasonable assumption in many contexts and an interesting benchmark in it own right, it is also important to note that the family of constraints defining interim budget balance can be generalized to a class where the mechanism designer is allowed to save. This is a potentially important area with connections to the macro-finance literature.

Finally, we assumed here that the agents have deep pockets. It is plausible to think of situations where the agents are cash constrained. There is limited understanding in dynamic mechanism design on the nature of optimal or even feasible allocations under limited liability restrictions (see \citet{chassang2012} and \citet{kl2021} for some recent work). This is also an interesting direction for future work.

\section{Appendix}

\subsection{Tight mechanisms}
\label{sec_tight}

Since we operate in a discrete type space, standard Myersonian techniques are not readily available, see for example \citet{kos2013extreme} on this. In order to get around that problem, for some of the results we will restrict attention to the set of mechanisms where local incentive constraints bind. From \citet*{pavan2014dynamic} and \citet{battaglini2015optimal}, we know that these are indeed the "right" set of mechanisms to consider: under first-order stochastic dominance, monotonic allocations that satisfy local incentive constraints are incentive compatible.

Let $IC^{B}_{i,i-1}(h^{t-1})$ be the incentive compatibility constraint that states that after (public) history $h^{t-1}$, buyer of type $i$ does not want to misreport to be type $i-1$. Similarly, let $IC^{S}_{j,j+1}(h^{t-1})$ be the incentive compatibility constraint that states that after (public) history $h^{t-1}$, seller of type $j$ does not want to misreport to be type $j+1$. Since numerically larger types are better for the buyer and worse for the seller, the correct local incentive constraints to consider in the discrete type space problem are downward ($i\mapsto i-1$) for the buyer and upward ($j\mapsto j+1$) for the seller respectively. Define
\[IC^{local}(h^{t-1}) = \left\{IC^{B}_{i,i-1}(h^{t-1}), IC^{S}_{j,j+1}(h^{t-1}) \mid i=2,...N, j=1,...,M-1 \right\}\]
\[IC^{local} = \bigcup\limits_{t} \left\{IC^{local}(h^{t-1}) \text{ } \forall h^{t-1} \in H^{t-1}\right\}\]
Then, a tight mechanism is simply one where all constraints in $IC^{local}$ hold as equalities.

\begin{definition}
A mechanism $m=\left\langle {\bf p,U}\right\rangle$ is called \emph{tight} if all elements of the set $IC^{local}$ hold as equalities.
\end{definition}

Next, define a monotonic allocation. To operationalize the notion of monotonicity, we put a partial order on the set of histories: for any $h^{t} = \left(v^{t},c^{t}\right)$ and $\widehat{h}^{t} = \left(\hat{v}^{t},\hat{c}^{t}\right)$ $\in H^{t}$, $h^{t}\succeq \widehat{h}^{t}$ if $v_{\tau} \geq \hat{v}_{\tau}$ and $c_{\tau} \leq \hat{c}_{\tau}$ for all $\tau\leq t$.

\begin{definition}
An allocation ${\bf p}$ is monotonic if $h^{t}\succeq \widehat{h}^{t}$ $\Rightarrow$ $p(v_{t}\vert h^{t-1}) \geq p(\hat{v}_{t}\vert h^{t-1})$ and  $p(c_{t}\vert h^{t-1}) \leq p(\hat{c}_{t}\vert h^{t-1})$ for all $v_{t}\geq \hat{v}_{t}$ and $c_{t}\leq \hat{c}^{t}$.
\end{definition}

A well known result then follows. The proof parallels the proof of Proposition 2 in \citet{battaglini2015optimal} and is omitted.

\begin{lemma}
Suppose $m=\left\langle {\bf p,U}\right\rangle$  is such that ${\bf p}$ is monotonic and all constraints in $IC^{local}$ hold as equalities. Then, $m$ is incentive compatible.
\end{lemma}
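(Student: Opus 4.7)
The strategy is to invoke the one-shot deviation principle, whose validity here rests on the full-support assumption on the Markov kernels (which keeps beliefs after any deviation consistent with truth-telling). It then suffices to show, at every public history $h^{t-1}$ and every true type $v_i$, that no misreport $v_k \ne v_i$ is profitable; by symmetry I focus on downward deviations of the buyer, $k < i$, leaving upward buyer deviations and the seller side to parallel treatments.

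I proceed by backward induction on $t$. The base case $t = T$ is the static discrete-types result: under monotonic $p$, binding local (downward) IC constraints iterate into full IC. For the inductive step, using the recursive identity (\ref{eqxU}) and the equalities $U_B(v_l \vert h^{t-1}) = U_B(v_{l-1}; v_l \vert h^{t-1})$ for $l = k+1,\ldots,i$, a telescoping calculation yields
\begin{align*}
U_B(v_i \vert h^{t-1}) - U_B(v_k \vert h^{t-1}) = \sum_{l=k+1}^{i} (v_l - v_{l-1})\, p(v_{l-1} \vert h^{t-1}) + \delta \sum_{l=k+1}^{i} \sum_j U_B(v_{j,t+1} \vert h^{t-1}, v_{l-1})\,\big(f(v_j \vert v_l) - f(v_j \vert v_{l-1})\big).
\end{align*}
Subtracting the analogous expression for $U_B(v_k; v_i \vert h^{t-1}) - U_B(v_k \vert h^{t-1})$ (read off directly from the definition of the misreport utility), the incentive gap $U_B(v_i \vert h^{t-1}) - U_B(v_k; v_i \vert h^{t-1})$ splits into a current-period piece $\sum_{l=k+1}^{i} (v_l - v_{l-1})\,[p(v_{l-1} \vert h^{t-1}) - p(v_k \vert h^{t-1})]$ and a continuation piece
\begin{align*}
\delta \sum_{l=k+1}^{i} \sum_j \big[U_B(v_{j,t+1} \vert h^{t-1}, v_{l-1}) - U_B(v_{j,t+1} \vert h^{t-1}, v_k)\big]\,\big(f(v_j \vert v_l) - f(v_j \vert v_{l-1})\big).
\end{align*}

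The current-period piece is non-negative by monotonicity of the interim trading probability $p(\cdot \vert h^{t-1})$ in the buyer's report (since $v_{l-1} \ge v_k$ for $l \ge k+1$). The continuation piece is the main obstacle, and the place where the argument is genuinely dynamic. I would handle it by Abel summation in $j$: FOSD of $F(\cdot\vert\cdot)$ ensures that the partial sums $\sum_{j' \le j}\big(f(v_{j'}\vert v_l) - f(v_{j'}\vert v_{l-1})\big)$ are non-positive, and monotonicity of $\mathbf{p}$ carried along continuation histories (together with the inductive hypothesis, so that continuation utilities really are the value functions of an IC continuation mechanism) delivers the dynamic single-crossing property that $j \mapsto U_B(v_{j,t+1} \vert h^{t-1}, v_{l-1}) - U_B(v_{j,t+1} \vert h^{t-1}, v_k)$ is non-decreasing. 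These two ingredients combine term-by-term to sign the continuation piece non-negatively. This single-crossing step mirrors the corresponding step in Battaglini (2015, Proposition 2) and is the real work; once it is in hand, upward buyer deviations follow from the same telescope in the opposite direction (again invoking monotonicity of $p$), and the seller side is symmetric after reversing the direction of local constraints and using FOSD of $G(\cdot\vert\cdot)$ in $c$.
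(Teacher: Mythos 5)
Your argument is correct and is precisely the proof the paper has in mind: the paper omits the details and points to Proposition 2 of \citet{battaglini2015optimal}, and your telescoping of the binding local constraints, the split into a current-period term signed by monotonicity of $p(\cdot\vert h^{t-1})$ and a continuation term signed by summation by parts against the FOSD partial sums, is exactly that argument. The only piece left implicit --- that $j \mapsto U_{B}(v_{j,t+1}\vert h^{t-1},v_{l-1}) - U_{B}(v_{j,t+1}\vert h^{t-1},v_{k})$ is non-decreasing --- does require its own nested induction along continuation histories, but you correctly identify it as the crux and the monotonicity-of-${\bf p}$-across-histories ingredient that drives it.
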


Since the efficient allocation is monotonic, it follows that all tight mechanisms implementing the efficient allocation are incentive compatible. It is important to note that in the continuous types limit of the model, owing to the dynamic envelope theorem, \emph{all} incentive compatible mechanisms are tight mechanisms.

\subsection{Ex post incentive compatibility and proof of Lemma \ref{l_pe}}

\label{sec_expost}

Recollect the notion of (with-in period) ex post incentive compatibility defined in section \ref{rob_inf}. We use it to state and prove a stronger version of Lemma \ref{l_pe}.

Now, ex post incentive compatibility requires truthtelling for every possible current type of the other agent and every possible public history. This was introduced in \citet{athey2007efficiency}. A partial payoff equivalence result for ex post incentive compatibility follows.

\begin{lemma}
\label{l_pe2}
If $\left\langle {\bf p,U}\right\rangle$ is ex post incentive compatible mechanism, and $\left\langle {\bf p,\tilde{U}}\right\rangle$ is another mechanism such that
\[\tilde{U}_{B}(v_{t},c_{t}\vert h^{t-1}) = U_{B}(v_{t},c_{t}\vert h^{t-1})  + a_{B}(c_{t}, h^{t-1}) \text{ }\forall v_{t},  \text{ and}\]
\[ \tilde{U}_{S}(v_{t},c_{t}\vert h^{t-1})= U_{S}(v_{t},c_{t}\vert h^{t-1})  + a_{S}(v_{t}, h^{t-1}) \text{ }\forall c_{t}\]
for a family of constants $\left(a_{B}(c_{t}, h^{t-1}), a_{S}(v_{t}, h^{t-1})\right)$, then $\left\langle {\bf p,\tilde{U}}\right\rangle$ is also incentive compatible.
\end{lemma}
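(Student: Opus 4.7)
The plan is to run the same one-shot-deviation bookkeeping used for Lemma \ref{l_pe}, but to exploit the fact that ex post incentive compatibility fixes the other agent's current type, so the translation constant is allowed to depend on $c_t$ (resp.\ $v_t$) rather than just on $h^{t-1}$. Concretely, I would focus on the buyer's constraint at some public history $h^{t-1}$ for a fixed pair $(v_t,c_t)$, compute $\tilde U_B(v'_t,c_t;v_t\mid h^{t-1})$ using the recursive definition given in the paper, and show that the gap between $\tilde U_B(v_t,c_t\mid h^{t-1})$ and $\tilde U_B(v'_t,c_t;v_t\mid h^{t-1})$ coincides exactly with the gap in the original mechanism.

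The computation has two pieces. First, the ``static'' piece: $\tilde U_B(v'_t,c_t\mid h^{t-1})=U_B(v'_t,c_t\mid h^{t-1})+a_B(c_t,h^{t-1})$, and $\tilde U_B(v_t,c_t\mid h^{t-1})=U_B(v_t,c_t\mid h^{t-1})+a_B(c_t,h^{t-1})$, so the same $a_B(c_t,h^{t-1})$ appears on both sides of the IC comparison and cancels. Second, the continuation piece: taking conditional expectations over $c_{t+1}$ given $c_t$, I get
\[
\tilde U_B(v_{i,t+1}\mid h^{t-1},v'_t,c_t)=U_B(v_{i,t+1}\mid h^{t-1},v'_t,c_t)+A_B(v'_t,c_t,h^{t-1}),
\]
where $A_B(v'_t,c_t,h^{t-1}):=\sum_j a_B\!\bigl(c_{j,t+1},(h^{t-1},v'_t,c_t)\bigr)\,g(c_{j,t+1}\mid c_t)$ does not depend on the index $i$. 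Substituting into the one-shot misreport formula, the correction to the continuation term is
\[
\delta\,A_B(v'_t,c_t,h^{t-1})\sum_{i=1}^N\bigl(f(v_{i,t+1}\mid v_t)-f(v_{i,t+1}\mid v'_t)\bigr)=0,
\]
since the probabilities sum to one under either conditioning. Thus $\tilde U_B(v_t,c_t\mid h^{t-1})-\tilde U_B(v'_t,c_t;v_t\mid h^{t-1})=U_B(v_t,c_t\mid h^{t-1})-U_B(v'_t,c_t;v_t\mid h^{t-1})\geq 0$, with the last inequality by ex post IC of the original mechanism. The seller's half is symmetric after swapping roles and using $\sum_j\bigl(g(c_{j,t+1}\mid c_t)-g(c_{j,t+1}\mid c'_t)\bigr)=0$.

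To upgrade these one-shot deviation inequalities to full ex post incentive compatibility I would invoke the one-shot deviation principle as in the discussion following the definition of incentive compatibility in the paper; this uses the full-support assumption on the Markov kernels so that beliefs put probability one on truth-telling remain consistent after any report, and (for $T=\infty$) a uniform bound on the $a_B(c_t,h^{t-1})$ and $a_S(v_t,h^{t-1})$ to guarantee continuity at infinity (this is the analogue of the $C$-bound in Lemma \ref{l_pe} and I would add it either as an assumption or as a standing convention). The step I expect to be most delicate is purely notational: making sure that the recursive definitions of $\tilde U_B(v'_t,c_t;v_t\mid h^{t-1})$ and of the expectation $\tilde U_B(v_{i,t+1}\mid h^{t-1},v'_t,c_t)$ use the continuation history $(h^{t-1},v'_t,c_t)$ that the misreporter actually induces, so that the constants $a_B(\cdot,(h^{t-1},v'_t,c_t))$ enter the continuation uniformly in $v_{i,t+1}$ and therefore factor out of the sum over $i$. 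Once this is set up carefully, the cancellation is automatic and the lemma follows.
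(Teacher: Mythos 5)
Your proposal is correct and follows essentially the same argument as the paper: the constant $a_B(c_t,h^{t-1})$ cancels from both sides of the within-period constraint because it does not depend on the buyer's report, and the constants entering through the continuation utilities average (over $c_{t+1}$ given the reported history) to a quantity independent of the index $i$, which then vanishes against $\sum_i\bigl(f(v_{i,t+1}\mid v_t)-f(v_{i,t+1}\mid v'_t)\bigr)=0$. The paper organizes this by tracking the two classes of constraints in which a given $U_B(v_t,c_t\mid h^{t-1})$ appears rather than by verifying each translated constraint directly, but the cancellations are identical; your added remark about needing a uniform bound on the constants for the infinite-horizon one-shot-deviation step is a fair point of care that the paper's statement of this lemma leaves implicit.
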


\begin{proof}
Suppose $\left\langle {\bf p,U}\right\rangle$ is ex post incentive compatible. Fix $h^{t-1}$. Then, $U_{B}(v_{t},c_{t}\vert h^{t-1})$ appears in two kinds of incentive compatibility constraints. First,
\[U_{B}(v_{t},c_{t}\vert h^{t-1}) \geq U_{B}(v'_{t},c_{t}\vert h^{t-1}) + (v_{t}-v'_{t})p(v'_{t},c_{t}\vert h^{t-1})\]
\[ \qquad \qquad \qquad+  \delta \sum\limits_{i=1}^{N} U_{B}(v_{i,t+1}\vert h^{t-1},v'_{t},c_{t}) \left(f(v_{i,t+1}\vert v_{t}) - f(v_{i,t+1}\vert v'_{t})\right)\]
Clearly, for any fixed $c_{t}\in \mathcal{C}$, the addition of the constants $a_{B}(c_{t}, h^{t-1})$ to $U_{B}(v_{t},c_{t}\vert h^{t-1})$ for all $v_{t}\in\mathcal{V}$ does not affect any of these constraints.

Next, fix $v_{t-1}$. The second class of constraints we need to consider are

\[U_{B}(v'_{t-1},c_{t-1}\vert h^{t-2}) \geq U_{B}(v_{t-1},c_{t-1}\vert h^{t-2}) + (v'_{t-1}-v_{t-1})p(v_{t-1},c_{t-1}\vert h^{t-2})\]
\[ \qquad \qquad \qquad+  \delta  \sum\limits_{i=1}^{N} U_{B}(v_{i,t}\vert h^{t-2},v_{t-1},c_{t-1}) \left(f(v_{i,t}\vert v'_{t-1}) - f(v_{i,t}\vert v_{t-1})\right)\]

Again, for any fixed $c_{t}\in \mathcal{C}$, this leads to addition of $a_{B}(c_{t},h^{t-1})$ to $U(v_{t},c_{t}\vert h^{t-1})$ for all $v_{t}\in\mathcal{V}$ which drops out of the constraint.

Therefore, linear additions of constants as defined in the lemma preserves incentives.
\end{proof}

Taking expectations Lemma \ref{l_pe} follows as a corollary. Note that since the original min-max dynamic VCG mechanism constructed in section \ref{sec_charac} is independent of history and satisfies ex post incentive compatibility statically, it is easy to see that it is in fact ex post incentive compatible.

\subsection{Proof of Proposition \ref{p_ante_post} }

If there exists an implementable mechanism that satisfies ex post budget balance, then the same mechanism automatically satisfies interim budget balance. Conversely, suppose $\left\langle {\bf p,U}\right\rangle$ (and equivalently $\left\langle {\bf p,x}\right\rangle$) is an implementable mechanism that satisfies interim budget balance. Thus, $\Pi(h^{t-1})\geq 0$ for all $h^{t-1}$, for all $t$. Define the instantaneous expected budget surplus to be the following:
\[\Pi_{t}(h^{t-1}) = \Pi(h^{t-1}) - \delta \mathbb{E}\left[\Pi(h^{t})\vert h^{t-1}\right] \]
that is, $\Pi_{t}(h^{t-1}) = \mathbb{E}\left[x_{B}(v_{t},c_{t}\vert h^{t-1}) - x_{S}(v_{t},c_{t}\vert h^{t-1}) \mid h^{t-1} \right]$. Define a new mechanism $\left\langle {\bf p,\tilde{U}}\right\rangle$ as follows:
\[\tilde{U}_{B}(v_{t}\vert h^{t-1}) = U_{B}(v_{t}\vert h^{t-1})  + \beta\Pi(h^{t-1}) \text{ }\forall v_{t},  \text{ and}\]
\[ \tilde{U}_{S}(c_{t}\vert h^{t-1})= U_{S}(c_{t}\vert h^{t-1})  + (1-\beta)\Pi(h^{t-1}) \text{ }\forall c_{t}\]
for some number $\beta \in [0.1]$. Then, by Lemma \ref{l_pe} the new mechanism is incentive compatible, and interim budget balance implies it is individually rational.
Further, note that in this new mechanism, $\left\langle {\bf p,\tilde{U}}\right\rangle$, interim budget balance is still satisfied as an equality for every history: $\tilde{\Pi}(h^{t-1}) = 0$ for all $h^{t-1}$, for all $t$. Thus, we have $\tilde{\Pi}_{t}(h^{t-1}) = 0$ for all $h^{t-1}$, for all $t$.

Now, $\tilde{\Pi}_{t}(h^{t-1}) = 0$ gives us a mechanism that satisfies "ex ante budget balance in the static sense" for every history. As a final step, we use the construction typically employed in static mechanism design to create another mechanism $\left\langle {\bf p,\hat{x}}\right\rangle$ (and hence $\left\langle {\bf p,\hat{U}}\right\rangle$) that satisfies ex post budget balance. Define
\[\hat{x}(v_{t},c_{t}\vert h^{t-1}) = \tilde{x}_{S}(c_{t}\vert h^{t-1}) + \left[\tilde{x}_{B}(v_{t}\vert h^{t-1}) - \tilde{x}_{B}(.\vert h^{t-})\right]\]
where
\[\tilde{x}_{B}(.\vert h^{t-1}) = \mathbb{E}\left[x_{B}(v_{t},c_{t}\vert h^{t-1})\mid h^{t-1}\right] \]
Note that
\[\tilde{\Pi}_{t}(h^{t-1}) = \tilde{x}_{B}(.\vert h^{t-1}) - \tilde{x}_{S}(.\vert h^{t-1})= 0\]
and, thus
\[\hat{x}_{B}(v_{t}\vert h^{t-1}) = \tilde{x}_{B}(v_{t}\vert h^{t-1}) \text{  and  } \hat{x}_{S}(c_{t}\vert h^{t-1}) = \tilde{x}_{S}(c_{t}\vert h^{t-1})\]
Therefore, the mechanism $\left\langle {\bf p,\hat{x}}\right\rangle$, with ${\bf \hat{x}_{B} = \hat{x}_{S} = \hat{x}}$ satisfies ex post budget balance, and by construction, since the interim transfers and interim expected utility vectors in the mechanisms are the same as in $\left\langle {\bf p,\tilde{x}}\right\rangle$ (and $\left\langle {\bf p,\tilde{U}}\right\rangle$), it is implementable.

\subsection{Proof of Theorem \ref{t_main}}

The proof proceeds as described in the construction of the min-max dynamic VCG mechanism in section \ref{sec_charac}. Define a subset of constraints which will be used in the relaxed problem: incentive constraints in $IC^{local}$ and individual rationality constraints of the "lowest" types.
\[{\bf C}^{RP} = IC^{local} \bigcup \left\{IR^{B}_{1}(h^{t-1}), IR^{S}_{M}(h^{t-1}) \forall h^{t-1}\in H^{t-1} \forall t\right\} \]
First we show that for the dynamic VCG mechanism all constraints included in $IC^{local}$ are satisfied as equalities.
\begin{lemma}
\label{lem_vcg_eq}
For the dynamic VCG mechanism, $\left\langle {\bf p^{*},U^{vcg}}\right\rangle$, all constraints in $IC^{local}$ hold as equalities.
\end{lemma}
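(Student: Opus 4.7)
The plan is to exploit the two key features of the dynamic VCG mechanism: (a) the per-period allocation and transfers $p^{vcg}, x^{vcg}_{B}, x^{vcg}_{S}$ depend only on the current reports $(v_{t},c_{t})$ and not on the history $h^{t-1}$; and (b) the agents' types evolve via independent Markov processes, so the distribution of the seller's future types conditional on $c_{t-1}$ does not depend on the buyer's report at time $t$, and vice versa. Combining (a) and (b) yields a crucial ``continuation independence'' property: under truth-telling from $t+1$ onwards, the continuation utility $U^{vcg}_{B}(v_{k,t+1}\vert h^{t-1},v'_{t})$ is the same whether the buyer reported $v_{i}$ or $v_{i-1}$ at time $t$. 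I will isolate this as the first step.

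Second, I will compute the instantaneous expected payoff difference between truthful reporting of $v_{i}$ and the local downward misreport to $v_{i-1}$. Writing $v p(v\vert h^{t-1}) - x^{vcg}_{B}(v\vert h^{t-1})=\sum_{j}(v - \min\{v'\vert v'>c_{j,t}\})\mathbbm{1}[v>c_{j,t}] g(c_{j,t}\vert c_{t-1})$ and splitting the sum over $c_{j,t}$ into three cases ($c_{j,t}<v_{i-1}$, $v_{i-1}\leq c_{j,t}<v_{i}$, $c_{j,t}\geq v_{i}$), one finds that on the common trading region $\{c_{j,t}<v_{i-1}\}$ the VCG price $\min\{v'\vert v'>c_{j,t}\}$ is the same under both reports, so the contribution to the difference telescopes to $(v_{i}-v_{i-1})g(c_{j,t}\vert c_{t-1})$; on $\{v_{i-1}\leq c_{j,t}<v_{i}\}$ the truthful trade yields exactly zero net payoff since the VCG price equals $v_{i}$; and on $\{c_{j,t}\geq v_{i}\}$ both reports yield no trade. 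Summing gives the clean identity $[v_{i}p(v_{i}\vert h^{t-1})-x^{vcg}_{B}(v_{i}\vert h^{t-1})]-[v_{i-1}p(v_{i-1}\vert h^{t-1})-x^{vcg}_{B}(v_{i-1}\vert h^{t-1})] = (v_{i}-v_{i-1})p(v_{i-1}\vert h^{t-1})$.

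Third, I will plug both pieces into the definition of $U^{vcg}_{B}(v_{i-1};v_{i}\vert h^{t-1})$ spelled out earlier in the paper. Using continuation independence, the $\delta$-weighted continuation terms in $U^{vcg}_{B}(v_{i}\vert h^{t-1})$ and in $U^{vcg}_{B}(v_{i-1};v_{i}\vert h^{t-1})$ are precisely $\delta\sum_{k}\bar U_{k}f(v_{k,t+1}\vert v_{i})$ with the same coefficients $\bar U_{k}=U^{vcg}_{B}(v_{k,t+1}\vert h^{t-1},v_{i})=U^{vcg}_{B}(v_{k,t+1}\vert h^{t-1},v_{i-1})$, so they cancel. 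What remains is exactly the instantaneous identity of Step 2, which yields $U^{vcg}_{B}(v_{i}\vert h^{t-1})=U^{vcg}_{B}(v_{i-1};v_{i}\vert h^{t-1})$, i.e., the downward local IC for the buyer holds as an equality. The seller's upward local IC is symmetric: the analogous case split over $v_{i,t}$ against $c_{j},c_{j+1}$ combined with the fact that the VCG price $\max\{c'\vert v_{i,t}>c'\}$ is invariant across the two reports on the common trading region yields the parallel identity.

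The main obstacle I anticipate is getting the case analysis in Step 2 transparent and the bookkeeping of the ``gap'' terms right. In particular, one must verify that the modification of the classical VCG transfer to $\min\{v\vert v>c_{t}\}$ and $\max\{c\vert v_{t}>c\}$, which was introduced precisely to handle discreteness, makes the net payoff exactly zero for the pivotal type in the ``between'' case; this is what guarantees that only the inframarginal region contributes to the finite difference, giving the required $(v_{i}-v_{i-1})p(v_{i-1}\vert h^{t-1})$ instead of something proportional to the generic $\Delta v_{i}$. Continuation independence is conceptually the cleanest step once stated, and is essentially a direct consequence of the VCG mechanism being Markov-in-types and, in fact, history-free at every node.
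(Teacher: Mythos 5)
Your proposal is correct and follows essentially the same route as the paper's proof: a case analysis on the seller's cost showing that the discreteness-adjusted VCG prices make the static local downward constraint bind (with the pivotal "between" case contributing exactly zero), combined with the observation that history-freeness of the mechanism and independence of the Markov processes make the continuation values invariant to the current report, so the dynamic terms cancel. The only cosmetic difference is that the paper establishes the identity pointwise in $c_j$ for the within-period ex post utilities and then takes expectations at the end, whereas you work with the interim (expected over $c_t$) quantities throughout.
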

\begin{proof}
We establish the result for the buyer and the result for the seller follows analogously. Fix a history $h^{t-1}$ and buyer type $i>1$. Let $u^{vcg}_{B}(v_{i},c_{j})$ be the static ex post utility of the buyer of type $v_{i}$ when the seller's type is $c_{j}$ in the VCG mechanism (modified for discrete types), that is $u^{vcg}_{B}(v_{i},c_{j}) = v_{i}p^{vcg}(v_{i},c_{j}) - x^{vcg}(v_{i},c_{j})$.\footnote{Note that we have suppressed $h^{t-1}$ in this term because the VCG mechanism is independent of history, and $v_{i,t}$ and $c_{j,t}$ are written as $v_{i}$ and $c_{j}$ for simplicity of notation.} We make the following claim: $u^{vcg}_{B}(v_{i},c_{j}) = \Delta v_{i} p^{vcg}(v_{i-1},c_{j}) + u^{vcg}_{B}(v_{i-1},c_{j})$, where recollect $\Delta v_{i} = v_{i} - v_{i-1}$.

If $v_{i}<c_{j}$, then $p^{vcg}(v_{i-1},c_{j}) = 0$, and $u^{vcg}_{B}(v_{i},c_{j}) = u^{vcg}_{B}(v_{i-1},c_{j}) =0$ and claim holds trivially. Thus, let $v_{i}>c_{j}$. We consider two subcases: (i) $v_{i-1}>c_{j}$, and (ii) $v_{i-1}<c_{j}$.\footnote{Recollect that $\mathcal{V}\cap \mathcal{C} = \emptyset$, so $v_{i-1} = c_{j}$ is ruled out by assumption.} Suppose $v_{i-1}>c_{j}$. Then, it is easy to see that by definition, $p^{vcg}(v_{i},c_{j}) =p^{vcg}(v_{i-1},c_{j}) =1$, and $x^{vcg}_{B}(v_{i},c_{j}) = x^{vcg}(v_{i-1},c_{j})$. Thus,
\[u_{B}^{vcg}(v_{i},c_{j}) = v_{i} - x^{vcg}_{B}(v_{i},c_{j}) = (v_{i} - v_{i-1}) + v_{i-1} - x^{vcg}_{B}(v_{i-1},c_{j}) = \Delta v_{i}p^{vcg}(v_{i-1},c_{j}) + u_{B}^{vcg}(v_{i-1},c_{j})\]
Next, suppose $v_{i}>c_{j}$ but $v_{i-1}<c_{j}$. Then $u_{B}^{vcg}(v_{i-1},c_{j}) =0$, and by construction $x^{vcg}_{B}(v_{i},c_{j}) = v_{i}$. Thus,
\[u_{B}^{vcg}(v_{i},c_{j}) = v_{i} - x^{vcg}_{B}(v_{i},c_{j}) =v_{i} - v_{i} =0 = \Delta v_{i}p^{vcg}(v_{i-1},c_{j}) + u_{B}^{vcg}(v_{i-1},c_{j})\]
Next,
\begin{align*}
& U^{vcg}_{B}(v_{i},c_{j}\vert h^{t-1}) = u^{vcg}(v_{i},c_{j}) + \sum\limits_{k=1}^{N} f(v_{k}\vert v_{i}) U_{B}^{vcg}(v_{k}\vert h^{t-1}, v_{i},c_{j})\\
& = \Delta v_{i}p^{vcg}(v_{i-1},c_{j}) + u_{B}^{vcg}(v_{i-1},c_{j}) + \sum\limits_{k=1}^{N} f(v_{k}\vert v_{i}) U_{B}^{vcg}(v_{k}\vert h^{t-1}, v_{i},c_{j})\\
& = \Delta v_{i}p^{vcg}(v_{i-1},c_{j}) + u_{B}^{vcg}(v_{i-1},c_{j}) + \sum\limits_{k=1}^{N} f(v_{k}\vert v_{i}) U_{B}^{vcg}(v_{k}\vert h^{t-1}, v_{i-1},c_{j})\\
& = \Delta v_{i}p^{vcg}(v_{i-1},c_{j}) + U^{vcg}_{B}(v_{i-1},c_{j}\vert h^{t-1}) + \sum\limits_{k=1}^{N}\left(f(v_{k}\vert v_{i})- f(v_{k}\vert v_{i-1})\right)U_{B}^{vcg}(v_{k}\vert h^{t-1}, v_{i-1},c_{j})
\end{align*}
The second to last equality follows from the fact the mechanism is stationary and as long expectations about future type realizations are the same (which in this case depend only on $c_{t}=c_{j}$) the expected utility vectors are equal. Thus, $U^{vcg}_{B}(v_{k}\vert h^{t-1}, v_{i},c_{j})  = U^{vcg}_{B}(v_{k}\vert h^{t-1}, v_{i-1},c_{j})$.

Above, we showed that the selected constraint holds as an equality for the within period ex post expected utility vector. Finally taking expectation over the seller's cost, we can prove the constraint also holds as an equality for the interim expected utility vector:
\[U^{vcg}_{B}(v_{i}\vert h^{t-1}) = \Delta v_{i}p^{vcg}(v_{i-1}) + U^{vcg}_{B}(v_{i-1}\vert h^{t-1}) + \sum\limits_{k=1}^{N}\left(f(v_{k}\vert v_{i})- f(v_{k}\vert v_{i-1})\right)U_{B}^{vcg}(v_{k}\vert h^{t-1}, v_{i-1})\]
\end{proof}
Next, all constraints in $C^{RP}$ hold as equalities for the min-max dynamic VCG mechanism.
\begin{lemma}
\label{lemK}
For the min-max dynamic VCG mechanism, $\left\langle {\bf p^{*},U^{*}}\right\rangle$, all constraints in $C^{RP}$ hold as equalities.
\end{lemma}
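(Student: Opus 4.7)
The plan is to prove the two halves of $C^{RP}$ separately: first that every constraint in $IC^{local}$ holds with equality in $\langle {\bf p^{*},U^{*}}\rangle$, and then that the two lowest-type individual rationality constraints $IR^{B}_{1}(h^{t-1})$ and $IR^{S}_{M}(h^{t-1})$ bind at every history.

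For the local IC part, I would start from Lemma \ref{lem_vcg_eq}, which already tells us that every constraint in $IC^{local}$ holds as an equality in the untranslated dynamic VCG mechanism $\langle {\bf p^{*},U^{vcg}}\rangle$. The min-max mechanism is obtained by translating expected utilities according to Step 2 of the construction, namely by history-dependent but type-independent constants $a_{B}(h^{t-1}) = -\inf_{v}U^{vcg}_{B}(v\mid h^{t-1})$ and $a_{S}(h^{t-1}) = -\inf_{c}U^{vcg}_{S}(c\mid h^{t-1})$. So the task reduces to showing that such a translation preserves each local IC with equality. Lemma \ref{l_pe} already says IC is preserved, but I would look inside the argument (the same bookkeeping that underlies Lemma \ref{l_pe2}): on each side of a downward IC for the buyer, the direct terms $U_{B}(v_{i}\mid h^{t-1})$ and $U_{B}(v_{i-1}\mid h^{t-1})$ both shift by the common constant $a_{B}(h^{t-1})$, and in the continuation term $\delta\sum_{k}U_{B}(v_{k}\mid h^{t-1},v_{i-1},c)\,(f(v_{k}\mid v_{i})-f(v_{k}\mid v_{i-1}))$, the shift $a_{B}(h^{t-1},v_{i-1},c)$ is constant in $k$ and therefore drops out because the difference $f(\cdot\mid v_{i})-f(\cdot\mid v_{i-1})$ sums to zero. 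Hence the inequality is translated by the same amount on both sides; equalities map to equalities. The symmetric argument works for the seller's upward local IC.

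For the IR part, I would use first-order stochastic dominance to identify where the infima sitting inside the definition of $a_{B}$ and $a_{S}$ are attained. A standard dynamic envelope argument shows that in the VCG mechanism $U^{vcg}_{B}(v\mid h^{t-1})$ is weakly increasing in $v$: the within-period ex post utility $u^{vcg}_{B}(v,c)$ is non-decreasing in $v$, and FOSD of $F(\cdot\mid v)$ in $v$ together with the weak monotonicity of the continuation value (which is itself non-decreasing by induction) gives the monotonicity of the interim expected utility. Consequently $\inf_{v}U^{vcg}_{B}(v\mid h^{t-1})=U^{vcg}_{B}(\underline{v}\mid h^{t-1})$, so $a_{B}(h^{t-1})=-U^{vcg}_{B}(\underline{v}\mid h^{t-1})$, and therefore $U^{*}_{B}(\underline{v}\mid h^{t-1})=0$, which is exactly the binding $IR^{B}_{1}(h^{t-1})$ constraint. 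The analogous argument for the seller, using FOSD in the opposite direction, yields $U^{*}_{S}(\overline{c}\mid h^{t-1})=0$ at every history.

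The main obstacle I anticipate is being careful with the continuation term when translating: one must check that although $a_{B}$ depends on the history, the particular weighted sum that appears in the one-shot-deviation expression has weights that integrate to zero in $k$, so a history-fixed constant in $v_{k}$ genuinely cancels. Once that cancellation is formalized cleanly (which mirrors the argument in the proof of Lemma \ref{l_pe2} already given), the monotonicity of $U^{vcg}_{B}$ in $v$ via an induction on the horizon is routine, and the remaining assertions follow directly from the definition of the min-max translation constants.
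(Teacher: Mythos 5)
Your proposal is correct and follows essentially the same route as the paper: the local IC equalities come from Lemma \ref{lem_vcg_eq} together with the cancellation of the history-dependent, type-independent constants in the translation (the mechanism behind Lemma \ref{l_pe}), and the binding IR constraints come from FOSD-induced monotonicity of $U^{vcg}_{B}$ in $v$ and $U^{vcg}_{S}$ in $c$, which places the infima at $\underline{v}$ and $\overline{c}$ and makes them zero by construction. Your explicit observation that the translation shifts both sides of each local IC by the same amount---so that equalities, not merely inequalities, are preserved---is a point the paper leaves implicit, and it is worth spelling out as you do.
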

\begin{proof}
The proof for the fact that all constraints in $IC^{local}$ hold as equalities follows from Lemma \ref{lem_vcg_eq} and Lemma \ref{l_pe}. Moreover, by construction $U^{vcg}_{B}(v,c\vert h^{t-1})$ is monotonically increasing in $v$, and $U^{vcg}_{S}(v,c\vert h^{t-1})$ is monotonically decreasing in $c$. Therefore, the respective infima are reached at type 1 and type $M$ respectively, which by construction get zero utility.
\end{proof}
Recall the expression for the expected budget surplus
\[\Pi(h^{t-1}) = \mathbb{E}^{m}\left[\sum\limits_{\tau=t}^{T}\delta^{\tau -t}\left(v_{\tau} - c_{\tau}\right)p_{\tau} - U_{B}(v_{t}\vert h^{t-1}) - U_{S}(c_{t}\vert h^{t-1}) \mid h^{t-1}\right]\]
and define the expected budget surplus from the min-max dynamic VCG mechanism to be $\Pi^{*}(h^{t-1})$.

Fix the efficient allocation ${\bf p^{*}}$. For each history $h^{t-1}$ solve the following relaxed problem $\left(RP(h^{t-1}) \right)$:
\[ \max_{ { \bf U}} \quad \Pi(h^{t-1})\]
subject to
\[\left\langle {\bf p^{*},U}\right\rangle \in {\bf C}^{RP}\]

\begin{lemma}
\label{lem_bind}
In the relaxed problem $(RP(h^{t-1}))$, we have that $IR^{B}_{1}(h^{t-1})$, $IR^{S}_{M}(h^{t-1})$, and all constraints in $IC^{local}(h^{t-1})$ hold as equalities at the optimum.
\end{lemma}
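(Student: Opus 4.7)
The approach is proof by contradiction: suppose $U^{*}$ attains the maximum in $RP(h^{t-1})$ and one of the listed constraints is slack; I will construct a feasible $\widetilde{U}\in C^{RP}$ with strictly larger $\Pi(h^{t-1})$. The argument rests on two observations. First, with $p$ fixed, the objective is a linear functional of the $U$-vectors depending \emph{only} on the current-period interim utilities $U_{B}(\cdot\mid h^{t-1})$ and $U_{S}(\cdot\mid h^{t-1})$, so reducing these strictly improves $\Pi(h^{t-1})$. Second, the ancestor IC constraints couple to the shifted utilities only through continuation terms whose sign I control via FOSD of $F(\cdot\mid\cdot)$ and $G(\cdot\mid\cdot)$.

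\emph{Binding IRs.} Assume $U^{*}_{B}(v_{1}\mid h^{t-1})>0$. Define $\widetilde{U}_{B}(v\mid h^{t-1})=U^{*}_{B}(v\mid h^{t-1})-\varepsilon$ uniformly in $v$, with every other $U$-entry unchanged and $\varepsilon$ small. Lemma \ref{l_pe} with $a_{B}(h^{t-1})=-\varepsilon$ and all other shifts zero gives incentive compatibility. For any downward buyer IC at an ancestor $h^{t-2}$, the continuation term picks up exactly $-\delta\varepsilon\,g(c^{*}_{t-1}\mid c_{t-2})\sum_{k}[f(v_{k}\mid v_{\ell})-f(v_{k}\mid v_{\ell-1})]=0$, so ancestor ICs are untouched, and IRs at other histories involve different variables. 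But $\Pi(h^{t-1})$ strictly rises by $\varepsilon$, contradicting optimality. The case $U^{*}_{S}(c_{M}\mid h^{t-1})>0$ is symmetric.

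\emph{Binding local IC at $h^{t-1}$.} Suppose $IC^{B}_{i,i-1}(h^{t-1})$ is slack. Define $\widetilde{U}_{B}(v_{j}\mid h^{t-1})=U^{*}_{B}(v_{j}\mid h^{t-1})-\varepsilon$ for $j\geq i$, unchanged otherwise. At $h^{t-1}$: the slack constraint loses $\varepsilon$ only on the LHS and survives by slackness; downward ICs with $j>i$ shift LHS and RHS by the same $-\varepsilon$ and are preserved; those with $j<i$ are untouched, and so is $IR^{B}_{1}(h^{t-1})$. For any ancestor downward buyer IC at $h^{t-2}$ whose ``pretend'' type matches $v^{*}_{t-1}$, the continuation term changes by $-\delta\varepsilon\,g(c^{*}_{t-1}\mid c_{t-2})\bigl[\bar F(v_{i-1}\mid v_{\ell})-\bar F(v_{i-1}\mid v_{\ell-1})\bigr]$, where $\bar F$ is the survival function of $f$; FOSD makes the bracket non-negative, so the continuation sum weakly decreases and the ancestor constraint becomes weakly easier. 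The objective rises by $\varepsilon\,\bar F(v_{i-1}\mid v_{t-1})>0$. A symmetric one-sided shift $\widetilde{U}_{S}(c_{\ell}\mid h^{t-1})=U^{*}_{S}(c_{\ell}\mid h^{t-1})-\varepsilon$ for $\ell\leq j$, using FOSD of $G$, handles a slack $IC^{S}_{j,j+1}(h^{t-1})$.

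\emph{Main obstacle.} The delicate step is verifying that a perturbation of $U(\cdot\mid h^{t-1})$ does not violate some ancestor local IC, since the latter depends on continuation utilities that include the perturbed entries. Two distinct mechanisms save us: a \emph{uniform} shift across types telescopes against $\sum_{k}[f(\cdot)-f(\cdot)]=0$ and leaves ancestor ICs unchanged (this serves the IR step); a \emph{one-sided} shift across a contiguous tail of types yields a sign-definite correction whose direction is pinned down by FOSD, relaxing rather than tightening ancestor ICs (this serves the local-IC step). Descendant constraints and sister-history constraints are automatically unaffected, because they involve utility vectors strictly outside $\{U(\cdot\mid h^{t-1})\}$.
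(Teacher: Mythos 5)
Your proof is correct and follows essentially the same route as the paper's: assume a slack constraint, apply a uniform downward shift of the current-period interim utilities for the IR constraints and a one-sided tail shift for a slack local IC, verify via FOSD that the only affected ancestor constraint (the one whose deviation report matches the reported type in $h^{t-1}$) is weakly relaxed, and conclude from the strict increase in $\Pi(h^{t-1})$. The only difference is that the paper additionally iterates the reduction backward through ancestor histories after relaxing the parent constraint; that iteration is not needed for the contradiction establishing the lemma as stated, so your omission of it is not a gap.
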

\begin{proof}
First we show that $IR^{B}_{1}(h^{t-1})$ binds at the optimum. If not then, we can decrease $U_{B}(v_{i},c\vert h^{t-1})$ slightly for all $i=1,2,..N$ and all $c\in \mathcal{C}$. This change does not violate any constraints in $(RP(h^{t-1}))$ and moreover strictly increases $\Pi(h^{t-1})$, giving us a contradiction. Thus, we must have that $IR^{B}_{1}(h^{t-1})$ binds at the optimum. Analogously, we can show that $IR^{S}_{M}(h^{t-1})$ binds at the optimum.

Now, for any $i>1$ suppose $IC^{B}_{i,i-1}(h^{t-1})$ does not bind at the optimum. Then, for each $c\in \mathcal{C}$, decrease $U_{B}(v_{k},c\vert h^{t-1})$ by $\varepsilon$ for each $k\geq i$. If $t = 1$, all the constraints are still satisfied and $\Pi(h^{t-1})$ is strictly higher, giving a contradiction. If $t> 1$, this change does not affect any constraint except $IC^{B}_{j+i,j}(h^{t-2})$ where $v_{t-1} = v_{j}$.\footnote{Note that if $v_{t-1} = v_{N}$ we are done, so the rest of the proof assumes $j<N$.} The right hand side of this constraint is reduced by $\left[F(v_{i-1}\vert v_{j}) - F(v_{i-1}\vert v_{j+1})\right]\varepsilon$ which owing to first-order stochastic dominance is non-negative. Now, repeat the same procedure decreasing $U_{B}(v_{k},c\vert h^{t-2})$ by $\left[F(v_{i-1}\vert v_{j}) - F(v_{i-1}\vert v_{j+1})\right]\varepsilon$ for each $k\geq j+1$. We can keep reducing utility vectors backward till the first period, unless $h^{t-1}$ contains $v_{N}$, in which case the backward iteration ends there, to deduce a non-negative change in $\Pi(h^{t-1})$. Thus, the changes do not violate any of the constraints and keep the objective to a value larger than or equal to that before the change.
\end{proof}

Lemma \ref{lem_bind} uniquely pins down interim expected utility vectors: $U_{B}(v_{t}\vert h^{t-1})$ and $U(c_{t}\vert h^{t-1}$ $\forall (v_{t},c_{t}) \in \mathcal{V}\times \mathcal{C}$ that maximize $\Pi(h^{t-1})$ subject to the relaxed problem for all $h^{t-1}$, for all $t$. Call this mechanism $\left\langle {\bf p^{*},\hat{U}}\right\rangle$. It is to be shown that $\left\langle {\bf p^{*},\hat{U}}\right\rangle$ is implementable, that is, it satisfies all incentive compatibility and individual rationality constraints not included in $C^{RP}$.

From Lemma \ref{lemK} one mechanism that delivers these interim expected utility vectors as in ${\bf \hat{U}}$ is the min-max dynamic VCG mechanism, which by construction is implementable. Thus, must have that $\left\langle {\bf p^{*},\hat{U}}\right\rangle$ is implementable and it delivers the expected budget surplus of $\Pi^{*}(h^{t-1})$ for all $h^{t-1}$ and for all $t$.

Putting it all together we have that the min-max dynamic VCG mechanism achieves the highest expected budget surplus for every possible history in the class of implementable mechanisms which proves the result.

Note that following the same methodology we can prove something stronger- the same result for ex post incentive compatibility and ex post individual rationality.

\subsection{Proof of Corollary \ref{c_limit_d}}

The methodology here is analogous to Athey and Segal [2013] and Yoon [2015]. Though unlike their proofs, owing to the dynamic min-max VCG mechanism, the threshold of $\delta$ required for efficiency to hold will be tightest possible in the construction that follows. Let $\Theta = \mathcal{V}\times\mathcal{C}$. First some notation. Define
\[\pi^{vcg} = \left(x^{vcg}_{B}(v,c) - x^{vcg}_{S}(v,c)\right)_{(v,c)\in \Theta}\]
to be the row vector of flow budget for the mechanism designer from the standard static VCG mechanism (modified for discrete types). Further, let $\mu$ be the row vector of joint prior distribution on $\Theta$ composed of independent marginals $f$ and $g$. Also, let $M$ be the Markov matrix governing the evolution of types in $\Theta$, again composed of the independent marginals $f(.\vert .)$ and $g(.\vert .)$. Let $\mu_{B}(v)$ be the conditional distribution of types in the first period when the buyer's type is $v$, and $\mu_{B}(v\vert c)$ be the conditional distribution of types in the second period when the buyer's current type is $v$ and the seller's type in the first period is $c$.  Similarly, define conditionals for the seller. Let $\mu(.\vert v, c)$ be the conditional distribution in the second period when the first period types are $(v,c)$. Finally, define
\[u_{B}^{vcg} = \left(vp^{vcg}(v,c) - x_{B}^{vcg}(v,c)\right)_{(v,c) \in \Theta}\]
\[u_{S}^{vcg} = \left(x_{S}^{vcg}(v,c) - c p(v,c)\right)_{(v,c) \in \Theta}\]
to be the column vector of flow utilities for the buyer and seller in the VCG mechanism.

Given these, it is easy to see that
\[\Pi^{vcg} = \sum\limits_{t=1}^{\infty} \delta^{t-1} \mu M^{t-1} \pi^{vcg}, \quad \Pi^{vcg}(v,c) = \sum\limits_{t=1}^{\infty} \delta^{t-1} \mu(.\vert v,c) M^{t-1} \pi^{vcg}\]
\[U^{vcg}_{B}(\underline{v}) = \inf\limits_{v\in \mathcal{V}}\sum\limits_{t=1}^{\infty} \delta^{t-1} \mu_{B}(v) M^{t-1} u^{vcg}_{B}, \quad
U^{vcg}_{B}(\underline{v}\vert c) = \inf\limits_{v\in \mathcal{V}}\sum\limits_{t=1}^{\infty} \delta^{t-1} \mu_{B}(v\vert c)M^{t-1} u^{vcg}_{B}\]
\[U^{vcg}_{S}(\overline{c}) = \inf\limits_{c\in \mathcal{C}}\sum\limits_{t=1}^{\infty} \delta^{t-1} \mu_{S}(c)M^{t-1} u^{vcg}_{S}, \quad
U^{vcg}_{S}(\overline{c}\vert v) = \inf\limits_{c\in \mathcal{C}}\sum\limits_{t=1}^{\infty} \delta^{t-1} \mu_{S}(c\vert v)M^{t-1} u^{vcg}_{S}\]
Thus, we have
\[\Pi^{*} =  \sum\limits_{t=1}^{\infty} \delta^{t-1} \mu M^{t-1} \pi^{vcg} + \sum\limits_{t=1}^{\infty} \delta^{t-1} \mu_{B}(\underline{v}) M^{t-1} u^{vcg}_{B} + \sum\limits_{t=1}^{\infty} \delta^{t-1} \mu_{S}(\overline{c})M^{t-1} u^{vcg}_{S}\]
\[\Pi^{*}(v,c) = \sum\limits_{t=1}^{\infty} \delta^{t-1} \mu(.\vert v,c) M^{t-1} \pi^{vcg} + \sum\limits_{t=1}^{\infty} \delta^{t-1} \mu_{B}(\underline{v}\vert c)M^{t-1} u^{vcg}_{B} + \sum\limits_{t=1}^{\infty} \delta^{t-1} \mu_{S}(\overline{c}\vert v)M^{t-1} u^{vcg}_{S}\]
where the $\inf$ has been replaced by the element that attains the infimum. Moreover,
\[x^{vcg}_{B}(v,c) = vp(v,c) - u_{B}(v,c), \quad  x^{vcg}_{S}(v,c) = cp(v,c) + u_{S}(v,c)\]
Thus,
\[\pi^{vcg} = y - (u^{vcg}_{B} + u^{vcg}_{S}) \]
where $y = \left((v-c)p^{vcg}(v,c)\right)_{(v,c) \in \Theta}$, and we have
\[\Pi^{*} =  \sum\limits_{t=1}^{\infty} \delta^{t-1} \mu M^{t-1}y -  \sum\limits_{t=1}^{\infty} \delta^{t-1} \mu M^{t-1}(u^{vcg}_{B} + u^{vcg}_{S})+  \sum\limits_{t=1}^{\infty} \delta^{t-1} \mu_{B}(\underline{v}) M^{t-1} u^{vcg}_{B} + \sum\limits_{t=1}^{\infty} \delta^{t-1} \mu_{S}(\overline{c})M^{t-1} u^{vcg}_{S}\]
and,
\[\Pi^{*}(v,c) =  \sum\limits_{t=1}^{\infty} \delta^{t-1} \mu(.\vert v,c) M^{t-1}y -  \sum\limits_{t=1}^{\infty} \delta^{t-1} \mu(.\vert v,c) M^{t-1}(u^{vcg}_{B} + u^{vcg}_{S})\]
\[+  \sum\limits_{t=1}^{\infty} \delta^{t-1} \mu_{B}(\underline{v}\vert c )M^{t-1} u^{vcg}_{B} + \sum\limits_{t=1}^{\infty} \delta^{t-1} \mu_{S}(\overline{c}\vert v)M^{t-1} u^{vcg}_{S}\]

For a representative element $m^{t}_{\theta,\theta'}$ of $M^{t}$ there exists a stationary distribution $\nu$ such that $\nu >> 0$ and $\lim\limits_{t\rightarrow \infty}  m^{t}_{\theta,\theta'}= \nu(\theta')$.\footnote{This is a standard result in Markov chains, see for example Puterman [1994].} Now, fix $\theta$ and $\theta'$ and note that for any $\varepsilon >0$, there exists a $\hat{t}$ such that $\vert m^{t}_{\theta,\theta'}- \nu(\theta') \vert <\varepsilon$ for all $t \geq \hat{t}$.

In what follows, I'm going to establish that $\lim\limits_{\delta\rightarrow 1} \Pi^{*}(v,c)\geq 0$ for all $(v,c) \in\Theta$. We can similarly establish the non-negative limit for $\Pi^{*}$. Fix $(v,c) \in\Theta$, and note that for $t\geq \hat{t}$, we have
\begin{align*}
\mu_{B}(\underline{v}\vert c) M^{t-1} u^{vcg}_{B} &= \sum\limits_{\theta \in \Theta} \sum\limits_{\theta' \in \Theta} \mu_{B}(\underline{v}\vert c)(\theta) m^{t}_{\theta,\theta'}u^{vcg}_{B}(\theta')\\
&> \sum\limits_{\theta \in \Theta} \sum\limits_{\theta' \in \Theta}\mu_{B}(\underline{v}\vert c)(\theta) \left(\nu(\theta') - \varepsilon\right)u^{vcg}_{B}(\theta')\\
&= \sum\limits_{\theta' \in \Theta} \left(\nu(\theta') - \varepsilon\right)u^{vcg}_{B}(\theta')
\end{align*}
where $\mu_{B}(\underline{v}\vert c)(\theta)$ is the probability assigned to $\theta$ by the conditional $\mu_{B}(\underline{v}\vert c)$, etc. Therefore, we have
\[\sum\limits_{t=0}^{\infty} \delta^{t-1} \mu_{B}(\underline{v}\vert c) M^{t-1} u^{vcg}_{B} > \sum\limits_{t=0}^{\hat{t}} \delta^{t-1} \mu_{B}(\underline{v}\vert c) M^{t-1} u^{vcg}_{B} + \frac{\delta^{\hat{t}}}{1-\delta} \sum\limits_{\theta'\in \Theta}\left(\nu(\theta') - \varepsilon\right)u^{vcg}_{B}(\theta')\]
And, similarly for the seller
\[\sum\limits_{t=0}^{\infty} \delta^{t-1} \mu_{S}(\overline{c}\vert v) M^{t-1} u^{vcg}_{S} > \sum\limits_{t=0}^{\hat{t}} \delta^{t-1} \mu_{S}(\overline{c}\vert v) M^{t-1} u^{vcg}_{S} + \frac{\delta^{\hat{t}}}{1-\delta} \sum\limits_{\theta'\in \Theta}\left(\nu(\theta') - \varepsilon\right)u^{vcg}_{S}(\theta')\]
Through the same arguments we have
\[\sum\limits_{t=1}^{\infty} \delta^{t-1} \mu(.\vert v,c) M^{t-1}y > \sum\limits_{t=0}^{\hat{t}}\delta^{t-1}\mu(.\vert v,c) M^{t-1}y + \frac{\delta^{\hat{t}}}{1-\delta} \sum\limits_{\theta'\in \Theta}\left(\nu(\theta') - \varepsilon\right)y(\theta')\]
and,
\[\sum\limits_{t=1}^{\infty} \delta^{t-1} \mu(.\vert v,c) M^{t-1}(u^{vcg}_{B} + u^{vcg}_{S}) < \sum\limits_{t=1}^{\hat{t}} \delta^{t-1} \mu(.\vert v,c) M^{t-1}(u^{vcg}_{B} + u^{vcg}_{S}) + \frac{\delta^{\hat{t}}}{1-\delta} \sum\limits_{\theta'\in \Theta}\left(\nu(\theta') + \varepsilon\right)(u^{vcg}_{B} + u^{vcg}_{S})(\theta')\]
Putting it all together, we have
\[\Pi^{*}(v,c)> \sum\limits_{t=1}^{\hat{t}} \delta^{t-1} \mu(.\vert v,c) M^{t-1}y -  \sum\limits_{t=1}^{\hat{t}} \delta^{t-1} \mu(.\vert v,c) M^{t-1}(u^{vcg}_{B} + u^{vcg}_{S})+  \sum\limits_{t=1}^{\hat{t}} \delta^{t-1} \mu_{B}(\underline{v}\vert c ) M^{t-1} u^{vcg}_{B} \]
\[\qquad + \sum\limits_{t=1}^{\hat{t}} \delta^{t-1} \mu_{S}(\overline{c}\vert v)M^{t-1} u^{vcg}_{S}
+ \frac{\delta^{\hat{t}}}{1-\delta} \sum\limits_{\theta' \in \Theta} \left[\left(\nu(\theta') - \varepsilon\right)y(\theta') -2\varepsilon \left(u_{B} + u_{S}\right)(\theta')\right]\]
Now, owing to finite supports, we can define
\[C_{1} = \max\limits_{\theta\in \Theta} \vert y(\theta) \vert \quad \text{ and } \quad C_{2} = \max\limits_{\theta\in \Theta} \vert \left(u_{B} + u_{S}\right)(\theta) \vert\]
Thus,
\[\sum\limits_{t=1}^{\hat{t}} \delta^{t-1} \mu(.\vert v,c) M^{t-1}y \geq -\frac{1-\delta^{\hat{t}}}{1-\delta}C_{1}\]
\[ \sum\limits_{t=1}^{\hat{t}} \delta^{t-1} \mu(.\vert v,c) M^{t-1}(u^{vcg}_{B} + u^{vcg}_{S}) \leq \frac{1-\delta^{\hat{t}}}{1-\delta}C_{2}\]
and,
\[\sum\limits_{t=1}^{\hat{t}} \delta^{t-1} \mu_{B}(\underline{v}\vert c ) M^{t-1} u^{vcg}_{B} +  \sum\limits_{t=1}^{\hat{t}} \delta^{t-1} \mu_{S}(\overline{c}\vert v)M^{t-1} u^{vcg}_{S} \geq -\frac{1-\delta^{\hat{t}}}{1-\delta}2C_{1}\]
Also, we choose $\varepsilon$ so that
\[\left(\nu(\theta') - \varepsilon\right)y(\theta') -2\varepsilon \left(u_{B} + u_{S}\right)(\theta') > \eta\]
for some $\eta >0$. Therefore
\[\Pi^{*}(v,c) > -\frac{1-\delta^{\hat{t}}}{1-\delta}\left(C_{1} + 3C_{2}\right) + \frac{\delta^{\hat{t}}}{1-\delta}\eta\]
As $\delta \rightarrow 1$, $\frac{1-\delta^{\hat{t}}}{1-\delta}\left(C_{1} + 3C_{2}\right)$ converges to to a finite quantity: $-\hat{t}\left(C_{1}+3c_{2}\right)$, and $\frac{\delta^{\hat{t}}}{1-\delta}\eta$ goes to infinity. Thus, we must have $\lim\limits_{\delta\rightarrow 1} \Pi^{*}(v,c)\geq 0$.

\subsection{Proof of Corollary  \ref{c_limit_a}}

We show that
\[\lim\limits_{\alpha_{B},\alpha_{S}\rightarrow 1} \Pi^{*} <0 \]

Call the min-max dynamic VCG mechanism for $\delta=0$, the min-max static VCG mechanism. It is easy to see that for $\alpha_{B}=\alpha_{S}=1$, that is for constant types, the min-max dynamic VCG mechanism involves the repetition of the min-max static VCG mechanism in every period. For constant types
\[U^{*}_{B}(v,c) = vp^{*}(v,c) - x^{*}_{B}(v,c) + \delta U^{*}_{B}(v,c \vert v,c) \]
and, since ex post utility vectors are independent of history, we have
\[U^{*}_{B}(v,c) = \frac{1}{1-\delta} \left(vp^{*}(v,c) - x^{*}_{B}(v,c)\right)\]
Similarly
\[U^{*}_{S}(v,c) = \frac{1}{1-\delta} \left(x_{S}^{*}(v,c) - cp^{*}(v,c)\right)\]
Thus, it is straightforward to note that
\[\lim\limits_{\alpha_{B},\alpha_{S}\rightarrow 1}\Pi^{*} = \frac{1}{1-\delta}\Pi^{*}(\delta=0) + \lim\limits_{\alpha_{B},\alpha_{S}\rightarrow 1} o\left(\alpha_{B},\alpha_{S}\right) \]
where $o\left(\alpha_{B},\alpha_{S}\right)$ is budget surplus along non-constant histories. Note that transfers along non-constant histories are finite- owing to finite supports and construction of the min-max dynamic VCG mechanism. Moreover, the probabilities of non-constant histories converge to zero. Thus, we must have $\lim\limits_{\alpha_{B},\alpha_{S}\rightarrow 1} o\left(\alpha_{B},\alpha_{S}\right)=0$. Putting this all together, we have
\[\lim\limits_{\alpha_{B},\alpha_{S}\rightarrow 1}\Pi^{*} = \frac{1}{1-\delta}\Pi^{*}(\delta=0) < 0 \]

\subsection{Proof of Corollary  \ref{cor_all_int_mech}}

From Lemma \ref{l_pe}, it is follows that the mechanism $\left\langle {\bf p^{*},U^{\beta}}\right\rangle$ is incentive compatible. It is clearly individually rational. Expected budget surplus after history $h^{t-1}$ is given by
\begin{align*}
\Pi^{\beta}(h^{t-1}) &=  \mathbb{E}\left[\sum\limits_{\tau=t}^{T}\delta^{\tau -t}\left(v_{\tau} - c_{\tau}\right)p^{*}_{\tau} - U^{\beta}_{B}(v_{t}\vert h^{t-1}) - U^{\beta}_{S}(c_{t}\vert h^{t-1}) \mid h^{t-1}\right] \\
&= \Pi^{*}(h^{t-1})- \beta_{B}(h^{t-1})\Pi^{*}(h^{t-1}) - \beta_{S}(h^{t-1})\Pi^{*}(h^{t-1}) \\
&= \left(1- \beta_{B}(h^{t-1})  \beta_{S}(h^{t-1})\right)\Pi^{*}(h^{t-1}) \\
&\geq 0
\end{align*}
Conversely, suppose there exists a tight mechanism $\left\langle {\bf p^{*},U^{\beta}}\right\rangle$ that implements the efficient allocation under interim budget balance. Then, by Lemma \ref{l_pe} it must satisfy
\[U^{\beta}_{B}(v_{t}\vert h^{t-1}) = U^{*}_{B}(v_{t}\vert h^{t-1}) + a_{B}(h^{t-1})\]
\[U^{\beta}_{S}(v_{t}\vert h^{t-1}) = U^{*}_{S}(c_{t}\vert h^{t-1}) + a_{S}(h^{t-1})\]
For individual rationality to be satisfied $a_{B}(h^{t-1})$ and $a_{S}(h^{t-1})$ must be all non-negative constants. Further, if for any history $\Pi^{*}(h^{t-1})=0$, then it must be the case $a_{B}(h^{t-1}) = a_{S}(h^{t-1}) = 0$, else the mechanism will violate interim budget balance. Thus, if $\Pi^{*}(h^{t-1})>0$, there must exist $\beta_{B}(h^{t-1}), \beta_{S}(h^{t-1})\geq 0$ such that
\[a_{B}(h^{t-1}) = \beta_{B}(h^{t-1})\Pi^{*}(h^{t-1}) \quad \text{and} \quad a_{S}(h^{t-1}) = \beta_{S}(h^{t-1})\Pi^{*}(h^{t-1})\]
And, $\Pi^{\beta}(h^{t-1}) \geq 0$ implies $\beta_{B}(h^{t-1}) + \beta_{S}(h^{t-1}) \leq 1$.

\subsection{Proof of Corollary  \ref{cor_expost_charac}}

Suppose $\left\langle {\bf p^{*},U^{\beta}}\right\rangle$ is tight mechanism that implements the efficient allocation under ex post budget balance. Then, by corollary \ref{cor_all_int_mech} we must have that there exist a family of constants $\beta_{B}(h^{t-1})$, $\beta_{S}(h^{t-1})$ in the unit interval such that $\beta_{B}(h^{t-1}) + \beta_{S}(h^{t-1}) \leq 1$ and
\[U^{\beta}_{B}(v_{t}\vert h^{t-1}) = U^{*}_{B}(v_{t}\vert h^{t-1}) + \beta_{B}(h^{t-1})\Pi^{*}(h^{t-1})\]
\[U^{\beta}_{S}(v_{t}\vert h^{t-1}) = U^{*}_{S}(c_{t}\vert h^{t-1}) + \beta_{S}(h^{t-1})\Pi^{*}(h^{t-1})\]
Now, ex post budget balance implies that interim budget balance must be exactly satisfied, that is, $\Pi^{\beta}(h^{t-1}) = 0$ which gives $\beta_{B}(h^{t-1}) + \beta_{S}(h^{t-1}) = 1$.

Conversely, suppose $\left\langle {\bf p^{*},U^{\beta}}\right\rangle$ satisfies the two equations. Then, by Lemma \ref{l_pe} we know that $\left\langle {\bf p^{*},U^{\beta}}\right\rangle$ is incentive compatible, and since we are non-negative constants, it is also individually rational. Now, construct a new mechanism $\left\langle {\bf p^{*},\tilde{x}^{\beta}}\right\rangle$:
\begin{equation*}
\tilde{x}^{\beta}(v,c) = x^{\beta}_{S}(c) + \left[x^{\beta}_{B}(v) - \bar{x}^{\beta}_{B}\right]
\end{equation*}
\begin{equation*}
\tilde{x}^{\beta}(v,c\vert h^{t-1}) = x^{\beta}_{S}(c\vert h^{t-1}) + \left[x^{0}_{B}(v\vert h^{t-1}) - \bar{x}^{0}_{B}(h^{t-1})\right]
\end{equation*}
where $\bar{x}^{\beta}_{B} = \mathbb{E}\left[x_{B}^{\beta}(v,c)\right]$, and $\bar{x}^{\beta}_{B}(h^{t-1}) = \mathbb{E}\left[x_{B}^{\beta}(v,c \vert h^{t-1})\mid h^{t=1} \right]$. Then, by construction, ${\bf \tilde{U}^{\beta}}$ and ${\bf U^{\beta}}$ have the same interim transfers and expected utility vectors, and $\left\langle {\bf p^{*},\tilde{x}^{\beta}}\right\rangle$ satisfies ex post budget balance.

\subsection{Simple trading problem}
\label{STP}

As described in section \ref{sec_cs}, the simple trading problem (\emph{STP}) is referred to the model with two types each for the buyer and seller. Let $\mathcal{V} = \left\{v_{H},v_{L}\right\}$, $\mathcal{C} = \left\{c_{H},c_{L}\right\}$, where $v_{H}>c_{H}>v_{L}>c_{L}$, and denote $f(v_{i}\vert v_{i}) = \alpha_{i}$, and $g(c_{j}\vert c_{j}) = \beta_{j}$.

Define $S_{ij}$ to be the expected surplus associated with with the efficient mechanism when the buyer and seller types are $i$ and $j$ respectively. Then, it is easy to see that
\[S_{HH} = (v_{H} -c_{H}) + \delta\left[\alpha_{H}\beta_{H}S_{HH} + \alpha_{H}(1-\beta_{H})S_{HL} + (1-\alpha_{H})\beta_{H} S_{LH} + (1-\alpha_{H})(1-\beta_{H})S_{LL}\right]\]
\[S_{HL} = (v_{H} -c_{L}) + \delta\left[\alpha_{H}(1-\beta_{L})S_{HH} + \alpha_{H}\beta_{L}S_{HL} + (1-\alpha_{H})(1-\beta_{L}) S_{LH} + (1-\alpha_{H})\beta_{L}S_{LL}\right]\]
\[S_{LH} = 0 + \delta\left[(1-\alpha_{L})\beta_{H}S_{HH} + (1-\alpha_{L})(1-\beta_{H})S_{HL} + \alpha_{L}\beta_{H} S_{LH} + \alpha_{L}(1-\beta_{H})S_{LL}\right]\]
\[S_{LL} = (v_{L} -c_{L}) + \delta\left[(1-\alpha_{L})(1-\beta_{L})S_{HH} + (1-\alpha_{L})\beta_{L}S_{HL}+ \alpha_{L}(1-\beta_{L}) S_{LH} + \alpha_{L}\beta_{L}S_{LL}\right]\]
These are four equations in four unknowns that can be solved in closed form to get values of $S_{HH}, S_{HL},S_{LH}$, and $S_{LL}$.\footnote{The final expressions are involved and calculated using the software \emph{Mathematica}.} Using these the ex ante expected surplus, say $S$, from implementing the efficient allocation is given by:
\[S = \sum\limits_{i=H,L}\sum\limits_{j=H,L} f(v_{i}) g(c_{j}) S_{ij} \]
The within period ex post expected utility of the buyer from the VCG mechanism is defined as follows:
\[U^{B}_{HH} = 0 + \delta\left[\alpha_{H}\beta_{H}U^{B}_{HH} + \alpha_{H}(1-\beta_{H})U^{B}_{HL} + (1-\alpha_{H})\beta_{H} U^{B}_{LH} + (1-\alpha_{H})(1-\beta_{H})U^{B}_{LL}\right]\]
\[U^{B}_{HL} = (v_{H} -c_{L}) + \delta\left[\alpha_{H}(1-\beta_{L})U^{B}_{HH} + \alpha_{H}\beta_{L}U^{B}_{HL} + (1-\alpha_{H})(1-\beta_{L}) U^{B}_{LH} + (1-\alpha_{H})\beta_{L}U^{B}_{LL}\right]\]
\[U^{B}_{LH} = 0 + \delta\left[(1-\alpha_{L})\beta_{H}U^{B}_{HH} + (1-\alpha_{L})(1-\beta_{H})U^{B}_{HL} + \alpha_{L}\beta_{H} U^{B}_{LH} + \alpha_{L}(1-\beta_{H})U^{B}_{LL}\right]\]
\[U^{B}_{LL} = 0 + \delta\left[(1-\alpha_{L})(1-\beta_{L})U^{B}_{HH} + (1-\alpha_{L})\beta_{L}U^{B}_{HL}+ \alpha_{L}(1-\beta_{L}) U^{B}_{LH} + \alpha_{L}\beta_{L}U^{B}_{LL}\right]\]
These are again four equations in four unknowns that can be solved in closed form to get values of $U^{B}_{HH}, U^{B}_{HL},U^{B}_{LH}$, and $U^{B}_{LL}$. Using these we can define the min-max dynamic VCG mechanism as follows:
\[U^{*,B}_{ij} = U^{B}_{ij} - U^{B}_{Lj} \quad \text{ for } i,j =H,L\]
Similarly, we can define $U^{*,S}_{ij}$ for $i,j=H,L$; the within period ex post utility vectors for the seller in the min-max dynamic VCG mechanism. Thus, finally we can write down the expected budget surplus as:
\[\Pi^{*} = S - \sum\limits_{i=H,L}\sum\limits_{j=H,L}f(v_{i})g(c_{j}) \left[U^{*,B}_{ij} + U^{*,S}_{i,j}\right] = \Pi^{vcg} + \sum\limits_{j=H,L}g(c_{j}) U^{*,B}_{Lj} + \sum\limits_{i=H,L}f(v_{i})U^{*,S}_{i,H}\]
and,
\begin{align*}
\Pi^{*}(v_{i},c_{j}) &= \sum\limits\limits_{k=H,L}\sum\limits_{l=H,L}f(v_{k}\vert v_{i})g(c_{l}\vert c_{j})\left[S_{kl} - U^{*,B}_{kl} - U^{*,S}_{k,l}\right]\\
 &= \Pi^{vcg} + \sum\limits_{l=H,L}g(c_{l}\vert c_{j})U^{*,B}_{Ll} + \sum\limits_{i=H,L}f(v_{k}\vert v_{i})U^{*,S}_{k,H}
\end{align*}

For the uniform symmetric simple trading problem the parameter space is simplified in a symmetric fashion: $\mathcal{V} = \left\{1,v\right\}$, $\mathcal{C} = \left\{c,0\right\}$ with $1>c>v>0$ and $\Delta v = 1-v = c = \Delta c$; assume a uniform prior $f=g=\left(\frac{1}{2}.\frac{1}{2}\right)$, and symmetric Markov evolution: $f(v\vert v) = g(c\vert c) = \alpha$ for $\frac{1}{2}<\alpha<1$. It is easy to see that for this parametrization, $U^{*,B} = U^{*,S}$, that is the expected utility vectors are exactly equal. Moreover, $\Pi^{*}(v_{H},c_{H}) = \Pi^{*}(v_{L},c_{L})$.

\subsection{Proof of Proposition  \ref{result intermediate mechanism}}

This follows closely the proof of Theorem \ref{t_main}, so I simply lay out the analogous steps. Start first my stating tight mechanisms analogous to Section \ref{sec_tight}, wherein local incentive constraints bind at the optimum for both the buyer and seller. Then a version of Lemma \ref{l_pe}, that is the payoff equivalence result can be stated for the intermediate mechanism. Finally, the min-max dynamic VCG mechanism described in Section \ref{sec_hi} delivers the precise characterization of the efficient allocation.

\bibliographystyle{abbrvnat}
\bibliography{rb_inter_eff_08February2022}

\begin{thebibliography}{37}
\providecommand{\natexlab}[1]{#1}
\providecommand{\url}[1]{\texttt{#1}}
\expandafter\ifx\csname urlstyle\endcsname\relax
  \providecommand{\doi}[1]{doi: #1}\else
  \providecommand{\doi}{doi: \begingroup \urlstyle{rm}\Url}\fi

\bibitem[Allen and Santomero(2001)]{allen2001finint}
F.~Allen and A.~M. Santomero.
\newblock What do financial intermediaries do?
\newblock \emph{Journal of Banking \& Finance}, 25\penalty0 (2):\penalty0
  271--294, 2001.

\bibitem[Antill and Duffie(2021)]{ad2021}
S.~Antill and D.~Duffie.
\newblock Augmenting markets with mechanisms.
\newblock \emph{Review of Economic Studies}, 88\penalty0 (4):\penalty0
  1665--1719, 2021.

\bibitem[Athey and Miller(2007)]{athey2007efficiency}
S.~Athey and D.~A. Miller.
\newblock Efficiency in repeated trade with hidden valuations.
\newblock \emph{Theoretical Economics}, 2\penalty0 (3):\penalty0 299--354,
  2007.

\bibitem[Athey and Segal(2007)]{athey2007effbt}
S.~Athey and I.~Segal.
\newblock Designing efficient mechanisms for dynamic bilateral trading games.
\newblock \emph{American Economic Review P\&P}, 97\penalty0 (2):\penalty0
  131--136, 2007.

\bibitem[Athey and Segal(2013)]{athey2013efficient}
S.~Athey and I.~Segal.
\newblock An efficient dynamic mechanism.
\newblock \emph{Econometrica}, 81\penalty0 (6):\penalty0 2463--2485, 2013.

\bibitem[Battaglini(2005)]{battaglini2005long}
M.~Battaglini.
\newblock Long-term contracting with markovian consumers.
\newblock \emph{American Economic Review}, 95\penalty0 (3):\penalty0 637--658,
  2005.

\bibitem[Battaglini and Lamba(2019)]{battaglini2015optimal}
M.~Battaglini and R.~Lamba.
\newblock Optimal dynamic contracting: the first-order approach and beyond.
\newblock \emph{Theoretical Economics}, 14\penalty0 (4):\penalty0 1435--1482,
  2019.

\bibitem[Bergemann and V{\"a}lim{\"a}ki(2010)]{bergemann2010dynamic}
D.~Bergemann and J.~V{\"a}lim{\"a}ki.
\newblock The dynamic pivot mechanism.
\newblock \emph{Econometrica}, 78\penalty0 (2):\penalty0 771--789, 2010.

\bibitem[Bergemann and V\"{a}lim\"{a}ki(2019)]{bergemann2018dynamic}
D.~Bergemann and J.~V\"{a}lim\"{a}ki.
\newblock Dynamic mechanism design: an introduction.
\newblock \emph{Journal of Economic Literature}, 57\penalty0 (2):\penalty0
  235--274, 2019.

\bibitem[B\"{o}rgers and Norman(2009)]{bn2009budget}
T.~B\"{o}rgers and P.~Norman.
\newblock A note on budget balance under interim participation constraints: the
  case of independent types.
\newblock \emph{Economic Theory}, 39\penalty0 (3):\penalty0 477--489, 2009.

\bibitem[Chassang and Kapon(2021)]{chassang2012}
S.~Chassang and S.~Kapon.
\newblock Prior-free dynamic allocation under limited liability.
\newblock {\it Theoretical Economics}, forthcoming, 2021.

\bibitem[Chatterjee and Samuelson(1983)]{cs1983bargaining}
K.~Chatterjee and W.~Samuelson.
\newblock Bargaining under incomplete information.
\newblock \emph{Operations Research}, 31\penalty0 (5):\penalty0 835--851, 1983.

\bibitem[Coase(1960)]{coase1960cost}
R.~H. Coase.
\newblock The problem of social cost.
\newblock \emph{Journal of Law and Economics}, 3:\penalty0 1--44, 1960.

\bibitem[Courty and Li(2000)]{courty2000sequential}
P.~Courty and H.~Li.
\newblock Sequential screening.
\newblock \emph{Review of Economic Studies}, 67\penalty0 (4):\penalty0
  697--717, 2000.

\bibitem[Es{\H{o}} and Szentes(2007)]{esHo2007optimal}
P.~Es{\H{o}} and B.~Szentes.
\newblock Optimal information disclosure in auctions and the handicap auction.
\newblock \emph{Review of Economic Studies}, 74\penalty0 (3):\penalty0
  705--731, 2007.

\bibitem[Es{\H{o}} and Szentes(2017)]{szentes2015dynamic}
P.~Es{\H{o}} and B.~Szentes.
\newblock Dynamic contracting: an irrelevance theorem.
\newblock \emph{Theoretical Economics}, 12:\penalty0 109--139, 2017.

\bibitem[Gorton and Winton(2003)]{gorton2002handbook}
G.~Gorton and A.~Winton.
\newblock Financial intermediation.
\newblock In G.~M. Constantinides, M.~Harris, and R.~M. Stulz, editors,
  \emph{Handbook of the Economics of Finance Vol. 1}, chapter~8, pages
  431--552. Elsevier, 2003.

\bibitem[Hagerty and Rogerson(1987)]{hr1987robust}
K.~M. Hagerty and W.~P. Rogerson.
\newblock Robust trading mechanisms.
\newblock \emph{Journal of Economic Theory}, 42\penalty0 (1):\penalty0 94--107,
  1987.

\bibitem[Holmstr\"{o}m and Myerson(1983)]{holmsmyerson1983eff}
B.~Holmstr\"{o}m and R.~B. Myerson.
\newblock Efficient and durable decision rules with incomplete information.
\newblock \emph{Econometrica}, 51\penalty0 (6):\penalty0 1799--1819, 1983.

\bibitem[Kos and Manea(2009)]{maneakos2009discrete}
N.~Kos and M.~Manea.
\newblock Efficient trade mechanism with discrete values.
\newblock Bocconi University and Harvard University, 2009.

\bibitem[Kos and Messner(2013)]{kos2013extreme}
N.~Kos and M.~Messner.
\newblock Extremal incentive compatible transfers.
\newblock \emph{Journal of Economic Theory}, 148\penalty0 (1):\penalty0
  134--164, 2013.

\bibitem[Krasikov and Lamba(2021)]{kl2021}
I.~Krasikov and R.~Lamba.
\newblock A theory of dynamic contracting with financial constraints.
\newblock \emph{Journal of Economic Theory}, 193\penalty0 (105196), 2021.

\bibitem[Krishna and Perry(2000)]{krishna2000eff}
V.~Krishna and M.~Perry.
\newblock Efficient mechanism design.
\newblock Pennsylvania State University and The Hebrew University of Jerusalem,
  2000.

\bibitem[Lamba(2014)]{lamba2014phd}
R.~Lamba.
\newblock Essays in dynamic mechanism design.
\newblock Ph.D. dissertation, Princeton University, 2014.

\bibitem[Makowski and Mezzetti(1994)]{mm1994bayesian}
L.~Makowski and C.~Mezzetti.
\newblock Bayesian and weakly robust first best mechanisms: characterizations.
\newblock \emph{Journal of Economic Theory}, 64\penalty0 (2):\penalty0
  500--519, 1994.

\bibitem[Matsuo(1989)]{matsuo1989twotypes}
T.~Matsuo.
\newblock On incentive compatible, individually rational, and ex post efficient
  mechanisms for bilateral trading.
\newblock \emph{Journal of Economic Theory}, 49\penalty0 (1):\penalty0
  189--194, 1989.

\bibitem[Miller(2012)]{miller2012collusion}
D.~Miller.
\newblock Robust collusion with private information.
\newblock \emph{The Review of Economic Studies}, 79\penalty0 (2):\penalty0
  778--811, 2012.

\bibitem[Myerson(1985)]{myerson1985two}
R.~B. Myerson.
\newblock Analysis of two bargaining problems with incomplete information.
\newblock In A.~Roth, editor, \emph{Game-theoretic models of bargaining},
  chapter~7, pages 115--147. Cambridge University Press, 1985.

\bibitem[Myerson(1986)]{myerson1986multi}
R.~B. Myerson.
\newblock Multistage games with communication.
\newblock \emph{Econometrica}, 54\penalty0 (2):\penalty0 323--358, 1986.

\bibitem[Myerson and Satterthwaite(1983)]{myerson1983efficient}
R.~B. Myerson and M.~A. Satterthwaite.
\newblock Efficient mechanisms for bilateral trading.
\newblock \emph{Journal of Economic Theory}, 29\penalty0 (2):\penalty0
  265--281, 1983.

\bibitem[Pavan(2016)]{pavan2016dynamic}
A.~Pavan.
\newblock Dynamic mechanism design: Robustness and endogenous types.
\newblock In \emph{2015 World Congress of the Econometric Society}, 2016.

\bibitem[Pavan et~al.(2014)Pavan, Segal, and Toikka]{pavan2014dynamic}
A.~Pavan, I.~Segal, and J.~Toikka.
\newblock Dynamic mechanism design: A {My}ersonian approach.
\newblock \emph{Econometrica}, 82\penalty0 (2):\penalty0 601--653, 2014.

\bibitem[Skrzypacz and Toikka(2015)]{skrzypacz2015mechanisms}
A.~Skrzypacz and J.~Toikka.
\newblock Mechanisms for repeated trade.
\newblock \emph{American Economic Journal: Microeconomics}, 7\penalty0
  (4):\penalty0 252--293, 2015.

\bibitem[Stole(2001)]{stole2001lectures}
L.~Stole.
\newblock Lectures on the theory of contracts and organizations.
\newblock University of Chicago, 2001.

\bibitem[Vohra(2012)]{vohra2012dynamic}
R.~V. Vohra.
\newblock Dynamic mechanism design.
\newblock \emph{Surveys in Operations Research and Management Science},
  17\penalty0 (1):\penalty0 60--68, 2012.

\bibitem[Williams(1999)]{williams1999eff}
S.~R. Williams.
\newblock A characterization of efficient, bayesian incentive compatible
  mechanisms.
\newblock \emph{Economic Theory}, 14\penalty0 (1):\penalty0 155--180, 1999.

\bibitem[Yoon(2015)]{yoon2015bb}
K.~Yoon.
\newblock On budget balance of the dynamic pivot mechanism.
\newblock \emph{Games and Economic Behavior}, 94:\penalty0 206--213, 2015.

\end{thebibliography}

\end{document}